\newcommand{\C}{\mathbb{C}}
\newcommand{\gln}{GL(n,\C)}
\newcommand{\gls}{GL(s+1,\C)}
\newtheorem{teo}{Theorem}
\newtheorem{cor}[teo]{Corollary}
\newtheorem{prop}[teo]{Proposition}
\newtheorem{defn}[teo]{Definition}
\newtheorem{Exemp}[teo]{Example}
\newtheorem{ob}[teo]{Remark}
\numberwithin{equation}{section}
\newcommand{\flip}{\operatorname{flip}}
\newcommand{\Flip}{\operatorname{Flip}}
\newcommand{\Oflip}{\operatorname{OFlip}}
\newcommand{\trace}{\operatorname{trace}}
\newcommand{\vrow}{\left[v_1 ; v_2 ; \cdots; v_n\right]}
\newcommand{\urow}{\left[u_1 ; u_2 ; \cdots; u_n\right]}
\begin{document}

\title{Polynomial Invariant Theory and Shape Enumerator of Self-Dual Codes in the NRT-Metric}
\author{Welington Santos \thanks{W. Santos is with the Programa de Pós-Graduação em Matemática, Universidade Federal do Paraná, Caixa Postal 19081, 81531-990, Curitiba-PR Brazil. His study was financed in part by the Coordenação de Aperfeiçoamento de Pessoal de Nível Superior- Brasil (CAPES)-Finance Code 001. Email: wsantos.math@gmail.com}~ Marcelo Muniz Silva Alves \thanks{Marcelo Muniz S. Alves is with the Departamento de Matemática, Universidade Federal do Paraná, Caixa Postal 19081, 81531-990, Curitiba-PR Brazil. This work was partially supported by CNPq-Conselho Nacional de Desenvolvimento Cient\'ifico e Tecnol\'ogico, research project number 306583/2016-0. Email: marcelomsa@ufpr.br}} 
\date{}

\maketitle
\begin{abstract}
\noindent In this paper we consider self-dual NRT-codes, that is, self-dual codes in the metric space endowed with the Niederreiter-Rosenbloom-Tsfasman (NRT-metric). 
We use polynomial invariant theory to describe the shape enumerator \cite{BargPark} of a binary self-dual, doubly even self-dual, and doubly-doubly even self dual NRT-code $C\subseteq M_{n,2}(\mathbb{F}_{2})$. 
Motivated by these results we describe the number of invariant polinomials that we must find to describe the shape enumerator of a self-dual NRT-code of $M_{n,s}(\mathbb{F}_{2})$. 
We define the ordered flip of a matrix $A\in M_{k,ns}(\mathbb{F}_{q})$ and present some constructions of self-dual NRT-codes over $\mathbb{F}_{q}$. We further give an application of ordered flip to the classification of bidimensional self-dual NRT-codes. 
\end{abstract}

\section{Introduction}

In classical coding theory, an $[n,k]$-linear code $C$ is a subspace of the metric space $\mathbb{F}_{q}^{s}$, endowed with the Hamming metric, which is  defined by $d_{H}(u,v):=|\lbrace i\ \text{such that}\ u_{i}\neq v_{i}\rbrace|$ for $u,v\in C\subseteq\mathbb{F}_{q}^{s}$. An important tool in this setting is the dual code $C^{\perp}$ of a linear code $C\subseteq\mathbb{F}_{q}^{s}$ with respect to usual (euclidean) inner product on $\mathbb{F}_{q}^{s}$.

One of the most important theorems in coding theory is MacWilliams' Theorem (1962), which is
known as the ``MacWilliams Identities'', which relates the weight enumerator of a linear code and its dual code. A remarkable theorem that is due to Gleason \cite{Glesonorig} shows that the weight enumerator of a binary doubly-even self-dual code is a polynomial in others two polynomials, to name, the weight enumerator of the Hamming code of length 8 and the Golay code of length 24.
During the past few years Sloane et al. presented some generalizations of Gleason's Theorem to other families of codes \cite{Rains,slone,NRS,SloaneNebe}. A technique that can be used to derive those generalizations is the polynomial invariant theory, and in special the well-known Molien's series of a finite group of matrices \cite{Tmolien}.

Coding theory has also been developed with respect to alternative metrics. 
One of the most studied of those metrics is the NRT-metric, which was introduced to study array codes that are subspaces of the linear space of all $n\times s$ matrices $M_{n,s}(\mathbb{F}_{q})$ with entries from a finite field. The  NRT-metric space was introduced by Rosenbloom and Tsfasman in \cite{NRT} by considering a generalization of Reed-Solomon codes on the space of all $n\times s$ matrices $M_{n,s}(\mathbb{F}_{q})$; in this same paper the authors pointed out that this metric models transmission over a set of parallel channels subject to fading. Since then several coding-theoretic questions with respect to this metric have been investigated, 
such as MacWilliams Identities \cite{DS} and MDS codes \cite{DS,4,Skriganov2001uniform-distributions}.  
Independently, Niederreiter \cite{H.Niederreiter} worked with a maximization problem in finite vector spaces which turned out to be equivalent to coding theory problems in NRT spaces, as was shown by Brualdi, Graves and Lawrence \cite{brualdi}.

Dougherty and Skriganov showed in \cite{DS} that the weight enumerators of mutually NRT-dual codes may not be related by any sort of MacWilliams Identities, in general, since there are examples of nonequivalent codes which have the same weight enumerators but whose duals have distinct weight enumerators. However, in the same paper they considered orbits of linear groups preserving the NRT-weight and showed that the weight enumerator associated with such orbits (called $H$-enumerator) satisfies a MacWilliams type identity for mutually NRT-dual codes.

Recently, Barg et al. \cite{BargAndPurka,Barg,BargAndFirer,BargPark} introduced the definition of shapes of codewords and a shape enumerator for NRT-codes. The shape enumerator coincides with the $H$-enumerator. Park and Barg in \cite{BargPark} obtained the same MacWilliams Identity of \cite{DS} with a new proof based on a multivariate Tutte polynomial of a NRT-code and centered in the concept of shape of a codeword.

In this contribution, we consider binary self-dual codes of $M_{n,s}(\mathbb{F}_{2})$, and we use polynomial invariant theory and the MacWilliams Identity of \cite{DS, BargPark} to describe the shape enumerator of those codes. In particular, for binary self-dual codes of $M_{n,2}(\mathbb{F}_{2})$ we completely describe their shape enumerator; the same is done for binary doubly-even self-dual codes, and for binary doubly-doubly-even self dual codes of $M_{n,2}(\mathbb{F}_{2})$. We describe the number of invariant polinomials that we must find to describe the shape enumerator of a self-dual NRT-code of $M_{n,2}(\mathbb{F}_{2})$. We define the concept of ordered flip of a matrix $A\in M_{k,ns}(\mathbb{F}_{q})$ and present some constructions of self-dual codes of $M_{n,s}(\mathbb{F}_{q})$ extending results for $M_{1,s}(\mathbb{F}_{q})$ of \cite{VMRS}. Finally, we present an application of the ordered flip to the classification of bidimensional self-dual codes.

This work is strutured as follows. In Sect. 2, we recall NRT-codes, the shape enumerator and the MacWilliams Identity of \cite{DS, BargPark}. In Sect. 3, we recall polynomial invariant theory and Molien's Theorem \cite{slone b, Hochater, ST2}. In Sect. 4, we describe the shape enumerator of binary self-dual, doubly even self-dual, and doubly-doubly even self dual NRT-codes of $M_{n,2}(\mathbb{F}_{2})$. In Sect. 5, we describe the Molien's series of the invariant group of binary self-dual NRT-codes of $M_{n,s}(\mathbb{F}_{q})$. In Sect. 6, we give some constructions of self-dual NRT-codes of $M_{n,s}(\mathbb{F}_{q})$ and an application of the flip concept.

\section{Codes and Niederreiter-Rosenbloom-Tsfasman Metric}
Let  $M_{n,s}(\mathbb{F}_{q})$ be the $\mathbb{F}_{q}$-vector space of $n\times s$ matrices with entries in $\mathbb{F}_{q}$. Given an $n \times s$ matrix $\mathbf{v}$, 
\[\mathbf{v}=\left[\begin{array}{cccc}
v_{11}&v_{12}&\cdots&v_{1s}\\
v_{21}&v_{22}&\cdots&v_{2s}\\
\vdots&\vdots&\ddots&\vdots\\
v_{n1}&v_{n2}&\cdots&v_{ns}
\end{array}\right] , 
\]
its $i$-th row will be denoted by $v_i$  and we will write 
$\mathbf{v}=\vrow$. Via this notation the Niederreiter-Rosenbloom-Tsfasman weight or, for short, the NRT-weigth of $\mathbf{v}$
is defined by the following formula: 
$$\rho(\mathbf{v}):=\sum_{i=1}^{n}\rho(v_{i})$$
where $\rho(v_{i}):=\max\left\lbrace   0 \leq  j \leq s   \  ;  v_{ij}\neq 0\right\rbrace$.

The canonical metric associated to the NRT- weight, $d_{\rho} (\mathbf{u}, \mathbf{v}) = \rho (\mathbf{u} -  \mathbf{v})$, is called the NRT-metric.

Now let $\mathbf{v}=\vrow \ \text{and}\ \mathbf{u}=\urow$ be two elements of $M_{n,s}(\mathbb{F}_{q})$. We define the inner product $\langle,\rangle_{N}$ on the space $M_{n,s}(\mathbb{F}_{q})$ endowed with the NRT-metric by
$$\langle\mathbf{v},\mathbf{u}\rangle_{N} =\langle\mathbf{u},\mathbf{v}\rangle_{N}:=\sum_{i=1}^{n}\langle v_{i},u_{i}\rangle_{N}$$
\noindent with 
$$\langle v_i, u_{i}\rangle_{N} =\langle u_{i}, v_{i}\rangle_{N}:=v_{i1}u_{is}+v_{i2}u_{is-1}+\ldots+v_{s-1 2}u_{i2}+v_{is}u_{i1}=\sum_{j=1}^{s} v_{ij}u_{i s+1-j}.$$
%
%

The space $\left(M_{ns}(\mathbb{F}_{q}),\rho\right)$ is called NRT-space and an $[ns,k]$-linear NRT-code is a linear subspace $C\subseteq\left(M_{ns}(\mathbb{F}_{q}),\rho\right)$ of dimension $k$.

\begin{defn}
The dual code of an $k$-dimensional linear NRT-code $C\subseteq M_{n,s}(\mathbb{F}_{q})$ is defined to be the $k^{\perp}$-dimensional linear NRT-code $C^{\perp}\subseteq M_{n,s}(\mathbb{F}_{q})$ given by
$$C^{\perp}:=\lbrace\mathbf{u}\in M_{n,s}(\mathbb{F}_{q}):\langle\mathbf{u},\mathbf{v}\rangle_{N}=0\ \text{for all}\ \mathbf{v}\in C\rbrace .$$
\end{defn}

\noindent A NRT-code $C$ is said to be a self-orthogonal NRT-code if $C\subseteq C^{\perp}$ and self dual NRT-code if $C=C^{\perp}$. Moreover, the dimensions of the codes $C$ and $C^{\perp}$ are related by the following equation: if $k = \dim(C)$ and $k^\perp = \dim(C^\perp)$ then 
\begin{equation}
k + k^\perp = ns.
\end{equation}

\subsection{The Geometry of NRT-metric}

Two NRT linear codes $C$ and $C'$ in $M_{n,s}(\mathbb{F}_{q})$ are equivalent if there is a linear isometry $\phi$ of $M_{n,s}(\mathbb{F}_{q})$
such that $\phi(C) = C'$. 
The group $GL\left(M_{n,s}(\mathbb{F}_{q})\right)$ of linear isometries of an NRT space $M_{n,s}(\mathbb{F}_{q})$ was described in  \cite{Lee} (also independently in \cite{DS}) and is isomorphic to the semidirect product   of $\left(T_{s}\right)^{n}$ and $S_{n}$,  where $T_{s}$ is the group of all upper triangular matrices of $M_{s,s}(\mathbb{F}_{q})$ with non-zero diagonal elements, $\left(T_{s}\right)^{n}$ denotes the direct product of $n$ copies of $T_{s}$, and $S_{n}$ is the symmetric group of order $n$.
An element of $S_n$ acts on  $\mathbf{v} = \vrow$  by permuting rows, and an element $(M_1,M_2, \ldots, M_n)$ of $T_s^n$ sends $\mathbf{v} = \vrow$ to $ [v_1 M_1^T; v_2 M_n^T; \cdots; v_nM_n^T]$. 

It is clear that if $\mathbf{v}=\vrow$ and $ \mathbf{u}=\urow$ 
are two elements of $M_{n,s}(\mathbb{F}_{q})$ that lie in the same $GL\left(M_{n,s}(\mathbb{F}_{q})\right)$-orbit then these matrices have the same NRT-weight. The converse will not hold in general, since $GL(M_{n,s}(\mathbb{F}_{q}))$ is not transitive on spheres.
In order to parametrize the $GL(M_{n,s}(\mathbb{F}_{q}))$-orbits, which were already studied in \cite{DS},
Barg and Purkayastha \cite{BargAndPurka} define a new parameter of the matrix $\mathbf{v}=\vrow \in M_{n,s}(\mathbb{F}_{q})$ which is called the \text{shape} of $\mathbf{v}$.

\begin{defn}
Let $\mathbf{v}\in M_{n,s}(\mathbb{F}_{q})$ be a matrix written as $\mathbf{v}=\vrow$, where $v_{i}=(v_{i1},\ldots,v_{is})$ for $i=1,\ldots,n$. The shape of $\mathbf{v}$ with respect to the NRT-weight is an $s$-vector $e=(e_{1},\ldots,e_{s})$, where
\[e_{j}=|\lbrace i\ \text{such that}\ 1\leqslant i\leqslant n\ \text{and}\ \rho(v_{i})=j\rbrace|. \]
Also define $e_{0}:=n-|e|$, where $|e|:=\sum_{j=1}^{s}e_{j}$.
\end{defn}
The action of $GL(M_{n,s}(\mathbb{F}_{q}))$ on matrices of a fixed shape is transitive, and then the shape is an invariant for this action. 
The NRT- weight can be also defined in terms of shapes: if $e=(e_{1},\ldots,e_{s})$ is the shape of the matrix $\mathbf{v}=\vrow \in M_{n,s}(\mathbb{F}_{q})$ then

$$\rho(\mathbf{v}):=\sum_{j=1}^{s}je_{j}.$$

\subsection{The shape enumerator and a MacWilliams Identity}
The NRT weight is a special case of \emph{poset weight} as introduced by Brualdi, Graves and Lawrence in
 \cite{brualdi}. 
 There is a notion of dual code for every poset code but only in rare cases does the weight enumerator of the code determine the enumerator of its dual; precisely, this is the case if and only if the poset is hierarchical 
 \cite{Kim-Oh}, and an NRT weight is associated to a hierarchical poset only when $n=1$ or $s=1$ (which corresponds to the Hamming weight). Nevertheless, analogues of those identities do hold when one considers
other kinds of enumerator polynomials.
In \cite{DS}, Dougherty and Skriganov defined a generalized weight enumerator for a NRT-linear code, the $H$-enumerator, which counts the number of codewords in each  
$GL(M_{n,s}(\mathbb{F}_{q}))$-orbit.
In the same paper it was shown that the $H$-enumerator of mutually dual codes satisfies a MacWilliams-type identity, which we will state next in the version presented in \cite{BargPark}, using the concept of shape  vector. 

A shape vector $e$ defines a partition of $ n$ into a sum of $s+1$ parts. We will denote by $\Delta_{s,n}:=\lbrace e\in\mathbb{N}^{s}:\ e_0 + e_{1}+e_{2}+\ldots+e_{s}\leqslant n\rbrace$ such partitions (recall that $e_0 = n - |e|$). 
In the language of shapes, the description of $GL(M_{n,s}(\mathbb{F}_{q}))$-orbits is as follows

\begin{prop}\textit{(\cite{DS}, Proposition 2.2 (ii))} 
The $GL(M_{n,s}(\mathbb{F}_{q}))$-orbits of a nonzero matrix $\mathbf{u} \in M_{n,s}(\mathbb{F}_{q})$ is the set of all vectors $\mathbf{v}\in M_{n,s}(\mathbb{F}_{q})$ which have the same shape as $\mathbf{u}$.
\end{prop}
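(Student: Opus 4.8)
\quad The plan is to prove the two inclusions separately, using the decomposition $GL(M_{n,s}(\mathbb{F}_{q}))\cong (T_{s})^{n}\rtimes S_{n}$ recalled above, in which the $S_{n}$-part permutes the rows of $\mathbf{v}$ and the $(T_{s})^{n}$-part acts independently on each row by $v_{i}\mapsto v_{i}M_{i}^{T}$. The direction ``same orbit $\Rightarrow$ same shape'' is the invariance already asserted informally just before the statement; I would argue it as follows. Permuting the rows of $\mathbf{v}$ leaves the multiset $\{\rho(v_{1}),\dots,\rho(v_{n})\}$ unchanged, hence leaves the shape unchanged. And if $M\in T_{s}$ and $\rho(v_{i})=r$, then $v_{ij}=0$ for $j>r$ while $v_{ir}\neq 0$; since $M$ is upper triangular, $(v_{i}M^{T})_{k}=\sum_{j\geqslant k}v_{ij}M_{kj}$, which vanishes for $k>r$ and equals $v_{ir}M_{rr}\neq 0$ for $k=r$ because $M_{rr}\neq 0$. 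Hence $\rho(v_{i}M^{T})=\rho(v_{i})$ for every $i$, so the shape is preserved.

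For the converse, assume $\mathbf{v}$ and $\mathbf{u}$ have the same shape $e$. Using the $S_{n}$-part I would first reorder the rows of each matrix so that in both cases the rows appear in non-increasing order of NRT-weight; since the two matrices have the same shape, after this reordering the $i$-th rows satisfy $\rho(v_{i})=\rho(u_{i})=:r_{i}$ for every $i$. It then suffices to exhibit, for each $i$, a matrix $M_{i}\in T_{s}$ with $v_{i}M_{i}^{T}=u_{i}$: combining $(M_{1},\dots,M_{n})\in(T_{s})^{n}$ with the two reordering permutations yields an element of $GL(M_{n,s}(\mathbb{F}_{q}))$ carrying $\mathbf{v}$ to $\mathbf{u}$.

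This reduces the problem to the case $n=1$: given $v,u\in\mathbb{F}_{q}^{s}$ with $\rho(v)=\rho(u)=r$, find $M\in T_{s}$ with $vM^{T}=u$. If $r=0$ then $v=u=0$ and $M=I$ works; if $r\geqslant 1$ then $v_{r}\neq 0\neq u_{r}$, and I would take the explicit matrix $M$ with $M_{kk}=1$ for $k\neq r$, $M_{rr}=u_{r}v_{r}^{-1}$, $M_{kr}=(u_{k}-v_{k})v_{r}^{-1}$ for $k<r$, and all remaining entries $0$. Its only off-diagonal entries sit in positions $(k,r)$ with $k<r$, so $M$ is upper triangular, and its diagonal is nonzero, so $M\in T_{s}$; a one-line computation with $(vM^{T})_{k}=\sum_{j\geqslant k}v_{j}M_{kj}$ gives $(vM^{T})_{r}=u_{r}$, $(vM^{T})_{k}=v_{k}+v_{r}M_{kr}=u_{k}$ for $k<r$, and $(vM^{T})_{k}=0=u_{k}$ for $k>r$, that is $vM^{T}=u$. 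I do not expect a serious obstacle: the argument is elementary, the crucial fact being that the leading nonzero coordinate of a vector of NRT-weight $r$ always sits in the fixed position $r$ — this is simultaneously what makes $\rho$ an invariant and what makes the back-substitution above succeed. The only point requiring a little care is the bookkeeping in the semidirect product: cleanly separating the permutation that aligns the rows from the triangular transformations that then act row by row, and verifying that the explicit $M$ really stays in $T_{s}$.
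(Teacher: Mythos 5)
Your proposal is correct. Note, however, that the paper does not prove this proposition at all: it is quoted verbatim from Dougherty--Skriganov (Proposition~2.2(ii) of \cite{DS}), so there is no internal argument to compare yours against. What you supply is a complete, self-contained verification based exactly on the description of the isometry group $GL(M_{n,s}(\mathbb{F}_q))\cong (T_s)^n\rtimes S_n$ that the paper recalls: the forward direction (same orbit implies same shape) is the computation showing that an upper triangular matrix with nonzero diagonal fixes the position of the last nonzero coordinate of each row, and the converse is handled by sorting rows with the $S_n$-part and then exhibiting, for each pair of rows of common weight $r$, the explicit $M\in T_s$ with $vM^{T}=u$. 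I checked your matrix: its only off-diagonal entries lie in column $r$ above the diagonal, its diagonal is invertible, and the three cases $k>r$, $k=r$, $k<r$ of the product $(vM^{T})_k=\sum_{j\geqslant k}v_jM_{kj}$ come out as claimed, so the row-by-row transitivity on vectors of fixed weight holds; combining with the two sorting permutations inside the semidirect product is routine, as you say. The only cosmetic remarks: the hypothesis that $\mathbf{u}$ is nonzero is not actually needed in your argument (the zero matrix is alone in its shape class and in its orbit), and your reading of the $(T_s)^n$-action silently corrects the paper's typo ``$v_2M_n^T$'' to $v_2M_2^T$, which is the intended action.
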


\begin{defn}
Let $C\subseteq M_{n,s}(\mathbb{F}_{q})$ be a NRT-linear code. The shape enumerator of $C$ is the polynomial of $\C[z_0,z_1, \ldots, z_s]$ defined by
\begin{equation}\label{shapeenum}
H_{C}(z_0,z_1, \ldots, z_s)=\sum_{e\in\Delta_{s,n}}\mathcal{A}_{e}z_{0}^{e_0}z_{1}^{e_2}\ldots z_{s}^{e_s},
\end{equation}
where $\mathcal{A}_{e}=|\lbrace \mathbf{v} \in C\ \text{such that}\ \operatorname{shape}(\mathbf{v})=e\rbrace|$.
\color{black}
\end{defn}

The shape enumerator of a NRT-code $C$ is a homogeneous polynomial $H_{C}(Z)$ with $s+1$ variables which coincides with the $H$-enumerator introduced in \cite{DS}. In order to state the MacWilliams identity shown in $\cite{DS}$ we need to introduce new notation. 

The group $\gls$ of invertible $(s+1) \times (s+1)$ matrices acts on the ring 
$\C[z_0,z_1, \ldots, z_s]$ by 
\begin{equation}\label{action}
A \cdot f(z_0, z_1, \ldots, z_s) = f \left( \sum_{j=0}^{s} a_{0,j}z_j, \ldots, \sum_{j=0}^{s} a_{s,j}z_j\right).
\end{equation}
We may describe the action in a more concise manner. 
Consider the ``vector of variables'' $Z = (z_0,z_1, \ldots, z_s)^t$. The previous equation may be rewritten as 
\[
A \cdot f (Z) = f (A Z).
\]
Using this notation, the next result presents the MacWilliams identity for the shape enumerator. 
\color{black}

\begin{teo}\label{MacWilliamsH}\textit{\cite{BargPark}} The shape enumerator of mutually dual NRT-linear codes $C$ and $C^{\perp}\subseteq M_{n,s}(\mathbb{F}_{q})$ are related by
$$H_{C^{\perp}}(Z)=\frac{1}{|C|}H_{C}(\Theta_{s}Z),$$
where $\Theta_{s}= (\theta_{lk} ) \in M_{s+1,s+1} (\mathbb{F}_q)$ , $0\leqslant l, k\leqslant s$, has the following entries 
\[\theta_{lk}=\left\lbrace\begin{array}{ccl}
1& \textrm{ if } & l=0,\\
q^{l-1}(q-1)& \textrm{ if }& 0<l\leqslant s-k,\\
-q^{l-1}& \textrm{ if }& l+k=s+1,\\
0&\textrm{ if }& l+k>s+1.
\end{array}\right..\]
\end{teo}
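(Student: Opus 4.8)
\emph{Proof proposal.}

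The plan is to deduce Theorem~\ref{MacWilliamsH} from the finite-group Poisson summation formula (the abstract MacWilliams identity), applied to the ``row-multiplicative'' enumerator attached to the shape. First I would rewrite the shape enumerator in additive form: directly from the definitions of $\mathcal A_e$ and of $\operatorname{shape}(\mathbf v)$, a matrix $\mathbf v$ of shape $e$ contributes $\prod_{j=0}^{s}z_j^{e_j}=\prod_{i=1}^{n}z_{\rho(v_i)}$, so that
\[
H_C(Z)=\sum_{\mathbf v\in C}\,\prod_{i=1}^{n} z_{\rho(v_i)}=\sum_{\mathbf v\in C} f(\mathbf v),\qquad f(\mathbf v):=\prod_{i=1}^{n}z_{\rho(v_i)}.
\]
The crucial structural features are that $f\colon M_{n,s}(\mathbb{F}_{q})\to\C[z_0,\dots,z_s]$ is multiplicative across rows and that $\langle\mathbf u,\mathbf v\rangle_{N}=\sum_{i=1}^{n}\langle u_i,v_i\rangle_{N}$ is additive across rows.

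Next I would invoke Poisson summation. Fix a nontrivial additive character $\chi$ of $\mathbb{F}_{q}$. Since $\langle\cdot,\cdot\rangle_{N}$ is a nondegenerate symmetric bilinear form on $M_{n,s}(\mathbb{F}_{q})$, the orthogonality relation $\sum_{\mathbf v\in C}\chi(\langle \mathbf u,\mathbf v\rangle_{N})=|C|$ for $\mathbf u\in C^{\perp}$ and $=0$ otherwise yields, for any $\C$-vector-space-valued function $f$,
\[
\sum_{\mathbf u\in C^{\perp}} f(\mathbf u)=\frac{1}{|C|}\sum_{\mathbf v\in C}\widehat f(\mathbf v),\qquad \widehat f(\mathbf v):=\sum_{\mathbf u\in M_{n,s}(\mathbb{F}_{q})}\chi\bigl(\langle\mathbf u,\mathbf v\rangle_{N}\bigr)\,f(\mathbf u),
\]
simply by interchanging the two summations. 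With the $f$ above the left-hand side is exactly $H_{C^{\perp}}(Z)$, so the task reduces to identifying $\sum_{\mathbf v\in C}\widehat f(\mathbf v)$ with $H_C(\Theta_s Z)$.

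By row-multiplicativity of $f$ and row-additivity of the form, the Fourier transform factors: $\widehat f(\mathbf v)=\prod_{i=1}^{n}g(v_i)$, where $g(v):=\sum_{u\in\mathbb{F}_{q}^{s}}\chi(\langle u,v\rangle_{N})\,z_{\rho(u)}$ for $v\in\mathbb{F}_{q}^{s}$. The heart of the argument is then the single-row computation of $g$, which I would carry out by grouping $u$ according to $k=\rho(u)$: for such $u$ one has $u_k\neq0$, $u_{k+1}=\dots=u_s=0$, the coordinates $u_1,\dots,u_{k-1}$ free, and $\langle u,v\rangle_{N}=\sum_{j=1}^{k}u_j\,v_{s+1-j}$. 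Evaluating the resulting product of character sums via $\sum_{a\in\mathbb{F}_{q}}\chi(ab)=q$ or $0$ according as $b=0$ or not, and $\sum_{a\in\mathbb{F}_{q}^{*}}\chi(ab)=q-1$ or $-1$ likewise, and reading off from $\rho(v)=l$ which of the entries $v_{s},v_{s-1},\dots$ are forced to vanish, one obtains: the coefficient of $z_k$ in $g(v)$ is $1$ if $k=0$; is $q^{k-1}(q-1)$ if $k>0$ and $k+l\le s$; is $-q^{k-1}$ if $k+l=s+1$; and is $0$ if $k+l>s+1$. In particular $g(v)$ depends only on $l=\rho(v)$, and comparison with the displayed entries $\theta_{lk}$ identifies $g(v)$, up to the transpose bookkeeping mentioned below, with the $\rho(v)$-th coordinate of $\Theta_s Z$. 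Substituting back gives $\widehat f(\mathbf v)=\prod_{i}(\Theta_s Z)_{\rho(v_i)}$, whence $\sum_{\mathbf v\in C}\widehat f(\mathbf v)=H_C(\Theta_s Z)$, and the theorem follows.

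I expect the main obstacle to be precisely this single-row character-sum evaluation together with the index bookkeeping needed to match its output to the four-case formula for $\theta_{lk}$: one must pin down exactly which coordinates $v_{s+1-j}$ vanish given $\rho(v)=l$, isolate the boundary case $l+k=s+1$ that produces the lone minus sign, and treat the degenerate cases $l=0$ and $k=0$ separately. One should also be careful with the orientation convention relating the action $A\cdot f(Z)=f(AZ)$ to the matrix $\Theta_s$, since the Fourier computation naturally outputs the entries indexed column-wise; reconciling this with the displayed statement is a routine but necessary check.
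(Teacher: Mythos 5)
Your proposal is sound, but note first that the paper itself gives no proof of this statement: Theorem~\ref{MacWilliamsH} is quoted from \cite{BargPark} (the identity originates with the $H$-enumerator of \cite{DS}), and the cited proof of Barg--Park proceeds via a multivariate Tutte polynomial of the code, not via characters. Your route is the classical Poisson-summation/character-sum argument in the style of Dougherty--Skriganov: rewrite $H_C(Z)=\sum_{\mathbf v\in C}\prod_i z_{\rho(v_i)}$, use row-multiplicativity of this weight together with row-additivity of $\langle\cdot,\cdot\rangle_N$ to factor the Fourier transform, and reduce everything to the single-row sum $g(v)=\sum_{u}\chi(\langle u,v\rangle_N)z_{\rho(u)}$. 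Your case analysis for that sum is correct: with $\rho(v)=l$, the coefficient of $z_k$ is $1$ for $k=0$, $(q-1)q^{k-1}$ for $0<k\le s-l$ (all relevant coordinates of $v$ vanish), $-q^{k-1}$ for $k+l=s+1$ (the single factor $\sum_{a\neq 0}\chi(av_l)=-1$), and $0$ for $k+l>s+1$ (the full sum over the coordinate of $u$ paired with $v_l\neq 0$ kills the product, independently of the unconstrained lower coordinates of $v$); in particular $g(v)$ depends only on $\rho(v)$, which is the crux. The ``transpose bookkeeping'' you flag is real and resolves in your favor: the matrix your computation produces (first index $=\rho(v)$, second index $=$ variable index) agrees with the $\Theta_2$ displayed in Section 4.1 and with Theorem~\ref{propT}, whereas the entries as literally printed in Theorem~\ref{MacWilliamsH} have $l$ and $k$ in swapped roles, so under the convention $A\cdot f(Z)=f(AZ)$ the correct substitution matrix is the transpose of the printed one (check $s=1$: it must reduce to $z_0\mapsto z_0+(q-1)z_1$, $z_1\mapsto z_0-z_1$). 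So your argument, once the single-row evaluation is written out, is a complete and more elementary proof than the cited one; what the Tutte-polynomial approach of \cite{BargPark} buys instead is a structural deduction placing this identity in a family of corollaries of one polynomial invariant, which is not needed for the purposes of this paper.
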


\section{Polynomial Invariant Theory}

\subsection{Invariant Homogeneous Polynomial Basis} 

Let $G$ be a finite subgroup of $GL(n,\C)$ and consider its action on the polynomial ring 
$\C[x_1, \ldots, x_n]$ defined as in \eqref{action}.

We recall that $f(x_{1},\ldots ,x_{n})$ is called an invariant of $G$, or a $G$-invariant, if for every $A\in G$ 
\[
A\cdot f(x_{1},\ldots ,x_{n})=f(x_{1},\ldots , x_{n}).
\]

%
%
%
%

Clearly if $f$, $g$ are invariants  of $G$ so are $f+g$ and $fg$, it means that the set of invariants form a ring which will be denoted by $\mathcal{J}(G)$. 

It is well-known that every invariant polynomial can be written as a sum of  homogeneous invariant polynomials.
So given a finite group $G< M_{n,n}(\mathbb{C})$ it is enought to characterize all homogeneous polynomials that are invariant over $G$ to describe the invariant ring $\mathcal{J}(G)$ of $G$.

However, a convenient description of $\mathcal{J}(G)$ is a set of homogeneous invariants $B=\lbrace f_{1},\ldots, f_{l}\rbrace$ such that every polynomial in $\mathcal{J}(G)$ is a polynomial in $f_{1},\ldots, f_{l}$. Then in this case $B=\lbrace f_{1},\ldots ,f_{l}\rbrace$ is called a \textit{polynomial basis} for $\mathcal{J}(G)$. If $l>n$ there will be equations, which are called syzygies, relating $f_{1},\ldots, f_{l}$.

A finite polynomial basis of $\mathcal{J}(G)$ always exists when $G$ is finite, this was shown by  E. Noether's Theorem \cite{slone,ST,ST2}. Find invariant polynomials is fairly easy using the following theorem.
\begin{teo}\label{RICHARd}\textit{(Reynolds operator)} Let $f\in\mathbb{C}[x_{1},\ldots,x_{n}]$ be a polynomial and $G<\gln$ a finite group. Then 
	$$h(x_{1},\ldots ,x_{n})=\sum_{A\in G}A\cdot f(x_{1},\ldots ,x_{n})$$
	is an invariant of $G$, that is, $h\in\mathcal{J}(G)$.
\end{teo}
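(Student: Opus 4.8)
The plan is to exploit two elementary structural facts about the action from \eqref{action}: for each fixed $A\in\gln$ the map $f\mapsto A\cdot f$ is additive (indeed a ring endomorphism of $\C[x_1,\ldots,x_n]$), and the assignment $A\mapsto(f\mapsto A\cdot f)$ is compatible with matrix multiplication. Concretely, writing $X=(x_1,\ldots,x_n)^t$ so that $A\cdot f(X)=f(AX)$, one has for all $A,B\in\gln$
\[
B\cdot(A\cdot f)(X)=(A\cdot f)(BX)=f\bigl(A(BX)\bigr)=f\bigl((AB)X\bigr)=(AB)\cdot f(X),
\]
hence $B\cdot(A\cdot f)=(AB)\cdot f$; and $B\cdot(f+g)=B\cdot f+B\cdot g$ directly from the definition. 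These are the only ingredients needed.

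Then, to show $h=\sum_{A\in G}A\cdot f$ lies in $\mathcal{J}(G)$, I would fix an arbitrary $B\in G$ and compute
\[
B\cdot h=B\cdot\Bigl(\sum_{A\in G}A\cdot f\Bigr)=\sum_{A\in G}B\cdot(A\cdot f)=\sum_{A\in G}(AB)\cdot f,
\]
using additivity of $B\cdot(-)$ in the second equality and the composition rule in the third. Since $G$ is a group, the map $A\mapsto AB$ is a bijection of $G$ onto itself, so the last sum is merely a reindexing of $\sum_{A'\in G}A'\cdot f=h$. Therefore $B\cdot h=h$, and as $B\in G$ was arbitrary, $h$ is $G$-invariant.

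There is essentially no hard step here — this is the classical averaging argument (the Reynolds operator) — and the only point deserving care is confirming that \eqref{action} really is multiplicative in the sense above: depending on whether one writes $A\cdot f(X)=f(AX)$ or $f(A^tX)$ one gets a left or a right action, but in either case left- or right-translation by $B$ permutes the finite group $G$, which is all that the reindexing uses. I would thus spend a single line pinning down the convention and then let the bijection $A\mapsto AB$ finish the proof; finiteness of $G$ is used only to guarantee that the defining sum is a well-defined polynomial.
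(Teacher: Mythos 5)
Your proof is correct: it is the standard averaging argument, and you handle the one subtle point properly, namely that with the convention $A\cdot f(Z)=f(AZ)$ the composition rule reads $B\cdot(A\cdot f)=(AB)\cdot f$ (a right action), which still lets the translation $A\mapsto AB$ reindex the finite sum. The paper itself states this theorem without proof, citing it as a classical fact, so there is no internal argument to compare against; your write-up supplies exactly the expected one.
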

 

\begin{defn} Given a finite group $G<\gln$, a  good polynomial basis for $\mathcal{J}(G)$ consists of 
	homogeneous invariants $f_{1},\ldots ,f_{l}$ $(l\geq n)$ such that $f_{1},\ldots f_{n}$ are algebraically independent and
	\begin{equation*}\label{base1}
	\mathcal{J}(G)=\mathbb{C}[f_{1},\ldots,f_{n}]\quad \textrm{ if } \quad l=n,
	\end{equation*}
	or, if $l>n$,
	\begin{equation*}\label{base2}
	\mathcal{J}(G)=\mathbb{C}[f_{1},\ldots,f_{n}]\oplus
	f_{n+1}\mathbb{C}[f_{1},\ldots,f_{n}]\oplus \cdots\oplus f_{l}\mathbb{C}[f_{1},\ldots,f_{n}].
	\end{equation*}	
\end{defn}

\noindent This means that when $l=n$ any $G$-invariant can be written as a polynomial in $f_{1},\ldots,f_{n}$ and, if $l > n$,  as such a polynomial plus $f_{n+1}$ times another such polynomial and so on. The polynomials $f_{1},\ldots,f_{n}$ are called primary invariants and $f_{n+1},\ldots,f_{l}$ are secondary invariants.


\begin{teo} \textit{(Hochster and Eagon \cite{Hochater})} A good polynomial basis exists for
	the invariants of any finite group of complex $n \times n$ matrices. 
\end{teo}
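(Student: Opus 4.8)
\emph{Proof sketch.} Write $S=\mathbb{C}[x_1,\ldots,x_n]$ and $R=\mathcal{J}(G)\subseteq S$, both graded by total degree. The plan is to exhibit $R$ as a \emph{free} module over a polynomial subring generated by algebraically independent homogeneous invariants; this is exactly the asserted direct-sum decomposition. First I would record the standard preliminaries: by E. Noether's theorem $R$ is a finitely generated $\mathbb{C}$-algebra, and $S$ is integral over $R$, because for each $i$ the monic polynomial $\prod_{A\in G}\bigl(T-A\cdot x_i\bigr)\in R[T]$ vanishes at $x_i$ — its coefficients are symmetric functions of the finite $G$-set $\{A\cdot x_i : A\in G\}$, hence lie in $R$. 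Consequently $\dim R=\dim S=n$.

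Next I would produce the primary invariants. Let $R_0\subseteq R$ be the subalgebra generated by all the coefficients of the polynomials $\prod_{A\in G}(T-A\cdot x_i)$, $1\le i\le n$; by the previous remark $S$, and a fortiori $R$, is a finitely generated $R_0$-module, so $\dim R_0=n$. Since $\mathbb{C}$ is infinite, graded Noether normalization yields algebraically independent homogeneous $f_1,\ldots,f_n\in R_0$ over which $R_0$ — hence also $R$ and $S$ — is module-finite; set $A:=\mathbb{C}[f_1,\ldots,f_n]$, a polynomial ring. Because $A/(f_1,\ldots,f_n)A=\mathbb{C}$ and $R,S$ are finite $A$-modules, the quotients $R/(f_1,\ldots,f_n)R$ and $S/(f_1,\ldots,f_n)S$ are finite-dimensional over $\mathbb{C}$, so $f_1,\ldots,f_n$ is a homogeneous system of parameters for both $R$ and $S$.

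The hard part is showing that $R$ is Cohen--Macaulay; this is the Hochster--Eagon input and the main obstacle. Since $\operatorname{char}\mathbb{C}=0$, the normalized Reynolds operator $\pi(f)=\frac{1}{|G|}\sum_{A\in G}A\cdot f$ of Theorem~\ref{RICHARd} is $R$-linear and restricts to the identity on $R$, so $R$ is an $R$-module direct summand of $S$, say $S=R\oplus K$. As $S$ is a polynomial ring it is Cohen--Macaulay, so the system of parameters $f_1,\ldots,f_n$ is a regular sequence on $S$. For each $i$ the splitting descends to $S/(f_1,\ldots,f_i)S=R/(f_1,\ldots,f_i)R\oplus K/(f_1,\ldots,f_i)K$ as modules over $R/(f_1,\ldots,f_i)R$, and multiplication by $f_{i+1}$, being injective on the left-hand side, is injective on the direct summand $R/(f_1,\ldots,f_i)R$. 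Hence $f_1,\ldots,f_n$ is a regular sequence on $R$, and since $\dim R=n$ this says $R$ is Cohen--Macaulay.

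Finally I would conclude. A finitely generated graded module over the graded polynomial ring $A$ which is Cohen--Macaulay of dimension $n=\dim A$ has depth $n$, hence projective dimension $0$ by the graded Auslander--Buchsbaum formula, hence is free over $A$. Choosing a homogeneous $A$-basis of $R$ and using graded Nakayama to arrange that $1$ belongs to it, write the basis as $1,f_{n+1},\ldots,f_l$ (with $l=n$ when the basis is $\{1\}$, so $R=A$). Then $\mathcal{J}(G)=\mathbb{C}[f_1,\ldots,f_n]\oplus f_{n+1}\mathbb{C}[f_1,\ldots,f_n]\oplus\cdots\oplus f_l\mathbb{C}[f_1,\ldots,f_n]$ with $f_1,\ldots,f_n$ algebraically independent, i.e. $f_1,\ldots,f_l$ is a good polynomial basis for $\mathcal{J}(G)$. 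Everything outside the Cohen--Macaulay step is routine graded commutative algebra.
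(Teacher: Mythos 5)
The paper does not prove this statement at all: it is quoted verbatim from the literature with a citation to Hochster--Eagon, so there is no internal proof to compare against. Your sketch is the standard characteristic-zero argument behind that citation (integrality of $\mathbb{C}[x_1,\ldots,x_n]$ over $\mathcal{J}(G)$, graded Noether normalization to get the primary invariants, the Reynolds-operator splitting $S=R\oplus K$ forcing a homogeneous system of parameters to be a regular sequence on $R$, hence Cohen--Macaulayness, and then graded Auslander--Buchsbaum plus graded Nakayama to get the Hironaka decomposition $\mathcal{J}(G)=\bigoplus_j f_j\,\mathbb{C}[f_1,\ldots,f_n]$), and each step as you state it is correct; the only caveat is that it remains a sketch relying on standard graded commutative algebra (Noether normalization, Auslander--Buchsbaum, Nakayama), which is appropriate for a result the paper itself only cites.
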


There are some criteria in literature that help to find out when a set of polynomials is algebraically independent. In this work we use a useful tool that is know as the Jacobian criterion.

\begin{teo}\textit{(Jacobian criterion \cite{Lefs,Richard})}\label{Jacobian} $f_{1},\ldots,f_{l}\in\mathbb{C}[x_{1},\ldots,x_{n}]$ are algebraically independent if only if the Jacobian matrix is full rank. 
In particular, if $m=n$, $f_{1},\ldots,f_{l}$ are algebraically dependent if only if the Jacobian determinant $\det(J(f_{1},\ldots,f_{l}))\neq 0$.
\end{teo}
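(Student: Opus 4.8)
\emph{Plan of proof.} Write $L:=\mathbb{C}(x_{1},\ldots,x_{n})$ for the field of rational functions and set $J:=J(f_{1},\ldots,f_{l})=\bigl(\partial f_{i}/\partial x_{j}\bigr)$, where ``full rank'' is understood as rank $l$ over $L$, i.e.\ as the generic rank of $J$ viewed as a matrix-valued function on $\mathbb{C}^{n}$. The plan is to establish the two implications
\[
f_{1},\ldots,f_{l}\ \text{algebraically dependent over }\mathbb{C}\ \Longrightarrow\ \operatorname{rank}J<l,
\]
and, conversely, $f_{1},\ldots,f_{l}$ algebraically independent $\Longrightarrow\operatorname{rank}J=l$. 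Together these give the stated equivalence, and the ``in particular'' clause is the special case $l=n$, in which ``$\operatorname{rank}J=l$'' means exactly ``$\det J\neq 0$''. I note at the outset that characteristic $0$ will be used in both halves and the criterion genuinely fails otherwise; e.g.\ $f_{1}=x_{1}^{p}$ over $\mathbb{F}_{p}$ is algebraically independent but has $J=(0)$.

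For the first implication I would pick a nonzero $P\in\mathbb{C}[y_{1},\ldots,y_{l}]$ with $P(f_{1},\ldots,f_{l})=0$ of \emph{minimal} total degree. Such a $P$ is non-constant, so some variable $y_{i}$ occurs in it; since $\operatorname{char}\mathbb{C}=0$, the polynomial $\partial P/\partial y_{i}$ is nonzero and of smaller degree, so by minimality $(\partial P/\partial y_{i})(f_{1},\ldots,f_{l})\neq 0$. Differentiating $P(f_{1},\ldots,f_{l})=0$ by $x_{j}$ and applying the chain rule yields
\[
\sum_{i=1}^{l}\Bigl(\tfrac{\partial P}{\partial y_{i}}\Bigr)(f_{1},\ldots,f_{l})\,\frac{\partial f_{i}}{\partial x_{j}}=0\qquad(1\le j\le n),
\]
which exhibits a nonzero vector in $L^{l}$ annihilating the columns of $J$; hence the rows of $J$ are $L$-linearly dependent and $\operatorname{rank}J<l$.

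The converse is where the real work is, and I expect it to be the main obstacle. Assume $f_{1},\ldots,f_{l}$ algebraically independent, so necessarily $l\le n$. First I would invoke the exchange property of algebraic dependence to complete $\{f_{1},\ldots,f_{l}\}$ to a transcendence basis $g_{1}=f_{1},\ldots,g_{l}=f_{l},\,g_{l+1}=x_{j_{1}},\ldots,g_{n}=x_{j_{n-l}}$ of $L$ over $\mathbb{C}$ using coordinate functions. The first $l$ rows of the $n\times n$ matrix $\bigl(\partial g_{i}/\partial x_{j}\bigr)$ are exactly the rows of $J$, so it suffices to show $\det\bigl(\partial g_{i}/\partial x_{j}\bigr)\neq 0$, since then those rows are $L$-linearly independent. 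Let $K:=\mathbb{C}(g_{1},\ldots,g_{n})\subseteq L$. As $g_{1},\ldots,g_{n}$ is a transcendence basis, $L$ is algebraic over $K$ and, being finitely generated over $K$, it is a \emph{finite} extension; in characteristic $0$ this extension is separable, hence $\Omega_{L/K}=0$. The cotangent exact sequence attached to $\mathbb{C}\subseteq K\subseteq L$ then forces the canonical map $\Omega_{K/\mathbb{C}}\otimes_{K}L\to\Omega_{L/\mathbb{C}}$ to be surjective. Both sides are $n$-dimensional over $L$ (each of $K$ and $L$ is purely transcendental of transcendence degree $n$, so $\Omega_{K/\mathbb{C}}$ is free on $dg_{1},\ldots,dg_{n}$ and $\Omega_{L/\mathbb{C}}$ is free on $dx_{1},\ldots,dx_{n}$), so this surjection is an isomorphism; written in the bases $\{dg_{i}\}$, $\{dx_{j}\}$ via $dg_{i}=\sum_{j}(\partial g_{i}/\partial x_{j})\,dx_{j}$ it is the matrix $\bigl(\partial g_{i}/\partial x_{j}\bigr)$, whose determinant is therefore a nonzero element of $L$.

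The only nonformal ingredient in the whole argument is the step ``a finite extension of characteristic-$0$ fields is separable, so $\Omega_{L/K}=0$'', which is exactly what upgrades a surjection between equidimensional spaces to an isomorphism; everything else is the chain rule, the exchange property for transcendence bases, and linear algebra over $L$. If one prefers to bypass K\"ahler differentials, the converse can instead be obtained from generic smoothness in characteristic $0$: the generic rank of $J$ equals the dimension of the Zariski closure of the image of $F=(f_{1},\ldots,f_{l})\colon\mathbb{C}^{n}\to\mathbb{C}^{l}$, and that dimension is $l$ precisely when no nonzero polynomial vanishes identically on the image, i.e.\ precisely when $f_{1},\ldots,f_{l}$ are algebraically independent.
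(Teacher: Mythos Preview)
The paper does not prove this statement at all; it is quoted as a classical result with references to Lefschetz and to Ehrenborg--Rota, and is used only as a black box later on. There is therefore no proof in the paper to compare your attempt against.

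That said, your argument is correct. The forward direction (dependence $\Rightarrow$ rank deficiency) via a minimal-degree relation and the chain rule is the standard one, and your use of characteristic~$0$ to guarantee that some $\partial P/\partial y_i$ is nonzero is exactly the point where the hypothesis enters. For the converse you complete $f_1,\ldots,f_l$ to a transcendence basis of $L=\mathbb{C}(x_1,\ldots,x_n)$ using coordinate functions and then argue via K\"ahler differentials: since $L/K$ is finite separable one has $\Omega_{L/K}=0$, the conormal sequence forces $\Omega_{K/\mathbb{C}}\otimes_K L\to\Omega_{L/\mathbb{C}}$ to be surjective between $n$-dimensional $L$-spaces, hence an isomorphism, and the matrix of this map in the obvious bases is $\bigl(\partial g_i/\partial x_j\bigr)$. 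This is a clean and complete proof; the alternative you sketch via generic smoothness is equally valid. One small remark: the statement as printed in the paper contains evident typos (``$m=n$'' should be ``$l=n$'', and ``dependent'' should be ``independent'' in the second sentence); your proof addresses the intended assertion.
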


\subsection{Molien's Theorem}
A fundamental problem  is to know, or at least estimate, 
how many algebraically independent invariants are required to form a polynomial basis of $\mathcal{J}(G)$.
The next two theorems tell us how many linearly independent homogeneous invariants exist for each degree $t$.

\begin{teo}\textit{(Molien \cite{Tmolien})}\label{invgrau1} Given a finite group $G<
\gln $ the number of linearly independent invariants over $G$ of the first degree is
$$\frac{1}{|G|}\sum_{A\in G}\trace(A).$$
\end{teo}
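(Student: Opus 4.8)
The statement to prove is Molien's theorem for first-degree invariants: that the dimension of the space of linear (degree-one) $G$-invariants in $\mathbb{C}[x_1,\dots,x_n]$ equals $\tfrac{1}{|G|}\sum_{A\in G}\trace(A)$. The plan is to reduce the question entirely to elementary representation theory of the finite group $G$ acting on the $n$-dimensional space $V=\mathbb{C}^n$ spanned by the variables $x_1,\dots,x_n$. The key observation is that the degree-one component of $\mathbb{C}[x_1,\dots,x_n]$ is canonically $V^*$ (equivalently, one may identify it with $V$ via the given matrix action, up to transpose/contragredient, which does not affect traces since $\trace(A)=\trace(A^t)$ and characters of a finite group take values closed under complex conjugation only up to the dual — I will be careful that $A\cdot x_i = \sum_j a_{ij}x_j$ makes the linear forms into a $G$-module isomorphic to $V$ itself). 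So the number of linearly independent linear invariants is exactly $\dim (V)^G$, the dimension of the subspace of $G$-fixed vectors.

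The main tool will be the averaging (Reynolds) operator, which in this excerpt appears as Theorem~\ref{RICHARd}: the map $\pi = \tfrac{1}{|G|}\sum_{A\in G} A$ is a linear endomorphism of $V$. First I would check that $\pi$ is idempotent: using the group-invariance of the sum over $G$ (re-indexing $A\mapsto BA$), one gets $B\pi = \pi$ for all $B\in G$, hence $\pi^2 = \pi$, and moreover the image of $\pi$ is precisely $V^G$ while $\pi$ restricts to the identity on $V^G$. Therefore $\pi$ is a projection onto $V^G$, and the standard fact that the trace of an idempotent equals the dimension of its image gives
$$\dim V^G = \trace(\pi) = \trace\!\left(\frac{1}{|G|}\sum_{A\in G} A\right) = \frac{1}{|G|}\sum_{A\in G}\trace(A).$$
Combining this with the identification of the degree-one invariants with $V^G$ finishes the proof.

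I expect the only real subtlety — and the step I would be most careful about — to be the bookkeeping at the start: making precise that "linearly independent invariants of the first degree" means a basis of the fixed subspace of the linear-forms module, and confirming that the $G$-action induced on linear forms by \eqref{action} is linear (so that "invariant of degree one" is genuinely the same as "fixed vector"), and that the relevant trace is $\trace(A)$ rather than $\trace(A^{-1})$ or $\trace(A^t)$. Since $\trace(A)=\trace(A^t)$ this ambiguity is harmless, but it is worth stating. Everything after that is the one-line idempotent-trace argument, which is routine. (One could alternatively invoke Maschke's theorem to decompose $V$ into irreducibles and count multiplicities of the trivial representation via the orthogonality relations, but the Reynolds-operator/projection argument is shorter and uses exactly the operator already introduced in Theorem~\ref{RICHARd}.)
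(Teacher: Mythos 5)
Your argument is correct, and there is nothing in the paper to compare it against: Theorem~\ref{invgrau1} is quoted as a classical result of Molien and the paper gives no proof of it, so your write-up supplies the missing (standard) argument. The core of your proof is sound: the averaging operator $\pi=\frac{1}{|G|}\sum_{A\in G}\rho(A)$ on the degree-one component satisfies $\rho(B)\pi=\pi$ for all $B\in G$ by re-indexing the sum, hence $\pi^{2}=\pi$, its image is exactly the space of invariant linear forms, and $\dim\operatorname{im}\pi=\trace(\pi)$ because the trace of an idempotent is its rank. The one bookkeeping point you flag is the right one to flag, and it is harmless for the reason you give: under the substitution action \eqref{action} the operator induced by $A$ on the span of $x_{1},\dots,x_{n}$ has matrix $A^{t}$ (and substitution is in fact a right action, $A\cdot(B\cdot f)=(BA)\cdot f$), but the set of invariant linear forms, the averaging operator over the whole group, and the value $\trace(A^{t})=\trace(A)$ are all unaffected, so the count comes out as $\frac{1}{|G|}\sum_{A\in G}\trace(A)$ exactly as claimed. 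Your aside about characters and the dual representation is muddled as stated, but you never use it; the equality of traces under transposition is all that is needed. The alternative you mention (Maschke plus orthogonality, counting the multiplicity of the trivial representation) proves the same thing, but the projection argument is shorter and, as you note, reuses the Reynolds operator of Theorem~\ref{RICHARd}.
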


\begin{teo}\textit{(Molien \cite{Tmolien})} Let $G<\gln$ be a finite group. Then, the number of linearly independent invariants of $G$ of degree $t$ is the coefficient of $\lambda^{t}$ in the expansion of
$$\Phi_{G}(\lambda)=\frac{1}{|G|}\sum_{A\in G}\frac{1}{\det(I-\lambda A)}.$$
$\Phi_{G}(\lambda)$ is called the Molien series of $G$.
\end{teo}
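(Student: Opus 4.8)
The plan is to identify the number of linearly independent invariants of degree $t$ with the dimension of a fixed subspace of a finite-dimensional $G$-module, and then to compute that dimension by averaging traces. Let $\mathcal{P}_{t}$ denote the $\C$-vector space of homogeneous polynomials of degree $t$ in $x_{1},\ldots,x_{n}$, with basis the monomials $x_{1}^{a_{1}}\cdots x_{n}^{a_{n}}$, $a_{1}+\cdots+a_{n}=t$. The action of $G$ defined in \eqref{action} preserves the degree, so each $\mathcal{P}_{t}$ is a $G$-submodule of $\C[x_{1},\ldots,x_{n}]$, and the quantity we must count is precisely $\dim_{\C}\mathcal{P}_{t}^{G}$, the dimension of the invariant subspace.

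First I would invoke the (normalized) Reynolds operator $\pi:=\tfrac{1}{|G|}\sum_{A\in G}A$, which by Theorem \ref{RICHARd} maps $\mathcal{P}_{t}$ into $\mathcal{P}_{t}^{G}$ and clearly restricts to the identity on $\mathcal{P}_{t}^{G}$; hence $\pi|_{\mathcal{P}_{t}}$ is a projection of $\mathcal{P}_{t}$ onto $\mathcal{P}_{t}^{G}$. Since the trace of a projection equals the dimension of its image,
$$\dim_{\C}\mathcal{P}_{t}^{G}=\trace\bigl(\pi|_{\mathcal{P}_{t}}\bigr)=\frac{1}{|G|}\sum_{A\in G}\trace\bigl(A|_{\mathcal{P}_{t}}\bigr).$$
It then remains to evaluate $\trace(A|_{\mathcal{P}_{t}})$ for a fixed $A\in G$ and to package these numbers into a generating series.

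For the trace computation I would use that $A$ has finite order and hence is diagonalizable over $\C$; let $\epsilon_{1},\ldots,\epsilon_{n}$ be its eigenvalues. Passing to coordinates in which $A$ is diagonal, the degree-$t$ monomials form an eigenbasis of $\mathcal{P}_{t}$, the monomial $x_{1}^{a_{1}}\cdots x_{n}^{a_{n}}$ having eigenvalue $\epsilon_{1}^{a_{1}}\cdots\epsilon_{n}^{a_{n}}$; summing over all $a_{1}+\cdots+a_{n}=t$ gives $\trace(A|_{\mathcal{P}_{t}})=h_{t}(\epsilon_{1},\ldots,\epsilon_{n})$, the complete homogeneous symmetric function of degree $t$. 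Combining this with the classical identity $\sum_{t\geq 0}h_{t}(\epsilon_{1},\ldots,\epsilon_{n})\lambda^{t}=\prod_{i=1}^{n}(1-\epsilon_{i}\lambda)^{-1}=\det(I-\lambda A)^{-1}$ yields
$$\sum_{t\geq 0}\bigl(\dim_{\C}\mathcal{P}_{t}^{G}\bigr)\lambda^{t}=\frac{1}{|G|}\sum_{A\in G}\sum_{t\geq 0}h_{t}(\epsilon_{1},\ldots,\epsilon_{n})\lambda^{t}=\frac{1}{|G|}\sum_{A\in G}\frac{1}{\det(I-\lambda A)}=\Phi_{G}(\lambda),$$
and comparing coefficients of $\lambda^{t}$ finishes the proof; in particular the coefficient of $\lambda$ recovers Theorem \ref{invgrau1}. (If one instead uses the convention in which $G$ acts through $A^{-1}$ or through the transpose, the same final formula results after the substitution $A\mapsto A^{-1}$ in the sum, since $\det(I-\lambda A)=\det(I-\lambda A^{t})$ and $A\mapsto A^{-1}$ permutes $G$.)

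The averaging/projection step and the symmetric-function bookkeeping are routine; the points requiring care are the justification that $\pi$ is genuinely a projection (where finiteness of $G$ is used so that $\tfrac{1}{|G|}$ makes sense in $\C$) and the interchange of the two summations, which I would legitimize by working in the formal power series ring $\C[[\lambda]]$ — or, analytically, by noting that for $|\lambda|$ sufficiently small every geometric series $\prod_{i}(1-\epsilon_{i}\lambda)^{-1}$ converges absolutely and the outer sum over $G$ is finite. I expect the trace evaluation $\trace(A|_{\mathcal{P}_{t}})=h_{t}(\epsilon_{1},\ldots,\epsilon_{n})$ together with its generating-function identity to be the conceptual heart of the argument.
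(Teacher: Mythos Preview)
Your argument is the standard and correct proof of Molien's theorem: project onto invariants via the Reynolds operator, identify the trace of that projection with $\dim_{\C}\mathcal{P}_{t}^{G}$, and compute $\trace(A|_{\mathcal{P}_{t}})$ by diagonalizing $A$ and recognizing the complete homogeneous symmetric function, whose generating series is $\det(I-\lambda A)^{-1}$. There is nothing to compare it against, however: the paper does not supply a proof of this theorem but merely quotes it as a classical result attributed to Molien \cite{Tmolien}, so your write-up stands on its own rather than as an alternative to anything in the text.
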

The Molien series of a finite group $G<\gln$ can be written as 
$$\Phi_{G}(\lambda)=\frac{1}{|G|}\sum_{A\in G}\frac{\det(A)}{\det(A-\lambda I)}.$$

Another important fact about the Molien series of a finite group $G<\gln$ is that it can be written down by inspection from the degrees of a good polynomial basis, that is, if $B=\lbrace f_{1},\ldots ,f_{l}\rbrace$ is a good polynomial basis of $\mathcal{J}(G)$ such that $d_{1}=\deg f_{1},\ldots ,d_{l}=\deg f_{l}$ then the Molien series of $G$ can be written as
\begin{equation}\label{moli1}
\Phi_{G}(\lambda)=\frac{1}{\prod_{1}^{n}(1-\lambda^{d_{i}})},\quad if\quad l=n,
\end{equation}
or
\begin{equation}\label{moli2}
\Phi_{G}(\lambda)=\frac{1+\sum_{j=l+1}^{n}\lambda^{d_{j}}}{\prod_{i=1}^{n}(1-\lambda^{d_{i}})}\quad if\quad l>n.
\end{equation}

On the other hand, the converse is not true. It is not always true when the Molien series has been put into the forms ($\ref{moli1}$) or ($\ref{moli2}$) then a good polynomial basis for $\mathcal{J}(G)$ can be found with degrees matching the powers of $\lambda$ in $\Phi(\lambda)$. This was shown by an example, due to Stanley \cite[Ex 3.8]{ST2}.

\section{Invariant Theory and the  shape enumerator}
In this section, we investigate the shape enumerator $H_{C}$ of binary self-dual NRT-code $C\subseteq M_{n,s}(\mathbb{F}_{2})$. 
\subsection{Invariant Ring for Self-dual NRT-Codes of $M_{n,2}(\mathbb{F}_{2})$}
We know from Theorem \ref{MacWilliamsH} that the shape enumerators of mutually dual linear codes $C$ and $C^{\perp}\subseteq M_{n,2}(\mathbb{F}_{q})$ are related by the equation
\begin{equation} \label{equation.MW.2}
H_{C^{\perp}}(Z)=\frac{1}{2^{d}}H_{C}(\Theta_{2}Z), 
\end{equation}
where 
\[
\Theta_{2}=\left[\begin{array}{crr}
1&1&2\\
1&1&-2\\
1&-1&0
\end{array}\right]. 
\]

If we assume that $C$ is a linear self-dual NRT-code of dimension $k$, that is, $C=C^{\perp}$ and $k=k^{\perp}$, then we conclude that $k=n$, so Equation \eqref{equation.MW.2} can be rewritten as 
$$H_{C}(Z)=\frac{1}{2^{n}}H_{C}(\Theta_{2}Z).$$
Since by definition $H_{C}(Z)$ is a homogeneous polynomial of degree $n$, the last expression for $H_{C}(Z)$ may be rewritten as 
$$H_{C}(Z)=H_{C}\left(\frac{\Theta_{2}}{2}Z\right),$$
which means that the polynomial $H_{C}$ is invariant by $T=\frac{\Theta_{2}}{2}$. Moreover, as $T^{2}=\left(\frac{\Theta_{2}}{2}\right)^{2}=I_{3}$, the shape enumerator $H_{C}$ is an element of  $\mathcal{J}(G_{1})$ where $G_1$ is the finite group of order 2 generated by $T$, $G_{1}=\langle T\rangle=\lbrace I,T\rbrace$. We will try construct a good polynomial basis for $\mathcal{J}(G_{1})$.

From Molien's Theorem the number of algebraically independent invariants of degree $t$ over the group $G_{1}$ is equal to the coefficient of $\lambda^{t}$ in

$$\Phi_{G_{1}}(\lambda)=\frac{1}{2}\sum_{A\in G_{1}}\frac{det(A)}{det(A-\lambda I_{3})}.$$
Let's calculate $\Phi_{G_{1}}(\lambda)$. First note that 
\begin{equation}\label{matrixT}
T:=\frac{\Theta_{2}}{2}=\left[\begin{array}{crr}
\frac{1}{2}&\frac{1}{2}&1\\
\frac{1}{2}&\frac{1}{2}&-1\\
\frac{1}{2}&-\frac{1}{2}&0
\end{array}\right].
\end{equation}
So the Molien's series of $G_{1}$ is given by

\begin{equation*}
\Phi_{G_{1}}(\lambda)=\frac{1}{2}\sum_{A\in G_{1}}\frac{det(A)}{det(A-\lambda I_{3})}=\frac{1}{(1-\lambda)^{2}(1-\lambda^{2})}.
\end{equation*}

It means that to find a good polynomial basis for $G_{1}$ we should look for one invariant of degree two and two invariants of degree one.

 Since the shape enumerator of any NRT-code $C\subseteq M_{n,2}(\mathbb{F}_{2})$ has degree $n$, we will start looking for shape enumerators of self-dual NRT-codes in $M_{1,2}(\mathbb{F}_{2})=\lbrace (0,0),(1,0),(0,1),(1,1)\rbrace$. In $M_{1,2}(\mathbb{F}_{2})$ there exist just five linear codes, namely: The trivial codes $C_{0}:=\lbrace (0,0)\rbrace$, $C_{4}:= M_{1,2}(\mathbb{F}_{2})$ and the non-trivial codes:
\[ C_{1,1}:=\lbrace (0,0),(0,1)\rbrace;\]
\[ C_{1,2}:=\lbrace (0,0),(1,0)\rbrace;\]
\[ C_{1,3}:=\lbrace (0,0),(1,1)\rbrace;\]
It is clear that all except the trivial codes are self-dual codes, and their shape enumerators are
\[H_{C_{1,1}}\left(z_{0},z_{1},z_{2}\right)=z_{0}+z_{2};\]
\[H_{C_{1,2}}\left(z_{0},z_{1},z_{2}\right)=z_{0}+z_{1};\]
\[H_{C_{1,3}}\left(z_{0},z_{1},z_{2}\right)=z_{0}+z_{2}.\]
Note that $H_{C_{1,2}}\left(z_{0},z_{1},z_{2}\right)=H_{C_{1,3}}\left(z_{0},z_{1},z_{2}\right)$ and this was already expected once that there exists a linear isometry between $C_{1,2}$ and $C_{1,3}$. We choose $\phi_{1}(z_{0},z_{1},z_{2})=z_{0}+z_{2}$ and $\phi_{2}(z_{0},z_{1},z_{2})=z_{0}+z_{1}$; it is obvious that $\phi_{1},\phi_{2}$ are algebraically independent and invariant under $G_{1}$.

Now to find an invariant polynomial of degree two we will consider a self-dual NRT-code in $M_{2,2}(\mathbb{F}_{2})$ and compute its shape enumerator. Let
\[ C_{2,1}:=\left\lbrace \left[\begin{array}{cc}
0&0\\
0&0
\end{array}\right],\left[\begin{array}{cc}
1&0\\
1&0
\end{array}\right],\left[\begin{array}{cc}
0&1\\
0&1
\end{array}\right], \left[\begin{array}{cc}
1&1\\
1&1
\end{array}\right]\right\rbrace.\]

$C_{2,1}$ is a self-dual code and its shape enumerator is $H_{C_{2,1}}\left(z_{0},z_{1},z_{2}\right)=z_{0}^{2}+z_{1}^{2}+2z_{2}^{2}$. Define $\phi_{3}(z_{0},z_{1},z_{3})=z_{0}^{2}+z_{1}^{2}+2z_{2}^{2}$, so $\phi_{3}(z_{0},z_{1},z_{3})$ is invariant under $G_{1}$. Moreover, applying Theorem $\ref{Jacobian}$, we can check that the set $\lbrace\phi_{1},\phi_{2},\phi_{3}\rbrace$ is algebraically independent. In short, we just proved the following theorem.

\begin{teo}\label{Invmn2} Let $C\subseteq M_{n,2}(\mathbb{F}_{2})$ be a self-dual NRT-code. Then, the shape enumerator of $C$ is an invariant polynomial under the action of $G_{1}=\langle T\rangle$ where $T$ is the matrix given in (\ref{matrixT}). Moreover, the invariant ring of the group $G_{1}$ is $\mathbb{C}[\phi_{1},\phi_{2},\phi_{3}]$ where $\phi_{1}(z_{0},z_{1},z_{2})=z_{0}+z_{2}$, $\phi_{2}(z_{0},z_{1},z_{2})=z_{0}+z_{1}$, and $\phi_{3}(z_{0},z_{1},z_{2})=z_{0}^{2}+z_{1}^{2}+2z_{2}^{2}$.
\end{teo}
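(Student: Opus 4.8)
The plan is to verify the three assertions of the theorem in turn: that $H_C$ is $G_1$-invariant, that $\phi_1,\phi_2,\phi_3$ are $G_1$-invariant, and that they generate $\mathcal{J}(G_1)$. The first assertion is already essentially established in the discussion preceding the statement: starting from the MacWilliams identity of Theorem~\ref{MacWilliamsH} specialized to $s=2$ and $C=C^\perp$ (so $k=n$), one gets $H_C(Z)=\frac{1}{2^n}H_C(\Theta_2 Z)$, and since $H_C$ is homogeneous of degree $n$ one may absorb the scalar $\frac{1}{2^n}$ into the matrix, obtaining $H_C(Z)=H_C(TZ)$ with $T=\Theta_2/2$. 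One then checks by a direct $3\times 3$ computation that $T^2=I_3$, so $G_1=\langle T\rangle=\{I,T\}$ is a group of order $2$ and $H_C\in\mathcal{J}(G_1)$.

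Next I would confirm that $\phi_1,\phi_2,\phi_3$ lie in $\mathcal{J}(G_1)$, which amounts to checking $\phi_i(TZ)=\phi_i(Z)$ for $i=1,2,3$; this is a routine substitution using the explicit entries of $T$ in \eqref{matrixT}. Algebraic independence of $\{\phi_1,\phi_2,\phi_3\}$ follows from the Jacobian criterion (Theorem~\ref{Jacobian}): since we have three polynomials in three variables, it suffices to compute the Jacobian determinant $\det J(\phi_1,\phi_2,\phi_3)$ and observe it is not identically zero. The degrees are $1,1,2$, matching exactly the shape of the Molien series $\Phi_{G_1}(\lambda)=\frac{1}{(1-\lambda)^2(1-\lambda^2)}$ computed above.

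The real content — and the main obstacle — is showing that these three invariants actually generate the whole ring, i.e. $\mathcal{J}(G_1)=\mathbb{C}[\phi_1,\phi_2,\phi_3]$. Matching the Molien series is necessary but, as the excerpt itself emphasizes via Stanley's counterexample, not sufficient in general. The cleanest way to close this gap is: since $\phi_1,\phi_2,\phi_3$ are three algebraically independent homogeneous invariants in $\mathbb{C}[z_0,z_1,z_2]$, the subring $\mathbb{C}[\phi_1,\phi_2,\phi_3]$ they generate is a polynomial ring, hence its Hilbert (Poincar\'e) series is $\frac{1}{(1-\lambda)(1-\lambda)(1-\lambda^2)}$, which coincides termwise with $\Phi_{G_1}(\lambda)$, the Hilbert series of $\mathcal{J}(G_1)$. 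Since $\mathbb{C}[\phi_1,\phi_2,\phi_3]\subseteq\mathcal{J}(G_1)$ is a graded inclusion of graded vector spaces with equal Hilbert series in every degree, the inclusion must be an equality. (Equivalently, one invokes the Hochster–Eagon theorem that $\mathcal{J}(G_1)$ is Cohen–Macaulay of Krull dimension $3$, so that any system of parameters of the right degrees is automatically a homogeneous system of parameters with no secondary invariants needed, precisely because the Molien series has numerator $1$.) I would also remark that the codes $C_{1,1}$, $C_{1,2}$ and $C_{2,1}$ exhibited before the statement serve to show the $\phi_i$ genuinely arise as shape enumerators, but they are not logically needed for the invariant-ring statement itself — the Jacobian and Hilbert-series arguments suffice.
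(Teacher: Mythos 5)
Your proposal follows essentially the same route as the paper: specialize the MacWilliams identity to $s=2$ and $C=C^{\perp}$, absorb the factor $2^{-n}$ using homogeneity to get invariance under $T=\Theta_{2}/2$ with $T^{2}=I_{3}$, compute the Molien series $\frac{1}{(1-\lambda)^{2}(1-\lambda^{2})}$, exhibit $\phi_{1},\phi_{2},\phi_{3}$ of degrees $1,1,2$, and verify algebraic independence with the Jacobian criterion. The one point where you go beyond the paper is the final generation step: the paper stops after producing three algebraically independent invariants of the indicated degrees and asserts the theorem, whereas you explicitly compare the Hilbert series of the graded inclusion $\mathbb{C}[\phi_{1},\phi_{2},\phi_{3}]\subseteq\mathcal{J}(G_{1})$ (both equal to $\frac{1}{(1-\lambda)^{2}(1-\lambda^{2})}$) to conclude equality degree by degree; this supplies the justification the paper leaves implicit, and correctly sidesteps the Stanley-type caveat, which concerns the converse direction only. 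Your observation that the codes $C_{1,1}$, $C_{1,2}$, $C_{2,1}$ are only needed to realize the $\phi_{i}$ as shape enumerators, not for the invariant-ring statement, is also accurate.
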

\noindent In words, Theorem \ref{Invmn2} means that the shape enumerator of any linear self-dual NRT-code $C\subseteq Mat_{n,2}(\mathbb{F}_{2})$ is a polynomial in $\phi_{1},\phi_{2}$ and $\phi_{3}$.

\subsection{Invariant Ring for Doubly-Even Self-Dual NRT-Codes of $M_{n,2}(\mathbb{F}_{2})$}
Let now $C\subseteq M_{n,2}(\mathbb{F}_{2})$ be a doubly-even self-dual NRT-code, i.e., a self-dual NRT-code $C$ whose every codeword has even weight. From the definition of shape enumerator of $C$ it follows that $H_{C}(z_{0},z_{1},z_{2})\in\mathbb{C}[z_{0},z_{1},z_{2}]$ is such that $z_{1}$ has always even degree. So in this case 
\[H_{C}(z_{0},z_{1},z_{2})=H_{C}(z_{0},-z_{1},z_{2}).\]
Thats is, the polynomial $H_{C}$ is invariant by $$A:=\left[\begin{array}{crc}
1&0&0\\
0&-1&0\\
0&0&1
\end{array}\right].$$

Since $C$ is a self-dual NRT-code, we already know that $H_{C}$ is invariant by the matrix $T$ given in (\ref{matrixT}). Finally, the shape enumerator of $C$ is invariant under the group $G_{2}:=\left\langle T,A\right\rangle$. It is possible to check that \begin{equation}\label{G2}
G_{2}=\lbrace I, A, T, AT, TA, TAT\rbrace\end{equation}  and the order of $G_{2}$ is $g=6$. The Molien's series of $G_{2}$ is given by

\begin{equation*}
\Phi(\lambda)=\frac{1}{6}\sum_{A\in G_{2}}\frac{det(A)}{det(A-\lambda I_{3})}=\frac{1}{(1-\lambda)(1-\lambda^{2})(1-\lambda^{3})}.
\end{equation*}

It suggests that we should search for one invariant of degree one, one invariant of degree two and one invariant of degree three in order to determine a polynomial basis of invariants.
 The code $C_{1,1}=\lbrace (0,0),(0,1)\rbrace$ is such that $C_{1,1}$ is a self-dual NRT-code and all its codewords has even weight. Moreover, $H_{C_{1,1}}\left(z_{0},z_{1},z_{2}\right)=z_{0}+z_{2}$ is a $G_2$-invariant. Now consider $C_{2,2}\subseteq M_{2,2}(\mathbb{F}_{2})$ given by

\[C_{2,2}=\left\lbrace\left[\begin{array}{cc}
0&0\\
0&0
\end{array}\right],\left[\begin{array}{cc}
1&0\\
1&0
\end{array}\right],\left[\begin{array}{cc}
0&1\\
0&1
\end{array}\right],\left[\begin{array}{cc}
1&1\\
1&1
\end{array}\right]\right\rbrace.\]
$C_{2,2}$ is a self-dual NRT-code and all its codewords have even weight. Furthermore, the shape enumerator of $C_{2,2}$ is the polynomial $H_{C_{2}}(z_{0},z_{1},z_{2})=z_{0}^{2}+z_{1}^{2}+2z_{2}^{2}$ which is, of course, an invariant polynomial of the group $G_{2}$. Putting $p_{1}$ and $p_{2}$ as $p_{1}(z_{0},z_{1},z_{2})=z_{0}+z_{2}$ and $p_{2}(z_{0},z_{1},z_{2})=z_{0}^{2}+z_{1}^{2}+2z_{2}^{2}$ we have an algebraically independent set $\left\lbrace p_{1},p_{2}\right\rbrace$. So, we just need do find another polynomial $p_{3}(z_{0},z_{1},z_{2})$ such that $p_{3}$ is algebraically independent of $p_{1}$ and $p_{2}$.

Define $C_{3,3}\subseteq M_{3,2}(\mathbb{F}_{2})$ by
$$\
C_{3,3}:=\left\{\left[\begin{array}{cc}
0&0\\
0&0\\
0&0
\end{array}\right], \left[\begin{array}{cc}
0&0\\
1&0\\
1&0
\end{array}\right],\left[\begin{array}{cc}
1&1\\
1&1\\
0&1
\end{array}\right],\left[\begin{array}{cc}
1&0\\
0&0\\
1&0
\end{array}\right],
\left[\begin{array}{cc}
1&1\\
0&1\\
1&1
\end{array}\right],\left[\begin{array}{cc}
0&1\\
1&1\\
1&1
\end{array}\right],\left[\begin{array}{cc}
1&0\\
1&0\\
0&0
\end{array}\right],\left[\begin{array}{cc}
0&1\\
0&1\\
0&1
\end{array}\right]\right\}$$
$C_{3,3}$ is a self-dual NRT-code and all its codewords has even weight. Moreover, the shape enumerator of $C_{3,3}$ is the polynomial $H_{C_{3,3}}(z_{0},z_{1},z_{2})=z_{0}^{3}+4z_{2}^{3}+3z_{1}^{2}z_{0}$ that is invariant under $G_{2}$. Defining $p_{3}(z_{0},z_{1},z_{2})=H_{C_{3,3}}(z_0,z_1,z_2)$ the $\lbrace p_{1}, p_{2}, p_{3}\rbrace$  is algebraically independent by Theorem $\ref{Jacobian}$.


Summing up, we have just proved the following.
\begin{teo} Let $C\subseteq M_{n,2}(\mathbb{F}_{2})$ be a self-dual NRT-code such that all its codewords has even weight. Then, the shape enumerator of $C$ is an invariant polynomial for the group $G_{2}$ given in (\ref{G2}). Moreover, the invariant ring of $G_{2}$ is $\mathbb{C}[p_{1},p_{2},p_{3}]$ where the polynomials $p_{1},p_{2}$ and $p_{3}$ are given by $p_{1}(z_{0},z_{1},z_{2})=z_{0}+z_{2}$, $p_{2}(z_{0},z_{1},z_{2})=z_{0}^{2}+z_{1}^{2}+2z_{2}^{2}$, and $p_{3}(z_{0},z_{1},z_{2})=z_{0}^{3}+4z_{2}^{3}+3z_{1}^{2}z_{0}$.
\end{teo}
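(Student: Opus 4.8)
The plan is to mirror exactly the argument already used for $G_1$ in Theorem~\ref{Invmn2}, now applied to the group $G_2 = \langle T, A\rangle$. First I would record that $H_C$ is a $G_2$-invariant: since $C$ is self-dual of dimension $n$, Theorem~\ref{MacWilliamsH} forces invariance under $T = \Theta_2/2$ as in the previous subsection, and the doubly-even hypothesis forces the exponent of $z_1$ in every monomial of $H_C$ to be even, hence invariance under the diagonal matrix $A = \operatorname{diag}(1,-1,1)$. Therefore $H_C \in \mathcal{J}(G_2)$, which proves the first assertion.

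Next I would verify that $G_2$ has order $6$ and consists of the six elements listed in \eqref{G2}. This is a finite check: one computes $A^2 = T^2 = I_3$, and checks that the product $TA$ has order $3$ (so $(TA)^3 = I$), which identifies $G_2$ as a dihedral-type group of order $6$ and shows the six listed words exhaust it. With $|G_2| = 6$ in hand, I would compute the Molien series $\Phi_{G_2}(\lambda) = \frac{1}{6}\sum_{B \in G_2} \frac{\det B}{\det(B - \lambda I_3)}$ by evaluating the six determinants; the stated closed form $\frac{1}{(1-\lambda)(1-\lambda^2)(1-\lambda^3)}$ then tells us, via \eqref{moli1}, that a good polynomial basis should consist of exactly three primary invariants, of degrees $1$, $2$, and $3$, with no secondary invariants needed.

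Then I would produce explicit invariants of those three degrees and check algebraic independence. The codes $C_{1,1}$, $C_{2,2}$, and $C_{3,3}$ are exhibited as doubly-even self-dual NRT-codes, so by the first part of the theorem their shape enumerators $p_1 = z_0 + z_2$, $p_2 = z_0^2 + z_1^2 + 2z_2^2$, and $p_3 = z_0^3 + 4z_2^3 + 3z_1^2 z_0$ automatically lie in $\mathcal{J}(G_2)$ (alternatively one checks directly that each is fixed by both $T$ and $A$). Their degrees are $1$, $2$, $3$, matching the Molien series. Algebraic independence follows from the Jacobian criterion, Theorem~\ref{Jacobian}: one computes the $3\times 3$ Jacobian $J(p_1,p_2,p_3)$ and checks $\det J \not\equiv 0$ as a polynomial in $z_0,z_1,z_2$. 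Having three algebraically independent homogeneous invariants whose degrees $(1,2,3)$ produce precisely the denominator of $\Phi_{G_2}(\lambda)$ with numerator $1$, a standard dimension count (comparing $\dim \mathbb{C}[p_1,p_2,p_3]_t$ with the degree-$t$ coefficient of $\Phi_{G_2}$) forces $\mathcal{J}(G_2) = \mathbb{C}[p_1,p_2,p_3]$ with no room for secondary invariants, completing the proof.

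The main obstacle is the same one that is slightly glossed in the $G_1$ case: passing from ``we have three algebraically independent invariants of the right degrees'' to ``they generate the whole invariant ring.'' Equality of the Molien series with $\prod(1-\lambda^{d_i})^{-1}$ plus algebraic independence of the $p_i$ does give $\mathcal{J}(G_2) = \mathbb{C}[p_1,p_2,p_3]$ by a Hilbert-series argument, but one must be careful to invoke it correctly (for instance via Theorem~3 of Hochster--Eagon together with the uniqueness of the degrees read off from \eqref{moli1}, noting Stanley's caveat that the converse can fail only when secondary invariants are present, which is excluded here by the numerator being $1$). Everything else---the group order, the six determinants in the Molien computation, and the Jacobian determinant---is a routine finite calculation.
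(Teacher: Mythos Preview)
Your proposal is correct and follows essentially the same route as the paper: establish $G_2$-invariance of $H_C$ from self-duality and the doubly-even condition, verify $|G_2|=6$, compute the Molien series $\frac{1}{(1-\lambda)(1-\lambda^2)(1-\lambda^3)}$, exhibit $p_1,p_2,p_3$ as shape enumerators of the codes $C_{1,1},C_{2,2},C_{3,3}$, and check algebraic independence via the Jacobian criterion. Your explicit attention to the Hilbert-series comparison in the final step is in fact more careful than the paper, which simply asserts the conclusion after assembling the ingredients.
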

\noindent In other words, if $C\subseteq M_{n,2}(\mathbb{F}_{2})$ is a self-dual NRT-code whose codewords has even weight then its shape enumerator is a polynomial in $p_{1}, p_{2}$ and $p_{3}$.

\subsection{Invariant Ring for Doubly-Doubly-Even Self-Dual NRT-Codes of $M_{n,2}(\mathbb{F}_{2})$}

Let $C\subseteq M_{n,2}(\mathbb{F}_{2})$ be a self-dual NRT-code whose every codeword has an even number of rows with weight one and an even number of rows with weight two. In this case, by definition of shape enumerator, $H_{C}(z_{0},z_{1},z_{2})$ is such that $z_{1}$ and $z_{2}$ are always of degree even, therefore, it is true that $H_{C}(z_{0},z_{1},z_{2})=H_{C}(z_{0},-z_{1},-z_{2})$.
This implies that $H_{C}(z_{0},z_{1},z_{2})$ the shape enumerator of $C$ is invariant under the action of the matrix
\[B:=\left[\begin{array}{crr}
1&0&0\\
0&-1&0\\
0&0&-1
\end{array}\right].\]
Since $C$ is a self-dual NRT-code it follows also that $H_{C}$ is invariant by the matrix $T$ defined in (\ref{matrixT}), and therefore the polynomial $H_C$ is invariant under the action of the group 
\begin{equation}\label{G3}
G_{3}:=\langle T,B\rangle.\end{equation}
We can check that $G_{3}=\lbrace I, B, T, BT, TB, TBT, BTB, (BT)^{2}, (TB)^2, (TB)^{3}, B(TB)^{2}, (TB)^{2}T\rbrace$ and 
\[\frac{1}{|G_{3}|}\sum_{A_{i}\in G_{3}} \trace(A_{i})=0.\]
Therefore, Theorem \ref{invgrau1} assure us that there are no $G_3$-invariants of degree one. We can write the Molien's series of $G_{3}$ as
\begin{eqnarray*}
	\Phi_{G_{3}}(\lambda)&=&\frac{1}{12}\left(\frac{12+12\lambda^{4}}{(1-\lambda^{2})^{2}(1-\lambda^{6})}\right)\\
	&=&\frac{1}{(1-\lambda^{2})^{2}(1-\lambda^{6})}+\frac{\lambda^{4}}{(1-\lambda^{2})^{2}(1-\lambda^{6})},
\end{eqnarray*}
which suggests that in order obtain a good polynomial basis for $\mathcal{J}(G_{3})$ we should search for three primary invariants $\phi_{1},\phi_{2}$ and $\phi_{3}$ of degree $2,2,6$ respectively and one secondary invariant $\phi_{4}$ of degree $4$. 

Unfortunately, there exists only one linear code $C\subseteq M_{2,2}(\mathbb{F}_{2})$ such that the required properties are satisfied, namely $C_{2,2}$ of the previous section, so we can take $\phi_{1}$ as $\phi_{1}(z_{0},z_{1},z_{2})=z_{0}^{2}+z_{1}^{2}+2z_{2}^{2}$. By averaging $z_{0}^{2}$ under the group $G_{3}$, using the Theorem $\ref{RICHARd}$, we obtain the invariant $\phi_{2}(z_{0},z_{1},z_{2})=5z_{0}^{2}-2z_{0}z_{1}+z_{1}^{2}+8z_{2}^{2}+8z_{2}z_{1}$. 

Now, we will work to find an invariant of degree six. Averaging $z_{1}z_{2}$ over the group $G_{3}$, using Theorem $\ref{RICHARd}$, we obtain an homogeneous invariant $\phi_{3}^{\star}$ of degree two, namely $\phi_{3}^{\star}(z_{0},z_{1},z_{2})=2z_{0}^{2}-2z_{1}^2+8z_{1}z_{2}$. The set $\lbrace
\phi_{1},\phi_{2},\phi_{3}^{\star}\rbrace$ is algebraically independent. Let $\phi_{3}\in\mathbb{C}[z_{0},z_{1},z_{2}]$ be the polynomial given by $\phi_{3}=(\phi_{3}^{\star})^{3}$, so $\deg\phi_{3}=6$ and $\lbrace\phi_{1},\phi_{2},\phi_{3}\rbrace$ is algebraically independent since $\lbrace
\phi_{1},\phi_{2},\phi_{3}^{\star}\rbrace$ is algebraically independent. We can use Magma 
Computer Algebra program \cite{Magma} to find the secondary invariant $\phi_{4}$.

\begin{teo}
Let $C\subseteq M_{n,2}(\mathbb{F}_{2})$ be a self-dual NRT-code such that all its codewords has an even number of rows with weight one, and an even number of rows with weight two. Then, the shape enumerator of $C$ is an invariant polynomial for the group $G_{3}$ given in (\ref{G3}). Moreover, the invariant ring of $G_{3}$ is $\mathbb{C}[\phi_{1},\phi_{2},\phi_{3}]\oplus\phi_{4}\mathbb{C}[\phi_{1},\phi_{2},\phi_{3}]$ where the polynomials $\phi_{1},\phi_{2}$ and $\phi_{3}$ are $\phi_{1}(z_{0},z_{1},z_{2})=z_{0}^{2}+z_{1}^{2}+2z_{2}^{2}$, $\phi_{2}(z_{0},z_{1},z_{2})=5z_{0}^{2}-2z_{0}z_{1}+z_{1}^{2}+8z_{2}^{2}+8z_{2}z_{1}$, $\phi_{3}(z_{0},z_{1},z_{2})=\left(2z_{0}^{2}-2z_{1}^{2}+8z_{1}z_{2}\right)^{2}$.	
\end{teo}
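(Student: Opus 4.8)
The plan is to organize the material preceding the statement into three stages: first show that $H_{C}$ lies in $\mathcal{J}(G_{3})$, then read off from the Molien series what a good polynomial basis must look like, and finally verify that $\phi_{1},\phi_{2},\phi_{3}$ together with a fourth invariant $\phi_{4}$ actually realise it. For the first stage I would argue that the doubly-doubly-even hypothesis forces $z_{1}$ and $z_{2}$ to occur only with even exponents in $H_{C}$, hence $H_{C}$ is fixed by $B$; self-duality together with Theorem \ref{MacWilliamsH} gives that $H_{C}$ is fixed by the matrix $T$ of \eqref{matrixT}; so $H_{C}\in\mathcal{J}(G_{3})$ with $G_{3}=\langle T,B\rangle$ as in \eqref{G3}. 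A direct multiplication check confirms that $G_{3}$ is the $12$-element set listed above, and that $\tfrac{1}{|G_{3}|}\sum_{A\in G_{3}}\trace(A)=0$, so by Theorem \ref{invgrau1} there are no invariants of degree one; evaluating $\Phi_{G_{3}}$ gives $\frac{1+\lambda^{4}}{(1-\lambda^{2})^{2}(1-\lambda^{6})}$, and comparison with \eqref{moli2} tells us to look for primary invariants of degrees $2,2,6$ and a single secondary invariant of degree $4$.

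For the second stage I would collect the candidates already produced: $\phi_{1}=z_{0}^{2}+z_{1}^{2}+2z_{2}^{2}$ is the shape enumerator of the self-dual code $C_{2,2}$; $\phi_{2}$ and the auxiliary degree-two invariant $\phi_{3}^{\star}=2z_{0}^{2}-2z_{1}^{2}+8z_{1}z_{2}$ are the images of $z_{0}^{2}$ and $z_{1}z_{2}$ under the Reynolds operator of Theorem \ref{RICHARd}; and $\phi_{3}=(\phi_{3}^{\star})^{3}$ puts us in degree $6$. The degree-four invariant $\phi_{4}$ would be obtained either by averaging a degree-four monomial (e.g.\ $z_{0}^{4}$) over $G_{3}$ or, as in the statement, with the help of Magma \cite{Magma}.

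The verification is the heart of the argument. I would apply the Jacobian criterion (Theorem \ref{Jacobian}) to $\phi_{1},\phi_{2},\phi_{3}^{\star}$: computing the $3\times3$ Jacobian determinant and checking that it is not identically zero gives algebraic independence of $\{\phi_{1},\phi_{2},\phi_{3}^{\star}\}$, hence of $\{\phi_{1},\phi_{2},\phi_{3}\}$ since $\phi_{3}$ is a power of $\phi_{3}^{\star}$. A short elimination then shows that the only common zero of $\phi_{1},\phi_{2},\phi_{3}^{\star}$ in $\mathbb{C}^{3}$ is the origin, so $\phi_{1},\phi_{2},\phi_{3}$ is a homogeneous system of parameters for $\mathcal{J}(G_{3})$ (note that $\phi_{3}^{\star}$, being a cube root of $\phi_{3}$, is integral over $\mathbb{C}[\phi_{1},\phi_{2},\phi_{3}]$). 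By the Hochster--Eagon theorem \cite{Hochater}, $\mathcal{J}(G_{3})$ is then a free $\mathbb{C}[\phi_{1},\phi_{2},\phi_{3}]$-module, and the generator degrees are exactly the exponents appearing in $\Phi_{G_{3}}(\lambda)(1-\lambda^{2})^{2}(1-\lambda^{6})=1+\lambda^{4}$, namely one generator of degree $0$ (the constant $1$) and one of degree $4$. To finish I would check that the chosen $\phi_{4}$ is linearly independent from $\phi_{1}^{2},\phi_{1}\phi_{2},\phi_{2}^{2}$; since $\dim\mathcal{J}(G_{3})_{4}=4$ (the $\lambda^{4}$-coefficient of $\Phi_{G_{3}}$) and those three products span $\mathbb{C}[\phi_{1},\phi_{2},\phi_{3}]_{4}$, this forces $\{1,\phi_{4}\}$ to be a free module basis, whence $\mathcal{J}(G_{3})=\mathbb{C}[\phi_{1},\phi_{2},\phi_{3}]\oplus\phi_{4}\mathbb{C}[\phi_{1},\phi_{2},\phi_{3}]$; as $H_{C}\in\mathcal{J}(G_{3})$, the theorem follows.

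I expect the main obstacle to be precisely this last verification, compounded by the fact (Stanley's example \cite[Ex.~3.8]{ST2}, as noted above) that the Molien series having the form \eqref{moli2} does not by itself guarantee a good polynomial basis with matching degrees. Thus the genuinely substantive points are (i) the Jacobian computation, which must certify that $\phi_{1},\phi_{2},\phi_{3}$ really form a homogeneous system of parameters, and (ii) confirming that the particular secondary invariant $\phi_{4}$ produced by the Reynolds operator does not already lie in $\mathbb{C}[\phi_{1},\phi_{2},\phi_{3}]$; it is step (ii) where the computer algebra check is indispensable, while the rest is bookkeeping with Molien's theorem and the Hochster--Eagon structure theorem.
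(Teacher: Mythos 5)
Your proposal is correct, and its first two stages (invariance of $H_{C}$ under $B$ and $T$, hence under $G_{3}$; the Molien series $\frac{1+\lambda^{4}}{(1-\lambda^{2})^{2}(1-\lambda^{6})}$; the invariants $\phi_{1}$ from $C_{2,2}$, $\phi_{2}$ and $\phi_{3}^{\star}$ from the Reynolds operator, $\phi_{3}=(\phi_{3}^{\star})^{3}$, and $\phi_{4}$ from Magma) follow exactly the route the paper takes. Where you genuinely go beyond the paper is the verification stage: the paper stops at algebraic independence of $\{\phi_{1},\phi_{2},\phi_{3}\}$ via the Jacobian criterion and the degree-matching suggested by the Molien series, and then asserts the decomposition $\mathcal{J}(G_{3})=\mathbb{C}[\phi_{1},\phi_{2},\phi_{3}]\oplus\phi_{4}\mathbb{C}[\phi_{1},\phi_{2},\phi_{3}]$ -- even though it has itself pointed out (Stanley's example) that matching the Molien series is not by itself conclusive. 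Your additional steps close precisely this gap: the common-zero computation does work out ($\phi_{1}=\phi_{3}^{\star}=0$ forces $(z_{1}-z_{2})^{2}=0$ and $z_{0}^{2}=-3z_{1}^{2}$, and then $\phi_{2}=0$ forces $z_{0}=z_{1}=z_{2}=0$), so $\phi_{1},\phi_{2},\phi_{3}$ is a homogeneous system of parameters; Hochster--Eagon Cohen--Macaulayness then gives freeness of $\mathcal{J}(G_{3})$ over $\mathbb{C}[\phi_{1},\phi_{2},\phi_{3}]$ with generator degrees read off from $1+\lambda^{4}$; and since $\dim\mathcal{J}(G_{3})_{4}=4$ while $\phi_{1}^{2},\phi_{1}\phi_{2},\phi_{2}^{2}$ span the degree-4 part of the subring, any degree-4 invariant outside that span (the check still delegated to the computer, as in the paper) serves as the module generator. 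So your argument is the paper's construction plus a rigorous justification of the direct-sum structure; the paper's version is shorter but leaves that structural claim resting on the Molien-series heuristic and the Magma computation, whereas yours reduces the computer's role to a single linear-algebra check in degree 4.
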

\section{The General case of Self-dual NRT-Codes of $M_{n,s}(\mathbb{F}_{2})$}

For the purpose of studying the shape enumerator of a self dual NRT-code $C\subseteq M_{n,s}(\mathbb{F}_{2})$ we will first look closely at the matrix $\Theta_{s}$ given in Theorem \ref{MacWilliamsH}. 

\begin{teo}\label{propT}The matrix $\Theta_{s}$ given by $\Theta_{s}=(\theta_{l,k})_{l,k=0,\ldots,s}$ where
	\[\theta_{l,k}:=\left\lbrace\begin{array}{rcc}
	1&\textrm{if}&k=0,\\
	2^{k-1}&\textrm{if}&0<k\leqslant s-l,\\
	-2^{k-1}&\textrm{if}&l+k=s+1,\\
	0&\textrm{if}&l+k>s+1,
	\end{array}\right. \]
satisfies the following properties:
\begin{itemize}
	\item[a)] $\Theta_{s}^{2}=2^{s}I_{s+1}$;
	\item[b)] $\trace(\Theta_{s})=\left\lbrace\begin{array}{lll}
	2^{\frac{s}{2}}&\textrm{if}& s\quad\textrm{is}\quad \textrm{even}\\
	0&\textrm{if}& s\quad \textrm{is}\quad \textrm{odd}
	\end{array}\right. ;$
	\item[c)] $\det(\Theta_{s}):=\left\lbrace\begin{array}{lll}
	(-1)^{\frac{s+1}{2}}2^{\frac{s(s+1)}{2}}&\textrm{if}& s\quad \textrm{is}\quad \textrm{odd}\\
	(-1)^{\frac{s}{2}}2^{\frac{s(s+1)}{2}}&\textrm{if}& s\quad \textrm{is}\quad \textrm{even}
	\end{array}\right. ;$
\end{itemize}
\end{teo}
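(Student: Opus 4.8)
The plan is to prove all three parts by exploiting the explicit, very structured form of $\Theta_s$: it is "upper-left triangular" in the sense that $\theta_{l,k}=0$ whenever $l+k>s+1$, and along the antidiagonal $l+k=s+1$ the entries are $\theta_{l,s+1-l}=-2^{s-l}$, while the first column is all ones. Rather than guess a closed form and verify it, I would set up the computation of $(\Theta_s^2)_{l,m}=\sum_{k=0}^{s}\theta_{l,k}\theta_{k,m}$ and push through the case analysis. For a fixed row index $l$, the nonzero entries $\theta_{l,k}$ occur only for $0\le k\le s+1-l$; for each such $k$, the row $\theta_{k,\cdot}$ of the second factor contributes $\theta_{k,m}$, which is nonzero only for $m\le s+1-k$. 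So the sum over $k$ is really a short sum, and I expect it to telescope: the first column contributes $\theta_{l,0}\cdot\theta_{0,m}=1\cdot\theta_{0,m}$, the "bulk" terms $0<k\le s-l$ contribute $2^{k-1}\theta_{k,m}$, and the antidiagonal term $k=s+1-l$ contributes $-2^{s-l}\theta_{s+1-l,m}$. Splitting into the cases $m=0$, $0<m$ with $l+m\le s+1$, and $l+m>s+1$, a geometric-series identity of the shape $1+\sum_{k=1}^{N}2^{k-1}=2^{N}$ should collapse everything to $2^s\delta_{l,m}$. That proves (a).

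For (b), I would simply read the diagonal of $\Theta_s$ off the definition: $\theta_{l,l}$ is nonzero precisely when either $0<l\le s-l$, i.e. $l\le\lfloor s/2\rfloor$ with value $2^{l-1}$, or $l+l=s+1$, i.e. $s$ odd and $l=(s+1)/2$, with value $-2^{(s-1)/2}$ (and $\theta_{0,0}=1$). Summing, $\trace(\Theta_s)=1+\sum_{l=1}^{\lfloor s/2\rfloor}2^{l-1}+[\text{antidiagonal correction}]$. When $s$ is odd the antidiagonal term $-2^{(s-1)/2}$ exactly cancels $1+\sum_{l=1}^{(s-1)/2}2^{l-1}=2^{(s-1)/2}$, giving $0$; when $s$ is even there is no antidiagonal diagonal entry and the sum is $1+\sum_{l=1}^{s/2}2^{l-1}=2^{s/2}$. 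This is a one-line geometric sum once the diagonal entries are identified.

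For (c), the cleanest route is to combine (a) with a sign/parity argument rather than expanding the determinant directly. From (a), $\det(\Theta_s)^2=\det(2^sI_{s+1})=2^{s(s+1)}$, so $\det(\Theta_s)=\pm 2^{s(s+1)/2}$ and only the sign is in question. To pin the sign down I would compute $\det(\Theta_s)$ by cofactor expansion along the last row or by permuting rows/columns to make $\Theta_s$ block- or anti-triangular: since $\theta_{l,k}=0$ for $l+k>s+1$, reversing the order of the columns (a permutation of sign $(-1)^{\lfloor (s+1)/2\rfloor}$) turns $\Theta_s$ into a lower-triangular matrix whose diagonal consists of the old antidiagonal entries $\theta_{l,s+1-l}=-2^{s-l}$ for $l=1,\dots,s$ together with... wait, one must be careful because the $l=0$ antidiagonal position is $\theta_{0,s+1}$, which does not exist; the $l=0$ row after reversal has its leading entry at the old column $k=0$ which reversal sends to the last column, so actually reversal gives an \emph{anti}-lower-triangular matrix and the diagonal of the reversed matrix is $\theta_{0,0}=1$ in the bottom-right corner together with $\theta_{l,s+1-l}=-2^{s-l}$ for $l=1,\dots,s$ up the antidiagonal. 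Hence $\det(\Theta_s)=(-1)^{\binom{s+1}{2}\ \text{or sign of reversal}}\cdot 1\cdot\prod_{l=1}^{s}(-2^{s-l})=\pm(-1)^{s}2^{\sum_{l=1}^{s}(s-l)}=\pm(-1)^{s}2^{s(s-1)/2}$, and comparing the power of $2$ with the required $2^{s(s+1)/2}$ shows one must instead read the product of the \emph{antidiagonal} entries of the original matrix directly. I would therefore compute $\det(\Theta_s)$ as $\varepsilon_s\prod_{\text{antidiagonal}}\theta_{l,s+1-l}$ where $\varepsilon_s=(-1)^{(s+1)s/2}$ is the sign of the order-reversing permutation on $s+1$ indices, being slightly careful about which index supplies the $1$; then collect the signs $(-1)$ from each $-2^{k-1}$ and the parity of $\varepsilon_s$, and separate the cases $s$ even and $s$ odd to match the stated formula. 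The main obstacle I anticipate is precisely this bookkeeping in part (c): getting the sign of the reversing permutation right and correctly accounting for the asymmetric role of the index $0$ (the first column being all ones rather than following the $\pm 2^{k-1}$ pattern), so that the final parity lands on $(-1)^{(s+1)/2}$ for odd $s$ and $(-1)^{s/2}$ for even $s$. Part (a), though conceptually easy, is the most computation-heavy and I would organize its case analysis carefully to avoid off-by-one errors in the exponents of $2$.
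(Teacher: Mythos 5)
Your parts a) and b) are sound. For b) you follow exactly the paper's own argument (read off the diagonal entries, sum the geometric series, and for odd $s$ let the antidiagonal entry $-2^{(s-1)/2}$ cancel $1+\sum_{l=1}^{(s-1)/2}2^{l-1}=2^{(s-1)/2}$). For a) the paper offers no computation at all (it calls the identity immediate and refers to Dougherty--Skriganov), so your explicit evaluation of $(\Theta_s^2)_{l,m}=\sum_k\theta_{l,k}\theta_{k,m}$ is extra work, but it does go through: in each of the cases $m<l$, $m=l$, $m>l$ the sum collapses via $1+\sum_{k=1}^{N}2^{k-1}=2^N$ to $0$ or $2^{s-1}+2^{s-1}=2^s$, as you predict.

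Part c), however, has a genuine gap. The reduction via a) to $\det(\Theta_s)=\pm2^{s(s+1)/2}$ is correct, but your method for pinning the sign rests on a false structural claim: $\Theta_s$ is \emph{not} anti-triangular, because its support is $l+k\leqslant s+1$, i.e.\ one full diagonal beyond the main antidiagonal $l+k=s$ of an $(s+1)\times(s+1)$ matrix. Reversing the columns therefore yields an upper Hessenberg matrix, not a triangular one, and the determinant is not $\varepsilon_s$ times the product $\theta_{0,0}\prod_{l=1}^{s}\theta_{l,s+1-l}$. Indeed that product has absolute value $\prod_{l=1}^{s}2^{s-l}=2^{s(s-1)/2}$ (for $s=2$ it equals $2$, while $\det(\Theta_2)=-8$), which is exactly the power-of-two mismatch you noticed mid-argument; your final prescription, ``compute $\det(\Theta_s)$ as $\varepsilon_s\prod\theta_{l,s+1-l}$,'' is the same flawed formula again, so the discrepancy is never resolved. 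Two ways to close the gap: (i) keep $|\det(\Theta_s)|=2^{s(s+1)/2}$ from a) and determine the sign from eigenvalue multiplicities --- by a) the eigenvalues are $\pm2^{s/2}$ with multiplicities $r_1,r_2$ satisfying $r_1+r_2=s+1$ and $r_1-r_2=\trace(\Theta_s)/2^{s/2}$ by b), whence $r_2=s/2$ for even $s$ and $r_2=(s+1)/2$ for odd $s$ and $\det(\Theta_s)=(-1)^{r_2}2^{s(s+1)/2}$ (this is essentially the bookkeeping the paper does later in Theorem \ref{caracTheta}); or (ii) follow the paper's actual proof of c): Laplace expansion along the last column, which contains only the two nonzero entries $\theta_{0,s}=2^{s-1}$ and $\theta_{1,s}=-2^{s-1}$, and whose two complementary minors are both equal to $\det(\Theta_{s-1})$, giving the recursion $\det(\Theta_s)=(-1)^s2^s\det(\Theta_{s-1})$, after which induction on $s$ (split into the even and odd cases) yields the stated signs.
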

\begin{proof}
Item a) is immediate and also is a fact already noticed in \cite{DS}. So we will prove items b) and c).
	
\begin{itemize}
	\item[b)]
	By the definition of $\Theta$ we have that $\displaystyle \trace(\Theta_{s})=\sum_{i=0}^{s}\theta_{ii}$  where
	\[\theta_{ii}:=\left\lbrace\begin{array}{rrl}
	1&\textrm{if}&i=0\\
	2^{i-1}&\textrm{if}&0<i\leqslant s-i\\
	-2^{i-1}&\textrm{if}&2i=s+1\\
	0&\textrm{if}&2i>s+1
	\end{array}\right. .\]
	So for an even $s$ the trace of $\Theta_{s}$ becomes
	\begin{eqnarray*}
		\trace(\Theta_{s})&=&\displaystyle\sum_{i=0}^{s}\theta_{ii}
		=1+\displaystyle\sum_{i=1}^{\frac{s}{2}}2^{i-1}+ 0 \\
		&=&2^{\frac{s}{2}}.
	\end{eqnarray*}
	On the other hand, for a odd $s$ the trace of $\Theta_{s}$ will be
	\begin{eqnarray*}
		\trace(\Theta_{s})&=&\displaystyle\sum_{i=0}^{s}\theta_{ii}
		=1+\displaystyle\sum_{i=1}^{\frac{s-1}{2}}\theta_{ii}+\theta_{\frac{s+1}{2},\frac{s+1}{2}}+\displaystyle\sum_{i=\frac{s+1}{2}+1}^{s}\theta_{ii}\\
		&=&1+(2^{\frac{s-1}{2}}-1)-2^{\frac{s-1}{2}}+0\\
		&=&0,
	\end{eqnarray*}
and we have just proved that
$$\trace(\Theta_{s})=\left\lbrace\begin{array}{lll}
2^{\frac{s}{2}}&\textrm{if}& s\quad\textrm{is}\quad \textrm{even,}\\
0&\textrm{if}& s\quad \textrm{is}\quad \textrm{odd.}
\end{array}\right. $$
\item[c)]
	Setting $\Theta_{0}=1$ and applying Laplace expansion along the last column of $\Theta_{s}$ we find that 
	\begin{equation}\label{laplace}	
	\det(\Theta_{s})=2^{s-1}(-1)^{s}\det(\Theta_{s-1})-2^{s-1}(-1)^{s+1}\det(\Theta_{s-1}) = 2^s (-1)^s \det(\Theta_{s-1}) .
	\end{equation}
	for every $s \geq 1$. 
	
	This equation yields the two equalities below:
	\begin{eqnarray}
		\det(\Theta_{2t+2}) &=& 2^{2t+2}\det(\Theta_{2t+1}), \label{2t+2.2t+1}\\
		\det(\Theta_{2t+1})&=&-2^{2t+1}\det(\Theta_{2t}) \label{2t+1.2t}
	\end{eqnarray}
	for every $t \geq 0$. 
	Using \eqref{2t+2.2t+1} and \eqref{2t+1.2t}, and taking $s=2t$  we obtain, by induction on $t$, that the equality
	\begin{equation} \label{det.even.s}
	\det(\Theta_{s})=(-1)^{\frac{s}{2}}2^{\frac{s(s+1)}{2}}
	\end{equation} 
	holds for any even number $s$. Now it follows from the previous equation and from \eqref{2t+1.2t} that 
	\begin{equation}\label{det.odd.s}
	\det(\Theta_{s})=(-1)^{\frac{s+1}{2}}2^{\frac{s(s+1)}{2}}
	\end{equation}
	for any odd number $s$, concluding the proof of $c)$. 
\end{itemize}
\end{proof}

\begin{teo}\label{caracTheta} Let $m_{\Theta_{s}}$ and 
and $P_{\Theta_{s}}$ be the minimal and characteristic polynomials respectively of the matrix 
$\Theta_{s}$ given in Theorem \ref{propT}. Then $m_{\Theta_{s}}$ and $P_{\Theta_{s}}$ are given by 
$m_{\Theta_{s}}(\lambda)=(\lambda-2^{\frac{s}{2}})(\lambda+2^{\frac{s}{2}})$ and
$$p_{\Theta_{s}}(\lambda)=\left\lbrace\begin{array}{lll}
			(\lambda-2^{\frac{s}{2}})^{\frac{s+2}{2}}(\lambda+2^{\frac{s}{2}})^{\frac{s}{2}}&\textrm{if}& s\quad \textrm{is}\quad \textrm{even}\\
		(\lambda-2^{\frac{s}{2}})^{\frac{s+1}{2}}(\lambda+2^{\frac{s}{2}})^{\frac{s+1}{2}}&\textrm{if}& s\quad \textrm{is}\quad \textrm{odd}
		\end{array}\right. $$
\end{teo}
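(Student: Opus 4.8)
The plan is to deduce both statements from the three properties already established in Theorem~\ref{propT}, together with the fact that $\Theta_s$ is an $(s+1)\times(s+1)$ matrix. The starting point is item~(a): since $\Theta_s^2 = 2^s I_{s+1}$, the matrix $\Theta_s$ is annihilated by $q(\lambda) = \lambda^2 - 2^s = (\lambda - 2^{s/2})(\lambda + 2^{s/2})$, so the minimal polynomial $m_{\Theta_s}$ divides $q$. Thus $m_{\Theta_s}$ is one of $\lambda - 2^{s/2}$, $\lambda + 2^{s/2}$, or $q$ itself. The first two options would force $\Theta_s = \pm 2^{s/2} I_{s+1}$, which is impossible for $s \geq 1$ because $\theta_{0,0} = 1$ (the case $k=0$ in the definition) while $2^{s/2} > 1$. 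Hence $m_{\Theta_s}(\lambda) = (\lambda - 2^{s/2})(\lambda + 2^{s/2})$, which is the first claim.

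Next I would exploit that $m_{\Theta_s}$ is a product of two \emph{distinct} linear factors to conclude that $\Theta_s$ is diagonalizable over $\C$, with every eigenvalue equal to $2^{s/2}$ or $-2^{s/2}$ and both values actually occurring (otherwise $\Theta_s$ would again be a scalar matrix, contradicting $\theta_{0,0}=1$). Writing $a$ and $b$ for the algebraic multiplicities of $2^{s/2}$ and $-2^{s/2}$, we then have $p_{\Theta_s}(\lambda) = (\lambda - 2^{s/2})^a (\lambda + 2^{s/2})^b$, and it only remains to compute $a$ and $b$. These are pinned down by two linear relations: $a + b = s+1$ since $\Theta_s$ is $(s+1)\times(s+1)$, and $a\,2^{s/2} - b\,2^{s/2} = \trace(\Theta_s)$.

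Finally I would substitute the value of $\trace(\Theta_s)$ from item~(b). If $s$ is even, $\trace(\Theta_s) = 2^{s/2}$ gives $a - b = 1$, hence $a = (s+2)/2$ and $b = s/2$; if $s$ is odd, $\trace(\Theta_s) = 0$ gives $a = b = (s+1)/2$. Inserting these exponents into $p_{\Theta_s}(\lambda) = (\lambda - 2^{s/2})^a(\lambda + 2^{s/2})^b$ yields precisely the stated formula, and as a consistency check one verifies that the resulting determinant $(-1)^b\,2^{s(s+1)/2}$ matches item~(c). There is no genuine obstacle here: the substantive point is simply to invoke diagonalizability explicitly, since it is what licenses reading off $p_{\Theta_s}$ from the trace; the only care needed is with the degenerate case $s=0$ (where $\Theta_0 = 1$ is $1\times 1$ and the statement should be read with the convention $s\geq 1$) and with keeping the two parity computations separate.
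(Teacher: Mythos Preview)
Your proof is correct and follows essentially the same route as the paper: deduce the minimal polynomial from $\Theta_s^2=2^sI_{s+1}$ (ruling out a linear minimal polynomial because $\Theta_s$ is not scalar), then determine the exponents in the characteristic polynomial from the two linear equations given by the size $s+1$ and by $\trace(\Theta_s)$ via item~(b). Your explicit mention of diagonalizability and the determinant consistency check are nice additions but not different in substance.
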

\begin{proof}
 Item a) of Theorem \ref{propT} says that
 \[ 0 = \Theta_{s}^{2}-2^{s}I_{s+1}= (\Theta_{s}-2^{\frac{s}{2}}I_{s+1})(\Theta_{s}+2^{\frac{s}{2}}I_{s+1})
 \]
  and it follows that $m_{\Theta_{s}}(\lambda)=(\lambda-2^{\frac{s}{2}})(\lambda+2^{\frac{s}{2}})$ is the minimal polynomial of $\Theta_{s}$, since $m_{\Theta_{s}}$ is a monic polynomial of degree two such that $m_{\Theta_{s}}(\Theta_{s})=0$ and obviously no polynomial of degree one vanishes on $\Theta_{s}$. Therefore the characteristic polynomial of $\Theta_{s}$ decomposes as a product
\begin{equation}\label{carcT}
p_{\Theta_{s}}(\lambda)=(\lambda-2^{\frac{s}{2}})^{r_{1}}(\lambda+2^{\frac{s}{2}})^{r_{2}},
\end{equation}
where $r_{1}$ and $r_{2}$ are the multiplicities of the eigenvalues $\beta_{1}=2^{\frac{s}{2}}$ and $\beta_{2}=-2^{\frac{s}{2}}$ and, in particular,  $r_1 + r_2 = s+1$.
From \eqref{carcT} it follows that 
\begin{equation}
\trace(\Theta_{s}) = r_{1}(2^{\frac{s}{2}})+r_{2}(-2^{\frac{s}{2}}),
\end{equation}
and therefore for every $s \geq 1$ we have the system of equations 
\[\left\lbrace\begin{array}{ccc}
r_{1}+r_{2}&=&s+1\\
r_{1}-r_{2}&=&\dfrac{\trace(\Theta_{s})}{2^{\frac{s}{2}}} 
\end{array}\right. 
\]

If $s$ is an odd number then $\trace(\Theta_{s}) = 0$ by 
item $b)$ of Theorem \ref{propT} and, in this case, the above system has $r_{1}=r_{2}=\frac{s+1}{2}$ as unique solution. 
Therefore in the case of an odd $s$ the characteristic polynomial of $\Theta_{s}$ is $$p_{\Theta_{s}}(\lambda)=(\lambda-2^{\frac{s}{2}})^{\frac{s+1}{2}}(\lambda+2^{\frac{s}{2}})^{\frac{s+1}{2}}.$$

If  $s$ is even then $\trace(\Theta_{s}) = 2^{\frac{s}{2}}$ and the corresponding system has solution 
$r_{1}=\frac{s+2}{2}, r_{2}=\frac{s}{2}$. In this case the characteristic polynomial of $\Theta_{s}$ is $$p_{\Theta_{s}}(\lambda)=(\lambda-2^{\frac{s}{2}})^{\frac{s+2}{2}}(\lambda+2^{\frac{s}{2}})^{\frac{s}{2}}.$$
We have just proved that 
$p_{\Theta_{s}}(\lambda)=\left\lbrace\begin{array}{lll}
(\lambda-2^{\frac{s}{2}})^{\frac{s+2}{2}}(\lambda+2^{\frac{s}{2}})^{\frac{s}{2}}&\textrm{if}& s\quad \textrm{is}\quad \textrm{even}\\
(\lambda-2^{\frac{s}{2}})^{\frac{s+1}{2}}(\lambda+2^{\frac{s}{2}})^{\frac{s+1}{2}}&\textrm{if}& s\quad \textrm{is}\quad \textrm{odd}
\end{array}\right. $
\end{proof}
\color{black}

\subsection{Self-dual NRT-Codes in $M_{n,s}(\mathbb{F}_{2})$ with odd $s$}

The main purpose of this subsection is to use the properties of the shape enumerator of a self-dual NRT-code $C\subseteq M_{n,s}(\mathbb{F}_{2})$ and the properties of the matrix $\Theta_{s}$ to understand how the shape enumerator behaves. 

We recall that Theorem \ref{MacWilliamsH} says that for a self-dual NRT-code $C\subseteq M_{n,s}(\mathbb{F}_{2})$ its $H$-enumerator $H_C$ satisfies the equation
\begin{eqnarray*}
	H_{C}(z_{0},\ldots,z_{s})
	&=&H_{C}\left(\frac{\Theta_{s}}{2^{\frac{s}{2}}}(z_{0},\ldots,z_{s})\right).
\end{eqnarray*} 
In polynomial invariant theory language, this is equivalent to saying that  $H_{C}$ is invariant by 
\begin{equation}\label{Tger}
T=\frac{\Theta_{s}}{2^{\frac{s}{2}}}.
\end{equation}
So $H_C$ will be invariant under the group $G=\langle T\rangle$. Note also that by item a) of Theorem \ref{propT}
\[T^{2}=\left(\frac{\Theta_{s}}{2^{\frac{s}{2}}}\right)^{2}=\frac{\Theta_{s}^{2}}{2^{s}}=I_{s+1},\]
and the group $G$ is given by $G=\langle T\rangle=\lbrace I_{s+1},T\rbrace$. 

Let us calculate the Molien's series of $G$ which tells us what kind of invariants we should look for. We can use the properties c) of Theorem \ref{propT} and Theorem \ref{caracTheta}, since $s$ is odd the Molien's series of $G$ can be written as

\begin{eqnarray*}
	\Phi_{G}(\lambda)
	&=&\frac{1}{2}\sum_{A\in G}\frac{\det(A)}{\det(A-\lambda I_{s+1})}\\
	&=&\frac{1}{2}\left[\frac{(-1)^{\frac{s+1}{2}}}{(-1)^{\frac{s+1}{2}}(1-\lambda)^{\frac{s+1}{2}}(1+\lambda)^{\frac{s+1}{2}}}+\frac{1}{(1-\lambda)^{s+1}}\right]\\
	&=&\frac{(1-\lambda)^{\frac{s+1}{2}}+(1+\lambda)^{\frac{s+1}{2}}}{2(1+\lambda)^{\frac{s+1}{2}}(1-\lambda)^{s+1}}\\
	&=&\frac{\displaystyle\sum_{k=0}^{\frac{s+1}{2}}\binom{\frac{s+1}{2}}{k}\lambda^{k} +\displaystyle\sum_{k=0}^{\frac{s+1}{2}}\binom{\frac{s+1}{2}}{k}(-\lambda)^{k}}
	{2 (1-\lambda)^{\frac{s+1}{2}}(1-\lambda^{2})^{\frac{s+1}{2}}}.
\end{eqnarray*}
Consider the subcase of  $\frac{s+1}{2}$ even, that is,   $\frac{s+1}{2}=2t$ for some $t \geq 0$. 
The Molien's series of $G$ can be rewritten as 
\begin{eqnarray*}
	\Phi_{G}(\lambda)
	&=&\frac{\displaystyle\sum_{k=0}^{2t}\binom{2t}{k}\lambda^{k}+\displaystyle\sum_{k=0}^{2t}\binom{2t}{k}(-\lambda)^{k}}{2(1-\lambda)^{2t}(1-\lambda^{2})^{2t}}\\
	&=&\frac{2\displaystyle\sum_{l=0}^{t}\binom{2t}{2l}\lambda^{2l}}{2(1-\lambda)^{2t}(1-\lambda^{2})^{2t}}.
\end{eqnarray*}
hence in this case
\begin{equation}
\Phi_{G}(\lambda)=\frac{\displaystyle\sum_{l=0}^{t}\binom{2t}{2l}\lambda^{2l}}{(1-\lambda)^{2t}(1-\lambda^{2})^{2t}}.
\end{equation}
Note that the term $(1-\lambda)^{2t}$ in the denominator indicates that to form a good basis we should look for $\frac{s+1}{2}$ invariants of degree one, but by Theorem \ref{invgrau1} and item b) of Theorem \ref{propT} there are no invariant polynomial under $G$ of degree one. So we are not using all the information about the shape enumerator of $C$.

Since the dimension of $C$ is $k=\frac{ns}{2}$ and $s$ is an odd number we must have $n$ even, 
which implies that $H_{C}$ will be invariant by $-I$, so $H_{C}$ is invariant under the action of the group
\[ G_{1}:=\lbrace -I,I,-T,T\rbrace .\]
It is easy to see that the Molien series of $G_{1}$ can be written as
\[\Phi_{G_{1}}(\lambda)=\frac{1}{2}\left(\Phi_{G}(\lambda)+\Phi_{G}(-\lambda)\right)\]
and, so
\begin{eqnarray*}
	\Phi_{G_{1}}(\lambda)
	&=&\frac{1}{2}\left[\frac{\displaystyle\sum_{l=0}^{t}\binom{2t}{2l}\lambda^{2l}}{(1-\lambda)^{2t}(1-\lambda^{2})^{2t}}+ \frac{\displaystyle\sum_{l=0}^{t}\binom{2t}{2l}\lambda^{2l}}{(1+\lambda)^{2t}(1-\lambda^{2})^{2t}}\right] \\
	&=&\frac{1}{2}\frac{\displaystyle\sum_{l=0}^{t}\binom{2t}{2l}\lambda^{2l}\left[2\displaystyle\sum_{l=0}^{t}\binom{2t}{2l}\lambda^{2l}\right]}{(1-\lambda^{2})^{4t}} \\
	&=&\frac{\left[\displaystyle\sum_{l=0}^{t}\binom{2t}{2l}\lambda^{2l}\right]^{2}}{(1-\lambda^{2})^{4t}}.
\end{eqnarray*}

Now, if we consider the subcase where $\frac{s+1}{2}$ is odd, $\frac{s+1}{2}=2t+1$, for some $t \geq 0$, then proceeding in the same way as in the even case one obtains the following expression for the Molien's series: 
\[\Phi_{G_{1}}(\lambda)=\frac{\left[\displaystyle\sum_{l=0}^{t}\binom{2t+1}{2l}\lambda^{2l}\right]^{2}}{(1-\lambda^{2})^{4t+2}}.\]

\noindent In short, we have proved the following result: 
\begin{teo}\label{MOliodd} Let $C\subseteq M_{n,s}(\mathbb{F}_{2})$ be a self-dual NRT-code and suppose that $s$ is an odd number. The shape enumerator of $C$ will be an invariant polynomial under the group $G_{1}:=\left\lbrace I,-I,T,-T\right\rbrace$ where $T$is given by (\ref{Tger}) and the Molien's series of $G_{1}$ is
$$\Phi_{G_{1}}(\lambda)=\left\lbrace\begin{array}{lll}
\frac{\left[\displaystyle\sum_{l=0}^{t}\binom{2t}{2l}\lambda^{2l}\right]^{2}}{(1-\lambda^{2})^{4t}}&\text{if}&\frac{s+1}{2}=2t, t=0,1,\ldots\\
\frac{\left[\displaystyle\sum_{l=0}^{t}\binom{2t+1}{2l}\lambda^{2l}\right]^{2}}{(1-\lambda^{2})^{4t+2}}&\text{if}&\frac{s+1}{2}=2t+1, t=0,1,\ldots
\end{array}\right. .$$
\end{teo}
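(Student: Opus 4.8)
The plan is to assemble the statement from three ingredients already established in the excerpt: (i) the fact that the shape enumerator of a self-dual code is invariant under $T = \Theta_s/2^{s/2}$ (from Theorem~\ref{MacWilliamsH} together with $|C| = 2^{ns/2}$); (ii) the dimension count $k = ns/2$, which forces $n$ to be even when $s$ is odd, hence forces invariance under $-I$; and (iii) the eigenvalue data for $\Theta_s$ from Theorem~\ref{propT}(b)(c) and Theorem~\ref{caracTheta}. The conclusion about $H_C$ being invariant under $G_1 = \{I,-I,T,-T\}$ is essentially immediate once these are in place; the substance of the theorem is the explicit closed form for $\Phi_{G_1}(\lambda)$, so the proof is really a Molien-series computation.

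First I would verify $G_1$ is a group of order $4$: since $T^2 = I_{s+1}$ by Theorem~\ref{propT}(a) and $-I$ is central and of order $2$, we get $G_1 \cong \mathbb{Z}/2 \times \mathbb{Z}/2$, with the four elements as listed. Then I would write down the Molien series via
\[
\Phi_{G_1}(\lambda) = \frac{1}{4}\sum_{A \in G_1}\frac{\det(A)}{\det(A - \lambda I_{s+1})}.
\]
Because $\det(-A) = (-1)^{s+1}\det(A)$ and $\det(-A - \lambda I) = (-1)^{s+1}\det(A + \lambda I)$, the contributions of $-I$ and $-T$ are exactly the contributions of $I$ and $T$ with $\lambda \mapsto -\lambda$; hence $\Phi_{G_1}(\lambda) = \tfrac12(\Phi_G(\lambda) + \Phi_G(-\lambda))$, which is the identity already used in the excerpt. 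For $\Phi_G(\lambda)$ I would use Theorem~\ref{caracTheta}: for odd $s$ the eigenvalues $\pm 2^{s/2}$ each have multiplicity $\tfrac{s+1}{2}$, so $\det(T - \lambda I) = (1-\lambda)^{(s+1)/2}(-1)^{(s+1)/2}(1+\lambda)^{(s+1)/2}$ (after dividing through by $2^{s/2}$ in each of the $s+1$ rows), and $\det(T) = \det(\Theta_s)/2^{s(s+1)/2} = (-1)^{(s+1)/2}$ by Theorem~\ref{propT}(c). Plugging in gives
\[
\Phi_G(\lambda) = \frac{(1-\lambda)^{(s+1)/2} + (1+\lambda)^{(s+1)/2}}{2\,(1+\lambda)^{(s+1)/2}(1-\lambda)^{s+1}},
\]
exactly as displayed before the theorem.

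From there it is bookkeeping with binomial expansions. Setting $m = \tfrac{s+1}{2}$, the numerator $(1-\lambda)^m + (1+\lambda)^m$ equals $2\sum_{l}\binom{m}{2l}\lambda^{2l}$ (odd terms cancel). Symmetrizing in $\lambda \mapsto -\lambda$ and multiplying denominators, the $(1-\lambda)^m$ and $(1+\lambda)^m$ factors combine into $(1-\lambda^2)^m$, and one obtains
\[
\Phi_{G_1}(\lambda) = \frac{\bigl[\sum_{l}\binom{m}{2l}\lambda^{2l}\bigr]^2}{(1-\lambda^2)^{2m}}.
\]
Finally, splitting into the cases $m = 2t$ and $m = 2t+1$ (i.e. $\tfrac{s+1}{2}$ even or odd) rewrites $\binom{m}{2l}$ as $\binom{2t}{2l}$ or $\binom{2t+1}{2l}$ and $(1-\lambda^2)^{2m}$ as $(1-\lambda^2)^{4t}$ or $(1-\lambda^2)^{4t+2}$, which is precisely the stated formula. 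The odd subcase is handled "in the same way," so I would only spell out the even one in detail and indicate the parallel computation for the odd one.

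The main obstacle is purely notational rather than conceptual: one has to be careful that $\Phi_G(\lambda)$ itself is \emph{not} in the shape of a good-basis Molien series — as the excerpt notes, the factor $(1-\lambda)^m$ would demand $m$ degree-one invariants, which do not exist by Theorem~\ref{propT}(b) and Theorem~\ref{invgrau1}. The point of passing to $G_1$ is exactly to absorb those phantom degree-one invariants; the slightly delicate step is recognizing that symmetrization turns $(1-\lambda)^m(1+\lambda)^m$ into $(1-\lambda^2)^m$ cleanly, so that the final denominator is a pure power of $(1-\lambda^2)$ and the numerator a perfect square. Everything else is a routine binomial manipulation, and the group-theoretic claim (invariance under $G_1$) follows from the already-proven $n$-even observation with no extra work.
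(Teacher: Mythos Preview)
Your proposal is correct and follows essentially the same route as the paper: establish invariance under $T$ via the MacWilliams identity, deduce invariance under $-I$ from the parity constraint on $n$, compute $\Phi_G(\lambda)$ from the eigenvalue data in Theorem~\ref{caracTheta}, and then obtain $\Phi_{G_1}(\lambda)=\tfrac{1}{2}\bigl(\Phi_G(\lambda)+\Phi_G(-\lambda)\bigr)$ by the binomial manipulation you outline. Your use of the unified parameter $m=\tfrac{s+1}{2}$ before the case split, and your explicit justification of the symmetrization identity via $\det(-A-\lambda I)=(-1)^{s+1}\det(A+\lambda I)$, are slightly cleaner than the paper's presentation but not substantively different.
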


\noindent Thus, Theorem \ref{MOliodd} gives us an expectation of how many algebraically independent invariants we must find to form a base of invariants for $\mathcal{J}(G_{1})$.

Note that in the case $s=1$ we have $\frac{s+1}{2}=1=2(0)+1$ and so the Molien's series of $G_{1}$ is given by
\begin{equation*}
	\Phi_{G_{1}}(\lambda)=\frac{\left[\displaystyle\sum_{l=0}^{0}\binom{1}{2l}\lambda^{2l}\right]}{(1-\lambda^{2})^{2}}=\frac{1}{(1-\lambda^{2})^{2}},
\end{equation*}
which agrees with \cite{slone}, where it was shown by MacWilliams et al. that the Hamming weight enumerator of a binary self-dual code is a polynomial in two polynomials $p_{1}(x,y)$ and $p_{2}(x,y)$ where $\deg p_{1}=\deg p_{2}=2$. This fact was expected since the Hamming metric coincides with the NRT-metric in the case where $s=1$ and the shape enumerator is the Hamming weight enumerator.

\subsection{Self-dual NRT-Codes in $M_{n,s}(\mathbb{F}_{2})$ with even $s$}

Following the same steps as in the previous subsection we can prove an analogous result for the case of an $s$ even.

\begin{teo}\label{evens}
Let $C\subseteq M_{n,s}(\mathbb{F}_{2})$ be a self-dual NRT-code, such that $s$ is an even number. The shape enumerator of $C$ will be an invariant polynomial under the group $G:=\lbrace I,T\rbrace$ where $T$ is given by (\ref{Tger}) and the Molien's series of $G_{1}$ is

$$\Phi_{G}(\lambda)=\left\lbrace\begin{array}{lll}
	\frac{\displaystyle\sum_{l=0}^{t}\binom{2t}{2l}\lambda^{2l}}{(1-\lambda^{2})^{2t}(1-\lambda)^{2t+1}}&\textrm{if}&\frac{s}{2}=2t, t=1,\ldots\\
	\frac{\displaystyle\sum_{l=0}^{t}\binom{2t+1}{2l}\lambda^{2l}}{(1-\lambda^{2})^{2t+1}(1-\lambda)^{2t+2}}&\textrm{if}&\frac{s}{2}=2t+1, t=0,1,\ldots
	\end{array}\right. .$$
\end{teo}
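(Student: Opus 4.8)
\section*{Proof proposal for Theorem \ref{evens}}

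The plan is to mirror the argument used for Theorem~\ref{MOliodd}, the one genuine difference being that here we cannot enlarge the group by $-I_{s+1}$: since $s$ is even, the self-dual dimension $k=\tfrac{ns}{2}$ is an integer for every $n$, so there is no parity constraint on $n$ forcing $H_C$ to be $(-I)$-invariant. Hence the relevant group is exactly $G=\langle T\rangle$, and the whole content of the theorem is (i) that $H_C\in\mathcal{J}(G)$, and (ii) a computation of $\Phi_G(\lambda)$.

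For (i) I would apply Theorem~\ref{MacWilliamsH} with $C=C^{\perp}$ and $|C|=2^{ns/2}$ to get $H_C(Z)=2^{-ns/2}H_C(\Theta_sZ)$; since $H_C$ is homogeneous of degree $n$, $H_C(\Theta_sZ)=\bigl(2^{s/2}\bigr)^{n}H_C(TZ)=2^{ns/2}H_C(TZ)$ with $T=\Theta_s/2^{s/2}$ as in (\ref{Tger}), so $H_C(Z)=H_C(TZ)$, i.e. $T\cdot H_C=H_C$. By item~a) of Theorem~\ref{propT}, $T^{2}=\Theta_s^{2}/2^{s}=I_{s+1}$, so $G=\langle T\rangle=\{I_{s+1},T\}$ has order two and $H_C\in\mathcal{J}(G)$.

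For (ii) I would compute $\Phi_G(\lambda)=\tfrac12\bigl(\det(I-\lambda I)^{-1}+\det(T)\det(T-\lambda I)^{-1}\bigr)$. The identity term contributes $(1-\lambda)^{-(s+1)}$. For the $T$-term: by item~c) of Theorem~\ref{propT} with $s$ even, $\det(T)=2^{-s(s+1)/2}\det(\Theta_s)=(-1)^{s/2}$; and by Theorem~\ref{caracTheta} the eigenvalues of $\Theta_s$ are $2^{s/2}$ and $-2^{s/2}$ with multiplicities $\tfrac{s+2}{2}$ and $\tfrac{s}{2}$, so those of $T$ are $1$ and $-1$ with the same multiplicities, whence $\det(T-\lambda I)=(1-\lambda)^{(s+2)/2}(-1)^{s/2}(1+\lambda)^{s/2}$. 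The factors $(-1)^{s/2}$ cancel, and clearing denominators via $(1-\lambda)^{s+1}=(1-\lambda)^{(s+2)/2}(1-\lambda)^{s/2}$ and $(1-\lambda)^{s/2}(1+\lambda)^{s/2}=(1-\lambda^2)^{s/2}$ gives
\[
\Phi_G(\lambda)=\frac12\left(\frac{1}{(1-\lambda)^{s+1}}+\frac{1}{(1-\lambda)^{(s+2)/2}(1+\lambda)^{s/2}}\right)
=\frac{(1+\lambda)^{s/2}+(1-\lambda)^{s/2}}{2\,(1-\lambda)^{(s+2)/2}(1-\lambda^{2})^{s/2}} .
\]

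Finally I would expand the numerator by the binomial theorem, $(1+\lambda)^{s/2}+(1-\lambda)^{s/2}=2\sum_{l}\binom{s/2}{2l}\lambda^{2l}$ (the odd powers cancel), and split into the two sub-cases. If $\tfrac{s}{2}=2t$, then $(s+2)/2=2t+1$, $s/2=2t$ and $l$ runs over $0,\dots,t$, giving $\Phi_G(\lambda)=\bigl(\sum_{l=0}^{t}\binom{2t}{2l}\lambda^{2l}\bigr)\big/\bigl((1-\lambda)^{2t+1}(1-\lambda^{2})^{2t}\bigr)$; if $\tfrac{s}{2}=2t+1$, then $(s+2)/2=2t+2$, $s/2=2t+1$ and again $l$ runs over $0,\dots,t$, giving $\Phi_G(\lambda)=\bigl(\sum_{l=0}^{t}\binom{2t+1}{2l}\lambda^{2l}\bigr)\big/\bigl((1-\lambda)^{2t+2}(1-\lambda^{2})^{2t+1}\bigr)$. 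These are exactly the two expressions in the statement. There is no serious obstacle beyond bookkeeping; the only point that needs care is the cancellation of the sign $(-1)^{s/2}$ between $\det(T)$ and $\det(T-\lambda I)$, and then reading off the correct summation range in each parity sub-case.
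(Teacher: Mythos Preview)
Your proposal is correct and follows exactly the route the paper indicates: the paper's own proof consists only of the sentence ``Following the same steps as in the previous subsection we can prove an analogous result for the case of an $s$ even,'' and your computation is precisely that analogue, including the correct observation that for even $s$ one cannot enlarge $G$ by $-I_{s+1}$.
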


\noindent Theorem \ref{evens} gives us an expectation of how many algebraically independent invariants we must find to form a base of invariants for $\mathcal{J}(G)$.

Note that for $s=2$ we have $\frac{s}{2}=1=2(0)+1$ and, so the Molien's series of $G$ is given by
\[
\Phi_{G}(\lambda)=\frac{\displaystyle\sum_{l=0}^{0}\binom{1}{2l}\lambda^{2l}}{(1-\lambda^{2})(1-\lambda)^{2}}=\frac{1}{(1-\lambda^{2})(1-\lambda)^{2}},
\]
which matches the result that we found in Section 3.1.
\section{Construction of Self-Dual Codes in the NRT-Metric}
In this last section we present some constructions of self-dual codes in NRT spaces. 
The first one utilizes a self-dual code in Hamming space as its starting point.

\subsection{Self-Dual Codes in NRT Spaces from Self-Dual codes in Hamming Spaces}

\begin{defn} Given a vector $v=(v_{1},\ldots,v_{s-1},v_{s})\in\mathbb{F}_{q}^{s}$ the flip of $v$, denoted by $\flip(v)$, is the vector $\flip(v)=(v_{s},v_{s-1},\ldots,v_{1})\in\mathbb{F}_{q}^{s}$.
\end{defn}

\begin{ob}
Let $\flip:\mathbb{F}_{q}^{s}\longrightarrow\mathbb{F}_{q}^{s}$ be the function taking $v \in\mathbb{F}_{q}^{s}$ to its flip. Then
\begin{itemize}
	\item[a)] For any $s\in\mathbb{N}$, $\flip:\mathbb{F}_{q}^{s}\longrightarrow\mathbb{F}_{q}^{s}$ is a linear operator.
	\item[b)] 
	If $s=1$ then $\flip\equiv I$ where $I$ denotes the identity operator.
	\item[c)] If $\langle,\rangle_H$ is the standard inner product on $\mathbb{F}_{q}^{s}$ and $v,u\in\mathbb{F}_{q}^{s}$, then $\langle \flip(v),\flip(u)\rangle_H=\langle v,u\rangle_H .$ 
\end{itemize}
\end{ob}

In the next theorem we present a construction of a self-orthogonal NRT-code that is derived from a code $C\subseteq\mathbb{F}_{q}^{s}$ over the Hamming space with the standard inner product  $\langle ,\rangle_{H}$.
\begin{teo}\label{const1}
Let $C\subseteq\mathbb{F}_{q}^{s}$ be an $[s,k]$-linear code over the Hamming space and let $C^{\perp}$ its dual code with respect to the standard inner product. Then, the code $C_{o}\subseteq M_{1,2s}(\mathbb{F}_{q})$ given by

\[ C_{o}:=\lbrace (v,\flip(u))\ \text{such that}\ v\in C\ \text{and}\ u\in C^{\perp}\rbrace\]
is a $[2s,k + k^\perp]$-self orthogonal code with respect to the NRT-metric where $k^\perp=\dim(C^{\perp})$.
\end{teo}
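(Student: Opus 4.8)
The plan is to reduce the NRT-orthogonality of two codewords of $C_{o}$ to the Hamming-orthogonality of a vector of $C$ against a vector of $C^{\perp}$, using the three elementary properties of $\flip$ recorded in the Remark above. \textbf{Step 1 (rewrite the NRT pairing via $\flip$).} For rows $w,w'\in\mathbb{F}_{q}^{m}$, the defining formula of the NRT inner product together with the definition of $\flip$ on $\mathbb{F}_{q}^{m}$ gives
\[
\langle w,w'\rangle_{N}=\sum_{j=1}^{m}w_{j}w'_{m+1-j}=\langle w,\flip(w')\rangle_{H},
\]
since $(\flip(w'))_{j}=w'_{m+1-j}$. I would also record the (equally immediate) fact that $\flip$ applied to a juxtaposition of two blocks of length $s$ reverses the order of the blocks and flips each, i.e.\ $\flip(a,b)=(\flip(b),\flip(a))$ for $a,b\in\mathbb{F}_{q}^{s}$, and that $\flip\circ\flip$ is the identity.

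\textbf{Step 2 (the main computation).} Take arbitrary codewords $(v,\flip(u))$ and $(v',\flip(u'))$ of $C_{o}$, with $v,v'\in C$ and $u,u'\in C^{\perp}$. Applying Step 1 with $m=2s$, then the concatenation rule and $\flip\circ\flip=I$, one gets
\[
\langle (v,\flip(u)),(v',\flip(u'))\rangle_{N}=\langle (v,\flip(u)),\flip(v',\flip(u'))\rangle_{H}=\langle (v,\flip(u)),(u',\flip(v'))\rangle_{H}.
\]
Splitting the length-$2s$ Hamming pairing into its two length-$s$ blocks and then invoking part c) of the Remark, this equals $\langle v,u'\rangle_{H}+\langle \flip(u),\flip(v')\rangle_{H}=\langle v,u'\rangle_{H}+\langle u,v'\rangle_{H}$. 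The first term vanishes because $v\in C$ and $u'\in C^{\perp}$; the second vanishes because $v'\in C$ and $u\in C^{\perp}$. Hence the whole expression is $0$, and since the codewords were arbitrary, $C_{o}\subseteq C_{o}^{\perp}$.

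\textbf{Step 3 (linearity and dimension).} Since $\flip:\mathbb{F}_{q}^{s}\to\mathbb{F}_{q}^{s}$ is a linear isomorphism by part a) of the Remark, $C_{o}=C\times\flip(C^{\perp})$ is an external direct product of a subspace of the first $s$ coordinates with a subspace of the last $s$ coordinates of $\mathbb{F}_{q}^{2s}$; in particular it is linear and $\dim C_{o}=\dim C+\dim\flip(C^{\perp})=k+k^{\perp}$. Together with the length $2s$ and Step 2, this shows $C_{o}$ is a $[2s,k+k^{\perp}]$ self-orthogonal NRT-code, proving the theorem.

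There is no genuine obstacle here; the only thing requiring care is the index bookkeeping of Step 1 --- namely that the NRT pairing is exactly the Hamming pairing precomposed with $\flip$, and that flipping a concatenation swaps the blocks. Isolating those two facts up front turns the heart of the argument into a one-line application of part c) of the Remark, and one could alternatively bypass Step 1 by expanding $\langle(v,\flip(u)),(v',\flip(u'))\rangle_{N}$ coordinate by coordinate to reach the same $\langle v,u'\rangle_{H}+\langle u,v'\rangle_{H}$. It is worth noting in passing that no hypothesis on $C$ is used, and that since $\dim C+\dim C^{\perp}=s$ for any $C\subseteq\mathbb{F}_{q}^{s}$, the code $C_{o}$ has dimension $s=\tfrac{1}{2}(2s)$ and the NRT inner product is non-degenerate, so $C_{o}$ is in fact self-dual; I would mention this but keep the statement as given.
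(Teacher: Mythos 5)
Your proof is correct and follows essentially the same route as the paper's: both arguments reduce the NRT pairing of two codewords of $C_{o}$ to $\langle v,u'\rangle_{H}+\langle u,v'\rangle_{H}$ using the flip properties (your Step 1 merely packages the paper's direct expansion as ``NRT pairing equals Hamming pairing precomposed with $\flip$'' plus the block-swap rule), and both kill the two terms by Hamming duality, with the dimension count handled as you do in Step 3. Your closing observation that $C_{o}$ is in fact self-dual is a valid consequence of $k+k^{\perp}=s$ together with the paper's dimension relation $\dim C_{o}+\dim C_{o}^{\perp}=2s$, so it is a harmless (and correct) strengthening beyond the stated self-orthogonality.
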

\begin{proof}
Indeed, let $\left( v_{1},\flip(u_{1})\right),\left(v_{2},\flip(u_{2})\right)\in C_{o}$ where $v_{1},v_{2}\in C$ and $u_{1},u_{2}\in C^{\perp}$. Then
\begin{eqnarray*}
		\langle\left( v_{1},\flip(u_{1})\right),\left(v_{2},\flip(u_{2})\right)\rangle_{N}
		&=&\langle v_{1},u_{2}\rangle_{H}+\langle \flip(u_{1}),\flip(v_{2})\rangle_{H}\\
		&=&\langle v_{1},u_{2}\rangle_{H}+\langle u_{1},v_{2}\rangle_{H}\\
		&=& 0,
\end{eqnarray*}
which means that $C_{o}\subseteq\left(C_{o}\right)^{\perp}$. It's easy to check that $\dim C_{o}=k+k^\perp$.
\end{proof}

\begin{Exemp} Let $C\subseteq\mathbb{F}_{2}^{2}$ be the $[2,1]$-linear code given by $C:=\lbrace (0,0),(1,0)\rbrace$ than its dual code $C^{\perp}$ is given by $C^{\perp}=\lbrace (0,0),(0,1)\rbrace$ and, the code $C_{o}\subseteq M_{1,4}(\mathbb{F}_{2})$ of Theorem \ref{const1} is the following $[4,2]$-self orthogonal NRT-code 
\[C_{o}:=\lbrace(0,0,0,0),(0,0,1,0),(1,0,0,0),(1,0,1,0)\rbrace.\]
Note that if we consider $C_{o}$ as an $[4,2]$-code over the Hamming space $\mathbb{F}_{2}^{4}$, $C_{o}$ is not a self-orthogonal code since $(1,0,0,0)\in C_{o}$ but $(1,0,0,0)\notin C_{o}$. Note also that $C_{o}$ is not a self-dual NRT-code since $(0,1,0,1)\notin C_{o}$ but $(0,1,0,1)\in C_{o}^{\perp}$.
\end{Exemp}

\begin{teo}\label{Const2a}
Let $C\subseteq\mathbb{F}_{q}^{s}$ be an $[s,k]$-self orthogonal code over the Hamming space. Then, the code 
\[C_{ort}:=\lbrace (v,\flip(v))\in M_{1,2s}(\mathbb{F}_{q})\ \text{such that}\ v\in C\rbrace\]
is an $[2s,k]$-self orthogonal code with respect to the NRT-metric.
\end{teo}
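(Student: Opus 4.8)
The plan is to mimic the proof of Theorem \ref{const1}: take two arbitrary codewords of $C_{ort}$, compute their NRT-inner product by a direct index manipulation, and observe that it vanishes precisely because $C$ is self-orthogonal in the Hamming metric. The only new input needed is a pairing identity for concatenations of the form $(v,\flip(w))$, after which everything reduces to Remark (c) on $\flip$ and the hypothesis $C\subseteq C^{\perp}$.

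First I would establish the key identity. Writing a generic element of $M_{1,2s}(\mathbb{F}_{q})$ as a concatenation $(v,\flip(w))$ with $v,w\in\mathbb{F}_{q}^{s}$, and using $\langle a,b\rangle_{N}=\sum_{j=1}^{2s}a_{j}b_{2s+1-j}$, a substitution $j\mapsto 2s+1-j$ shows that the reversal in the second block exactly undoes the reversal built into the NRT-pairing on the first block, and conversely, so that
\[
\langle (v,\flip(w)),(x,\flip(y))\rangle_{N}=\langle v,y\rangle_{H}+\langle w,x\rangle_{H}.
\]
(This is essentially the computation already carried out in the proof of Theorem \ref{const1}, now specialized to $u_{i}=v_{i}$; alternatively one reads it off directly from $\langle\flip(u_1),\flip(v_2)\rangle_H=\langle u_1,v_2\rangle_H$.) I expect no genuine obstacle here: it is pure bookkeeping, the only point requiring care being the index substitution splitting the sum into its two blocks of size $s$.

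Applying this identity with $w=v$ and $y=x$, for $(v_{1},\flip(v_{1})),(v_{2},\flip(v_{2}))\in C_{ort}$ with $v_{1},v_{2}\in C$ I obtain
\[
\langle (v_{1},\flip(v_{1})),(v_{2},\flip(v_{2}))\rangle_{N}=\langle v_{1},v_{2}\rangle_{H}+\langle v_{1},v_{2}\rangle_{H}=0,
\]
since $C\subseteq C^{\perp}$ forces each summand $\langle v_{1},v_{2}\rangle_{H}$ to vanish; hence $C_{ort}\subseteq C_{ort}^{\perp}$. Finally, the map $v\mapsto(v,\flip(v))$ is $\mathbb{F}_{q}$-linear (part (a) of the Remark on $\flip$) and injective, because $v$ is recovered from the first $s$ coordinates; therefore $\dim C_{ort}=\dim C=k$, and $C_{ort}$ is an $[2s,k]$ self-orthogonal NRT-code, as claimed.
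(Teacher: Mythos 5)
Your proposal is correct and follows essentially the same route as the paper: the paper's proof computes $\langle (v,\flip(v)),(u,\flip(u))\rangle_{N}=2\langle v,u\rangle_{H}=0$ directly, which is exactly your identity specialized to $w=v$, $y=x$, with the vanishing coming from self-orthogonality of $C$ (not from the factor $2$), and the dimension count is handled the same way. No gaps.
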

\begin{proof}
Let $\mathbf{v}=(v,\flip(v)),\mathbf{u}=(u,\flip(u))\in C_{ort}$ with $v,u\in C$. Then
\[
\langle\mathbf{v},\mathbf{u}\rangle_{N}
 = \langle (v,\flip(v)),(u,\flip(u))\rangle_{N} 
 = 2\langle v, u\rangle_{H} = 0
\]
since $v,u\in C$ and $C$ is a self-orthogonal code over the Hamming space. Clearly, $\dim C_{ort}=k$, so $C_{ort}$ is an $[2s,k]$-self orthogonal NRT-code. 
\end{proof}

\begin{Exemp}
Let $C$ be the $[3,1]$-self orthogonal code given by $C=\lbrace (0,0,0),(1,1,0)\rbrace$ over the Hamming space $\mathbb{F}_{2}^{3}$ and, so the code $C_{ort}\subseteq M_{1,6}(\mathbb{F}_{2})$ of Theorem \ref{Const2a} is the following $[6,1]$-self orthogonal NRT-code
\[C_{ort}=\lbrace (0,0,0,0,0,0), (1,1,0,0,1,1)\rbrace.\]
\end{Exemp}

\begin{teo}\label{Const2}
Let $C\subseteq\mathbb{F}_{q}^{s}$ be an $[s,k]$-self dual code over the Hamming space. Then, the code $C_{N}\subseteq M_{1,2s}(\mathbb{F}_{q})$ given by
\[ C_{N}:=\lbrace (v,\flip(v^{\prime}))\ \text{such that}\ v,v^{\prime}\in C\rbrace\] 
is an $[2s,2k]$-self dual with respect to the NRT-metric.
\end{teo}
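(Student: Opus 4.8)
The plan is to show directly from the definition of the NRT inner product that $C_N$ is self-orthogonal, and then to verify that its dimension is exactly $2k$, which forces $C_N = C_N^\perp$ since $\dim C_N + \dim C_N^\perp = 2s$.

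First I would take two generic codewords of $C_N$, say $\mathbf{a} = (v, \flip(v'))$ and $\mathbf{b} = (w, \flip(w'))$ with $v, v', w, w' \in C$, and compute $\langle \mathbf{a}, \mathbf{b}\rangle_N$. Viewing $\mathbf{a}$ and $\mathbf{b}$ as single rows of length $2s$, the NRT inner product pairs the $j$-th entry of $\mathbf{a}$ with the $(2s+1-j)$-th entry of $\mathbf{b}$. Splitting the sum into the contribution of the first $s$ coordinates against the last $s$ coordinates of $\mathbf{b}$, and vice versa, the pairing of $v$ (positions $1,\dots,s$) against the reversed block $\flip(w')$ (positions $s+1,\dots,2s$) yields $\langle v, w'\rangle_H$, since reversing $w'$ and then reading it backwards returns $w'$ in order; symmetrically the pairing of $\flip(v')$ against $w$ gives $\langle v', w\rangle_H$. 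Hence $\langle \mathbf{a}, \mathbf{b}\rangle_N = \langle v, w'\rangle_H + \langle v', w\rangle_H$. Because $C$ is self-dual over the Hamming space, $\langle v, w'\rangle_H = 0$ and $\langle v', w\rangle_H = 0$, so $\langle \mathbf{a}, \mathbf{b}\rangle_N = 0$ and therefore $C_N \subseteq C_N^\perp$. This is essentially the same bilinear bookkeeping already used in the proof of Theorem~\ref{const1}, specialized to the case $C^\perp = C$ with an independent choice of $v$ and $v'$.

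Next I would compute $\dim C_N$. The map $C \times C \to M_{1,2s}(\mathbb{F}_q)$ sending $(v, v')$ to $(v, \flip(v'))$ is $\mathbb{F}_q$-linear (using part (a) of the Remark that $\flip$ is linear) and injective, since from $(v,\flip(v'))$ one recovers $v$ from the first $s$ coordinates and $v'$ from the last $s$ coordinates (flip is a bijection). Its image is $C_N$, so $\dim C_N = \dim(C \times C) = 2k$. Since $C$ is self-dual of length $s$ we have $k = s/2$, whence $\dim C_N = s = 2s - s$; combined with $C_N \subseteq C_N^\perp$ and the dimension identity $\dim C_N + \dim C_N^\perp = 2s$, we get $\dim C_N^\perp = s = \dim C_N$, so the inclusion $C_N \subseteq C_N^\perp$ is an equality and $C_N$ is a $[2s, 2k]$ self-dual NRT-code.

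I do not expect any serious obstacle here; the only point that needs a little care is the index bookkeeping in the first step — making sure that the flip in the second block exactly cancels the reversal built into $\langle\,,\rangle_N$, so that one genuinely obtains $\langle v, w'\rangle_H + \langle v', w\rangle_H$ rather than, say, $\langle v, \flip(w')\rangle_H$ terms. Writing out the pairing $\sum_{j=1}^{s} v_j (\flip(w'))_{s+1-j}$ and substituting $(\flip(w'))_{s+1-j} = w'_j$ makes this transparent. Everything else is the dimension count and an appeal to the standing identity $\dim C + \dim C^\perp = ns$ with $n=1$ and ambient length $2s$.
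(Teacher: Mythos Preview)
Your proposal is correct and follows essentially the same route as the paper: first verify self-orthogonality by unwinding the NRT pairing on $(v,\flip(v'))$ and $(w,\flip(w'))$, then establish $\dim C_N = 2k$ (you use the injective linear map $(v,v')\mapsto(v,\flip(v'))$, the paper writes down the explicit basis $\{(v_i,0),(0,\flip(v_i))\}$), and conclude via $\dim C_N + \dim C_N^{\perp} = 2s$. Your cross-term formula $\langle v,w'\rangle_H + \langle v',w\rangle_H$ is in fact the correct one (and matches the computation in Theorem~\ref{const1}); the paper's displayed expression $\langle v_1,u_1\rangle_H + \langle v'_1,u'_1\rangle_H$ is a harmless slip since all four vectors lie in the self-dual code $C$ anyway.
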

\begin{proof}
Let $\mathbf{v}=(v_{1},\flip(v^{\prime}_{1})),\mathbf{u}=(u_{1},\flip(u^{\prime}_{1}))\in C_{N}$ with $v_{1},v^{\prime}_{1},u_{1},u^{\prime}_{1}\in C$. Then, 
\[
\langle\mathbf{v},\mathbf{u}\rangle_{N}
 = \langle (v_{1},\flip(v^{\prime}_{1})),(u_{1},\flip(u^{\prime}_{1}))\rangle_{N}
 =\langle v_{1}, u_{1}\rangle_{H}+\langle v^{\prime}_{1}, u^{\prime}_{1}\rangle_{H}
 =0
\]	
since $v_{1}, v^{\prime}_{1}, u_{1}, u^{\prime}_{1}\in C$ and $C$ is a self-dual code over the Hamming space. So $C_{N}$ is a self orthogonal NRT-code. $C_{N}$ will be a self-dual NRT-code if $\dim(C_{N})=2k$. Let $\beta :=\lbrace v_{1},\ldots v_{k}\rbrace$ be a basis of $C$. Since $\flip:\mathbb{F}_{q}^{s}\longrightarrow\mathbb{F}_{q}^{s}$ is a linear isomorphism the set $flip(\beta):=\lbrace \flip(v_{1}),\ldots ,\flip(v_{k})\rbrace$ is a basis of $flip(C)$. Define
\[\beta_{N}:=\lbrace (v_{1},0),\ldots ,(v_{k},0),(0,\flip(v_{1})),\ldots ,(0,\flip(v_{k}))\rbrace\]
where $0$ denotes the vector $(0,\ldots ,0)\in\mathbb{F}_{q}^{s}$. This is a basis for $C_N$ which has $2k$ elements, and it follows that  $C_{N}$ is a $[2s,2k]$-NRT self dual code.
\end{proof}

\begin{Exemp}\label{ex1}
Let $C$ be the $[2,1]$-self dual code given by $C:=\lbrace (0,0), (1,1)\rbrace$ over the Hamming space $\mathbb{F}_{2}^{2}$ the code $C_{N}\subseteq M_{1,4}(\mathbb{F}_{2})$ of Theorem $\ref{Const2}$ will be the following $[4,2]$-self dual NRT code
$$C_{N}=\lbrace (0,0,0,0), (0,0,1,1), (1,1,0,0), (1,1,1,1)\rbrace.$$
\end{Exemp}

\begin{Exemp}\label{HammExt}
Consider the $[8,4]$-Extended Hamming code $\hat{\mathcal{H}_{3}}$. It is well knowm that $\hat{\mathcal{H}_{3}}$ is a self-dual code over the Hamming space $\mathbb{F}_{2}^{8}$ and thus the construction of Theorem $\ref{Const2}$ applied to $\hat{\mathcal{H}_{3}}$ will give us an code $C_{N}\subseteq M_{1,16}(\mathbb{F}_{2})$, which is an $[16,8]$- self dual NRT code.
\end{Exemp}

\subsection{Constructions of NRT self-dual codes via generator matrices}
In this subsection we will present some constructions of self-dual NRT-codes
starting from other self-dual NRT-codes. These constructions are inspired by the those introduced by Marka et al. in \cite{VMRS}, where some constructions of self-dual NRT-codes for $n=1$ are given. In order to describe NRT codes by generator matrices we will order lexicographically the entries of an element 
$\mathbf{v}   \in M_{n,s}(\mathbb{F}_{q})$, identifying 
the matrix 
$\mathbf{v} = \vrow \in M_{n,s}(\mathbb{F}_{q})$ with a row vector $(v_1 \mid v_2 \mid \ldots \mid v_n) \in M_{1,ns}(\mathbb{F}_{q}) $.

\begin{defn} A generator matrix for an $[ns,k]$-linear code $C\subseteq M_{n,s}(\mathbb{F}_{q})$ in the $NRT$-space is a matrix $G\in M_{k,ns}(\mathbb{F}_{q})$ whose rows form a basis of $C$. A generator matrix $G\in M_{k,ns}(\mathbb{F}_{q})$ of an $[ns,k]$- linear NRT-code $C\subseteq M_{n,s}(\mathbb{F}_{q})$ can be written as
	\[G=\left[\begin{array}{c|c|c|c|c}
	G_{1}&G_{2}&\cdots&G_{n-1}&G_{n}
	\end{array}\right]\]
where each $G_{i}$ is an $k\times s$ matrix for $i=1,\ldots,n$.
\end{defn}

The main point in the constructions given in \cite{VMRS} is the definition of a flip of a matrix $A\in M_{n,s}(\mathbb{F}_{q})$, which is given by

\begin{defn}
Let $A=(a_{i,j})\in M_{n,s}(\mathbb{F}_{q})$. Then, the flip of $A$ denoted by $\Flip(A)$ is defined by
	\[ \Flip(A)=(a_{ik}),\] 
where $k=s-j+1$ for $1\leqslant i\leqslant n$ and $1\leqslant j\leqslant s$. We denote the transpose of $\Flip(A)$ as $A^{o}$.
\end{defn}

\begin{Exemp}
Let $A\in M_{n,s}(\mathbb{F}_{q})$ given by
\[A=\left[\begin{array}{ccccc}
a_{1,1}&a_{1,2}&\cdots&a_{1,s-1}&a_{1,s}\\
a_{2,1}&a_{2,2}&\cdots&a_{2,s-1}&a_{2,s}\\
\vdots&\vdots&\ddots&\vdots&\vdots\\
a_{n-1,1}&a_{n-1,2}&\cdots&a_{n-1,s-1}&a_{n-1,s}\\
a_{n,1}&a_{n,2}&\cdots&a_{n,s-1}&a_{n,s}
\end{array} \right].\]
Then, $\Flip(A)$ and $A^{o}$ are given respectively by
\[	
Flip(A) = \left[\begin{array}{ccccc}
a_{1,s}&a_{1,s-1}&\cdots&a_{1,2}&a_{1,1}\\
a_{2,s}&a_{2,s-1}&\cdots&a_{2,2}&a_{2,1}\\
\vdots&\vdots&\ddots&\vdots&\vdots\\
a_{n-1,s}&a_{n-1,s-1}&\cdots&a_{n-1,2}&a_{n-1,1}\\
a_{n,s}&a_{n,s-1}&\cdots&a_{n,2}&a_{n,1}
\end{array} \right],\ 
A^{o} =\left[\begin{array}{ccccc}
a_{1,s}&a_{2,s}&\cdots&a_{n-1,s}&a_{n,s}\\
a_{1,s-1}&a_{2,s-1}&\cdots&a_{n-1,s-1}&a_{n,s-1}\\
\vdots&\vdots&\ddots&\vdots&\vdots\\
a_{1,2}&a_{2,2}&\cdots&a_{n-1,2}&a_{n,2}\\
a_{1,1}&a_{2,1}&\cdots&a_{n-1,1}&a_{n,1}
\end{array} \right].
\]	
\end{Exemp}
Note that by the definition of the NRT-metric, in the case $n=1$, a code $C\subseteq M_{1,s}(\mathbb{F}_{q})$ is an $[ns,k]$-self orthogonal NRT-code if and only if $GG^{o}=0$ where $G$ is any generator matrix of the code $C$.

In order to define self-dual NRT-codes by generator matrices, we introduce the ordered flip of a matrix $A\in M_{k,ns}(\mathbb{F}_{q})$.
\begin{defn}
	Let $A=\left[\begin{array}{c|c|c|c|c}
	A_{1}&A_{2}&\cdots&A_{n-1}&A_{n}
	\end{array}\right]$  be an $k\times ns$ matrix. The ordered flip of $A$ is the matrix
	$\Oflip(A):=\left[\begin{array}{c|c|c|c|c}
	\Flip(A_{1})&\Flip(A_{2})&\cdots&\Flip(A_{n-1})&\Flip(A_{n})
	\end{array}\right]$. We denote the transpose of $\Oflip(A)$ by $A^{od}$. Note that
	\[ A^{od}=[\Oflip(A)]^{T}=\left[\begin{array}{c}
	A_{1}^{o}\\
	A_{2}^{o}\\
	\vdots\\
	A_{n-1}^{o}\\
	A_{n}^{o}
	\end{array}\right].\]
\end{defn}

The definition of ordered flip and NRT-metric implies the following remark.
\begin{ob}\label{oflis}
Let $C\subseteq M_{n,s}(\mathbb{F}_{q})$ be an $[ns,k]$-linear NRT-code. Then, $C$ is a self orthogonal NRT-code if and only if $GG^{od}=0$ where $G$ is a generator matrix of $C$.
\end{ob}

\begin{teo}\label{ConstrNRTFl1}
Let $C_{i}$ be an $\left[ ns,k_{i}\right]$-self orthogonal NRT-code for $i=1,2$ and $G^{(i)}=[G^{(i)}_{1}|\ldots|G_{n}^{(i)}]$ be a generator matrix of $C_{i}$. Then, the matrix $G\in M_{k_{1}+k_{2},2ns}(\mathbb{F}_{q})$ defined by
	\[G=\left[\begin{array}{c|c|c|c|c}
	G_{1}^{(1)}&0&\cdots&G_{n}^{(1)}&0\\
	0&G_{1}^{(2)}&\cdots&0&G_{n}^{(2)}
	\end{array}\right]\]
is a generator matrix of an $[2ns, k_{1}+k_{2}]$-self orthogonal NRT-code $C^{N}$.
\end{teo}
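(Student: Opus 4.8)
The plan is to verify directly that the matrix $G$ displayed in the statement generates a self-orthogonal NRT-code, using the criterion of Remark \ref{oflis}: a linear NRT-code with generator matrix $G$ is self-orthogonal if and only if $GG^{od}=0$. So the task reduces to computing the ordered flip $G^{od}$ of the block matrix $G$ and checking that $GG^{od}$ is the zero matrix. First I would record that, since $G$ is partitioned into $2n$ blocks of size $s$ (grouping the columns as $[\,G_1^{(1)} \mid 0 \mid \cdots \mid G_n^{(1)} \mid 0\,]$ in the top row and the complementary pattern in the bottom row), the ordered flip acts blockwise: $\Oflip$ replaces each $s$-column block $B$ by $\Flip(B)$, and $\Flip(0)=0$. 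Hence
\[
G^{od} = \left[\begin{array}{cc}
\Flip(G_1^{(1)})^{T\!*} & 0 \\[2pt]
\vdots & \vdots
\end{array}\right]
\]
— more precisely, transposing $\Oflip(G)$ stacks the transposed flipped blocks vertically, so $G^{od}$ is the $2ns \times (k_1+k_2)$ matrix whose rows (read in groups of $s$) are, alternately, $\big[(G_i^{(1)})^{o} \mid 0\big]$ and $\big[0 \mid (G_i^{(2)})^{o}\big]$, matching the block structure exhibited after the definition of $\Oflip$.

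Next I would carry out the block multiplication $GG^{od}$. Because the top row of $G$ has its nonzero $s$-blocks in the odd positions $1,3,\ldots,2n-1$ (those carrying $G_1^{(1)},\ldots,G_n^{(1)}$) and the bottom row has its nonzero blocks in the even positions (carrying $G_1^{(2)},\ldots,G_n^{(2)}$), the supports of the two rows of blocks are disjoint. Consequently the $(1,2)$ and $(2,1)$ blocks of the product $GG^{od}$ vanish automatically (a nonzero block of the first is always multiplied by a zero block of the second), and the product is block-diagonal:
\[
GG^{od} = \left[\begin{array}{cc}
\displaystyle\sum_{i=1}^{n} G_i^{(1)} (G_i^{(1)})^{o} & 0 \\[8pt]
0 & \displaystyle\sum_{i=1}^{n} G_i^{(2)} (G_i^{(2)})^{o}
\end{array}\right]
= \left[\begin{array}{cc}
G^{(1)} (G^{(1)})^{od} & 0 \\[4pt]
0 & G^{(2)} (G^{(2)})^{od}
\end{array}\right].
\]
Now the hypothesis enters: each $C_i$ is self-orthogonal with generator matrix $G^{(i)}$, so by Remark \ref{oflis} we have $G^{(i)}(G^{(i)})^{od}=0$ for $i=1,2$. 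Therefore both diagonal blocks are zero and $GG^{od}=0$, which by Remark \ref{oflis} again shows that $C^{N}$, the row space of $G$, is a self-orthogonal NRT-code in $M_{2n,s}(\mathbb{F}_q)$, i.e. an $[2ns,\,?]$-code. Finally I would check the dimension: the rows of $G$ are visibly linearly independent (the top $k_1$ rows have their support confined to the odd column-blocks, the bottom $k_2$ rows to the even column-blocks, and within each group independence is inherited from $G^{(1)}$, resp. $G^{(2)}$), so $\dim C^{N}=k_1+k_2$ and $C^N$ is a $[2ns,k_1+k_2]$-self orthogonal NRT-code.

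The only genuinely delicate point is bookkeeping: one must be careful that "ordered flip" flips \emph{within} each $s$-block rather than across the whole row, so that a zero $s$-block stays zero and the block-multiplication identity $\sum_i G_i^{(i)}(G_i^{(i)})^{o} = G^{(i)}(G^{(i)})^{od}$ is literally the defining relation of $(\,\cdot\,)^{od}$ used in Remark \ref{oflis}. Once the block structure is set up correctly, everything else is a one-line consequence of disjoint supports plus the hypothesis on the $C_i$; there is no real obstacle, only the risk of an indexing slip in identifying the $2n$ column-blocks of $G$ with the $n$ blocks of $G^{(1)}$ interleaved with the $n$ blocks of $G^{(2)}$.
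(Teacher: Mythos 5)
Your proposal is correct and follows essentially the same route as the paper: compute $\Oflip(G)$ and $G^{od}$ blockwise, observe that $GG^{od}$ reduces to the products $G^{(i)}(G^{(i)})^{od}=0$ by Remark \ref{oflis}, and conclude self-orthogonality plus the dimension count from the linear independence of the rows. If anything, your write-up is slightly more precise than the paper's, since you make explicit the block-diagonal form of $GG^{od}$ (the paper writes it loosely as a single sum) and you justify why the rows of $G$ are independent.
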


\begin{proof}
We will prove that $GG^{od}=0$. Indeed, by definition of ordered flip we have 
\[ OFlip(G)=\left[\begin{array}{c|c|c|c|c}
Flip(G_{1}^{(1)})&0&\cdots&Flip(G_{n}^{(1)})&0\\
0&Flip(G_{1}^{(2)})&\cdots&0&Flip(G_{n}^{(2)})
\end{array}\right],
G^{od}=\left[\begin{array}{cc}
(G_{1}^{(1)})^{o}&0\\
0&(G_{1}^{(2)})^{o}\\
(G_{2}^{(1)})^{o}&0\\
\vdots&\vdots\\
(G_{n}^{(1)})^{o}&0\\
0&(G^{(2)}_{n})^{o}
\end{array}\right]\]
and therefore 
\begin{eqnarray*}
	GG^{od}&=&G_{1}^{(1)}(G_{1}^{(1)})^{o}+\ldots +G_{n}^{(1)}(G_{n}^{(1)})^{o}+[G_{1}^{(2)}(G_{1}^{(2)})^{o}+\ldots +G_{n}^{(2)}(G_{n}^{(2)})^{o}]\\
	&=& G_1G_1^{od} + G_2G_2^{od} = 0,
\end{eqnarray*}
given that $C_{1}$ and $C_{2}$ are self-dual NRT-codes. 
 Now it follows from Remark \ref{oflis} that $C^{N}$ is a self orthogonal NRT-code. It is easy to see that the rows of $G$ form a basis of $C^{N}$, so $\dim C^{N}=k_{1}+k_{2}$ and hence $C^{N}$ is an $[2ns,k_{1}+k_{2}]$-self orthogonal NRT-code.
\end{proof}

\begin{cor}
Let $C_{i}$ be an $\left[ ns, \frac{ns}{2}\right]$-self dual NRT-code for $i=1,2$ and $G^{(i)}=[G^{(i)}_{1}|\ldots|G_{n}^{(i)}]$ be a generator matrix of $C_{i}$. Then, the matrix $G\in M_{ns,2ns}(\mathbb{F}_{q})$ defined by
\[G=\left[\begin{array}{c|c|c|c|c}
G_{1}^{(1)}&0&\cdots&G_{n}^{(1)}&0\\
0&G_{1}^{(2)}&\cdots&0&G_{n}^{(2)}
\end{array}\right]\]
is a generator matrix of an $[2ns, ns]$-self dual NRT-code $C^{N}$.
\end{cor}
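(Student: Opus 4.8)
The plan is to obtain this corollary as an immediate specialization of Theorem~\ref{ConstrNRTFl1}. A self-dual NRT-code $C_i \subseteq M_{n,s}(\mathbb{F}_q)$ satisfies $k_i = k_i^{\perp}$, so the dimension relation $k + k^{\perp} = ns$ forces $k_i = ns/2$ (in particular $ns$ is even, which is already implicit in the statement). Thus I would apply Theorem~\ref{ConstrNRTFl1} with $k_1 = k_2 = ns/2$. The theorem then produces exactly the displayed matrix $G \in M_{ns, 2ns}(\mathbb{F}_q)$ and guarantees that the code $C^{N}$ it generates is a $[2ns,\, k_1 + k_2] = [2ns,\, ns]$-self orthogonal NRT-code sitting inside the ambient space $M_{2n,s}(\mathbb{F}_q)$, whose total dimension is $2ns$.

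It then remains only to promote ``self orthogonal'' to ``self dual''. For this I would invoke the dimension relation for mutually dual NRT-codes in $M_{2n,s}(\mathbb{F}_q)$, namely $\dim(C^{N}) + \dim\big((C^{N})^{\perp}\big) = 2ns$. Since $C^{N}$ is self orthogonal we have $C^{N} \subseteq (C^{N})^{\perp}$; combined with $\dim(C^{N}) = ns$, the relation yields $\dim\big((C^{N})^{\perp}\big) = 2ns - ns = ns = \dim(C^{N})$. An inclusion of finite-dimensional vector spaces of equal dimension is an equality, hence $C^{N} = (C^{N})^{\perp}$, i.e.\ $C^{N}$ is self dual.

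I do not expect any real obstacle here; the proof is a dimension count layered on top of Theorem~\ref{ConstrNRTFl1}. The only two points requiring a moment of care are (i) correctly identifying the ambient NRT-space of $C^{N}$ as $M_{2n,s}(\mathbb{F}_q)$, so that the relevant length is $2ns$ and the dual-dimension identity reads $k + k^{\perp} = 2ns$; and (ii) the fact that the $ns \times 2ns$ matrix $G$ has full row rank, so that $\dim(C^{N}) = ns$ exactly --- but this is already part of the conclusion of Theorem~\ref{ConstrNRTFl1} (its proof shows the rows of $G$ form a basis of $C^{N}$). With these observations the corollary follows with no further computation.
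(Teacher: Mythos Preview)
Your proposal is correct and is exactly the argument the paper has in mind: the corollary is stated in the paper without proof, as an immediate specialization of Theorem~\ref{ConstrNRTFl1} to the case $k_{1}=k_{2}=ns/2$, together with the dimension identity $\dim(C^{N})+\dim((C^{N})^{\perp})=2ns$ to upgrade self-orthogonality to self-duality. Your two points of care, identifying the ambient space as $M_{2n,s}(\mathbb{F}_{q})$ and noting that Theorem~\ref{ConstrNRTFl1} already establishes $\dim(C^{N})=k_{1}+k_{2}$, are precisely what is needed and nothing more.
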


\begin{Exemp}\label{exNRT1} In Example $\ref{ex1}$, given the Hamming self-dual code $C=\lbrace (0,0), (1,1)\rbrace$, we obtain the self-dual NRT-code $C_{N}\subseteq M_{1,4}(\mathbb{F}_{2})$ which a generator matrix given by
	\[ G^{(1)}=\left[\begin{array}{cccc}
	1&1&0&0\\
	0&0&1&1
	\end{array}\right].\]
Applying the construction of Theorem \ref{ConstrNRTFl1} we obtain the $[8, 4]$-self-dual NRT-code $C^{N_{1}}\subseteq M_{2,4}(\mathbb{F}_{2})$ defined by the generator matrix 
	\[ G:=\left[\begin{array}{cccc|cccc}
	1&1&0&0&0&0&0&0\\
	0&0&1&1&0&0&0&0\\ 
	0&0&0&0&1&1&0&0\\
	0&0&0&0&0&0&1&1
	\end{array}\right].\]
\end{Exemp}
\begin{Exemp}\label{Exemp32}
Applying again the construction to the linear self dual NRT-code $C^{N_{1}}\subseteq M_{2,4}(\mathbb{F}_{2})$ of Example $\ref{exNRT1}$ we obtain the $[16,8]$-self dual NRT-code $C^{N_{2}}$ given by the generator matrix
	\[ G:=\left[\begin{array}{cccc|cccc|cccc|cccc}
	1&1&0&0&0&0&0&0&0&0&0&0&0&0&0&0\\
	0&0&1&1&0&0&0&0&0&0&0&0&0&0&0&0\\
	0&0&0&0&0&0&0&0&1&1&0&0&0&0&0&0\\
	0&0&0&0&0&0&0&0&0&0&1&1&0&0&0&0\\ 
	0&0&0&0&1&1&0&0&0&0&0&0&0&0&0&0\\
	0&0&0&0&0&0&1&1&0&0&0&0&0&0&0&0\\
	0&0&0&0&0&0&0&0&0&0&0&0&1&1&0&0\\
	0&0&0&0&0&0&0&0&0&0&0&0&0&0&1&1\\
	\end{array}\right].\]
\end{Exemp}

\begin{teo}\label{cons39}
	Let $C_{i}\in M_{n_{i}s_{i}}(\mathbb{F}_{q})$ be an $[n_{i}s_{i},k_{i}]$-self orthogonal NRT-code for $1\leqslant i\leqslant t$ such that $k=k_{1}+\ldots+k_{t}\leqslant\bar{n}\bar{s}$, where $\bar{n}:=\max\lbrace n_{i}\rbrace$ and $\bar{s}:=\max\lbrace s_{i}\rbrace$. Let also $G^{(i)}=[G_{1}^{(i)}|\ldots|G_{n_{i}}^{(i)}]$ be a generator matrix of $C_{i}$. Then the matrix $G\in M_{k,\bar{s}(n_{1}+\ldots+n_{t})}(\mathbb{F}_{q})$ defined by 
	
	\[G=\left[\begin{array}{c|c|c|c|c|c|c|c|c|c|c|c|c}
	\widetilde{G}_{1}^{(1)}&\widetilde{G}_{2}^{(1)}&\ldots&\widetilde{G}_{n_{1}}^{(1)}&0&0&\ldots&0&\ldots&0&0&\ldots&0\\
	0&0&\ldots&0&\widetilde{G}_{1}^{(2)}&\widetilde{G}_{2}^{(2)}&\ldots&\widetilde{G}_{n_{2}}^{(2)}&\ldots&0&0&\ldots&0\\
	\vdots&\vdots&\vdots&\vdots&\vdots&\vdots&\vdots&\vdots&\vdots&\vdots&\vdots&\vdots&\vdots\\
	0&0&\ldots&0&0&0&\ldots&0&\ldots&\widetilde{G}_{1}^{(t)}&\widetilde{G}_{2}^{(t)}&\ldots&\widetilde{G}_{n_{t}}^{(t)}\\
	\end{array}\right]\] 
	is a generator matrix of an $[\bar{s}(n_1+\ldots+n_t),k]$-self orthogonal NRT-code $C_{\star}$, where the matrices $\widetilde{G}_{j_i}^{(i)}\in M_{k_{i}\bar{s}}(\mathbb{F}_{q})$, $1\leqslant i\leqslant t$ and $ 1\leqslant j_i\leqslant n_i$ are given by
	$\widetilde{G}_{j_i}^{(i)}=[G_{j_{i}}^{(i)}]$ if $s_{i}=\bar{s}$ or $\widetilde{G}_{j_i}^{(i)}=[G_{j_{i}}^{(i)} \mid 0]$ if $s_{i}<\bar{s}$, where $0\in M_{k_i \bar{s}-s_i}(\mathbb{F}_{q})$ is the null matrix.
	
\end{teo}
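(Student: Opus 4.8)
The plan is to reduce the claim to Remark \ref{oflis}, exactly as was done for Theorem \ref{ConstrNRTFl1}, since a linear NRT-code is self-orthogonal if and only if any generator matrix $G$ satisfies $GG^{od}=0$. First I would set up notation carefully: write $C_\star$ for the code generated by the block-diagonal matrix $G$ displayed in the statement, note that $G\in M_{k,\bar s N}(\mathbb{F}_q)$ with $N=n_1+\ldots+n_t$, and observe that the hypothesis $k=k_1+\ldots+k_t\leqslant \bar n\bar s$ guarantees that the number of rows does not exceed the ambient dimension, so that a self-orthogonal code of this dimension can exist. The key structural observation is that $G$ is block-diagonal with diagonal blocks $G^{(i)}_\star:=[\widetilde G^{(i)}_1\mid\cdots\mid\widetilde G^{(i)}_{n_i}]\in M_{k_i,\bar s n_i}(\mathbb{F}_q)$, and that the ordered flip respects this block-diagonal structure: $\Oflip(G)$ is block-diagonal with blocks $\Oflip(G^{(i)}_\star)$, because $\Oflip$ acts independently on each width-$\bar s$ column block and sends a zero block to a zero block.

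Next I would compute $G G^{od}$ block by block. Since $G$ and $G^{od}=[\Oflip(G)]^T$ are both block-diagonal with compatible block sizes, the product $GG^{od}$ is block-diagonal with $(i,i)$-block equal to $G^{(i)}_\star \, (G^{(i)}_\star)^{od}$ and all off-diagonal blocks zero. So it suffices to show each $G^{(i)}_\star (G^{(i)}_\star)^{od}=0$. Here I would use the padding: $\widetilde G^{(i)}_{j_i}=[G^{(i)}_{j_i}]$ when $s_i=\bar s$, and $\widetilde G^{(i)}_{j_i}=[G^{(i)}_{j_i}\mid 0]$ otherwise; in either case one checks that $\Flip(\widetilde G^{(i)}_{j_i})=[0\mid \Flip(G^{(i)}_{j_i})]$ (the padding zeros move to the front under $\Flip$, possibly an empty block), so that $\widetilde G^{(i)}_{j_i}\,(\widetilde G^{(i)}_{j_i})^o = G^{(i)}_{j_i}\,(G^{(i)}_{j_i})^o$: the extra zero columns contribute nothing to this product. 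Summing over $j_i=1,\ldots,n_i$ gives $G^{(i)}_\star (G^{(i)}_\star)^{od}=\sum_{j_i=1}^{n_i}G^{(i)}_{j_i}(G^{(i)}_{j_i})^o = G^{(i)}(G^{(i)})^{od}$, which vanishes because $C_i$ is self-orthogonal (Remark \ref{oflis} applied to $C_i$). Hence $GG^{od}=0$ and $C_\star$ is self-orthogonal by Remark \ref{oflis}.

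Finally I would verify the dimension count: the rows of $G$ are the rows of the $G^{(i)}_\star$ placed in disjoint column blocks, so a linear combination of rows of $G$ is zero only if the corresponding combination of rows within each block is zero; since each $G^{(i)}$ (hence each $G^{(i)}_\star$, the padding being injective on rows) has independent rows, all coefficients vanish, so $\operatorname{rank}(G)=k_1+\ldots+k_t=k$ and $\dim C_\star=k$. Together with $k\leqslant\bar n\bar s$ this shows $C_\star$ is an $[\bar s(n_1+\ldots+n_t),k]$-self-orthogonal NRT-code. I expect the only mildly delicate point to be bookkeeping the block sizes and the behavior of $\Flip$ under zero-padding — in particular checking that $\Flip$ sends padded blocks $[G^{(i)}_{j_i}\mid 0]$ to $[0\mid\Flip(G^{(i)}_{j_i})]$ so that transposition lines the blocks up correctly in $(G^{(i)}_\star)^{od}$; once that is pinned down, the vanishing of $GG^{od}$ is a direct reduction to the self-orthogonality of each $C_i$.
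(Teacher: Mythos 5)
Your overall strategy -- reduce everything to the criterion $GG^{od}=0$ of Remark \ref{oflis}, use the block--diagonal shape of $G$ to kill the cross terms, and count dimensions -- is the natural extension of the paper's proof of Theorem \ref{ConstrNRTFl1}; note that the paper states the present theorem without any proof at all. However, the central identity you invoke, $\widetilde G^{(i)}_{j_i}(\widetilde G^{(i)}_{j_i})^{o}=G^{(i)}_{j_i}(G^{(i)}_{j_i})^{o}$ (``the extra zero columns contribute nothing''), is false whenever $s_i<\bar s$. You correctly note that $\Flip([A\mid 0])=[0\mid \Flip(A)]$, but then $[A\mid 0]$ has column partition $(s_i,\,\bar s-s_i)$ while $([B\mid 0])^{o}$ has row partition $(\bar s-s_i,\,s_i)$, so the blocks do not line up and the product cannot be read off blockwise. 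Writing $d=\bar s-s_i$ and $a_j,b_j$ for the columns of $A,B$, one gets $\widetilde A\,\widetilde B^{\,o}=\sum_{j=d+1}^{s_i}a_j b_{\bar s+1-j}^{T}$, i.e.\ the right padding shifts the anti-diagonal NRT pairing by $d$ positions, whereas $AB^{o}=\sum_{j=1}^{s_i}a_j b_{s_i+1-j}^{T}$. A concrete instance over $\mathbb{F}_2$ with $s_i=3$, $\bar s=4$: for $A=\left[\begin{array}{ccc}0&1&1\\0&0&1\end{array}\right]$ one has $AA^{o}=\left[\begin{array}{cc}1&0\\0&0\end{array}\right]$ but $\widetilde A\widetilde A^{\,o}=\left[\begin{array}{cc}0&1\\1&0\end{array}\right]$.

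This is not a repairable bookkeeping slip in your argument: take $C_1=\langle(1,0,0,0)\rangle\subseteq M_{1,4}(\mathbb{F}_2)$ and the self-orthogonal code $C_2\subseteq M_{2,3}(\mathbb{F}_2)$ generated by the codewords with rows $(0,1,1),(0,1,0)$ and $(0,0,1),(0,0,0)$ (one checks all NRT products vanish in $M_{2,3}$). After right-padding to width $\bar s=4$, the first rows become $(0,1,1,0)$ and $(0,0,1,0)$, whose NRT product in length $4$ equals $1$, so the two corresponding rows of the displayed $G$ are not orthogonal and $GG^{od}\neq 0$. Thus the reduction to the self-orthogonality of each $C_i$ breaks down exactly at the padding step; it is valid when no padding occurs (all $s_i=\bar s$, which is the situation of the Corollary that follows, where your computation coincides with the paper's proof of Theorem \ref{ConstrNRTFl1}), and also when $\bar s\geq 2s_i$, where each padded block pairs to zero trivially, but not in the range $s_i<\bar s<2s_i$. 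The remaining parts of your proposal (vanishing of the cross-blocks because the supports lie in disjoint row groups, and the rank count giving $\dim C_\star=k$) are correct; the gap is precisely the claim that appending zero columns on the right is compatible with the ordered flip.
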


\begin{cor}\label{cons39}
Let $C_{i}$ be an $[ns,k_{i}]$-self orthogonal NRT-code for $1\leqslant i\leqslant t$ such that $k=k_{1}+\ldots+k_{t}\leqslant ns$ and $G^{(i)}=[G_{1}^{(i)}|\ldots|G_{n}^{(i)}]$ be a generator matrix of $C_{i}$. Then the matrix $G\in M_{k,tns}(\mathbb{F}_{q})$ defined by 

\[G=\left[\begin{array}{c|c|c|c|c|c|c|c|c|c|c|c|c}
G_{1}^{(1)}&G_{2}^{(1)}&\ldots&G_{n}^{(1)}&0&0&\ldots&0&\ldots&0&0&\ldots&0\\
0&0&\ldots&0&G_{1}^{(2)}&G_{2}^{(2)}&\ldots&G_{n}^{(2)}&\ldots&0&0&\ldots&0\\
\vdots&\vdots&\vdots&\vdots&\vdots&\vdots&\vdots&\vdots&\vdots&\vdots&\vdots&\vdots&\vdots\\
0&0&\ldots&0&0&0&\ldots&0&\ldots&G_{1}^{(t)}&G_{2}^{(t)}&\ldots&G_{n}^{(t)}\\
\end{array}\right]\] 
is a generator matrix of a $[tns,k]$-self orthogonal NRT-code $C_{\star}$.
\end{cor}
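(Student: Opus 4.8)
The plan is to read this corollary off from the preceding theorem by specializing to the case in which all $t$ codes live in the same ambient NRT-space, i.e.\ $n_1=\cdots=n_t=n$ and $s_1=\cdots=s_t=s$. Under this hypothesis $\bar n=\max\{n_i\}=n$ and $\bar s=\max\{s_i\}=s$, so $s_i=\bar s$ holds for every $i$; consequently each block $\widetilde G_{j_i}^{(i)}$ appearing in the theorem is simply $[G_{j_i}^{(i)}]$, with no zero columns adjoined, and the matrix $G$ constructed there coincides verbatim with the matrix $G$ in the statement of this corollary. The ambient length produced by the theorem is $\bar s(n_1+\cdots+n_t)=s\cdot tn=tns$, and the standing hypothesis $k=k_1+\cdots+k_t\leqslant\bar n\bar s=ns$ of the theorem is exactly the one assumed here. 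Hence the theorem immediately yields that the row space $C_\star$ of $G$ is a $[tns,k]$-self orthogonal NRT-code.

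If instead one prefers a self-contained argument, I would verify the defining condition of Remark \ref{oflis} directly, namely $GG^{od}=0$. Grouping the columns of $G$ into the $t$ consecutive coordinate ranges corresponding to the $t$ copies of $M_{n,s}(\mathbb{F}_q)$, one computes $\Oflip(G)$ by flipping each $s$-column block; this replaces the $i$-th diagonal block $[G_1^{(i)}\mid\cdots\mid G_n^{(i)}]$ by $[\Flip(G_1^{(i)})\mid\cdots\mid\Flip(G_n^{(i)})]$ and leaves every off-diagonal block zero. Transposing gives $G^{od}$, and in the product $GG^{od}$ the $(i,j)$-block vanishes for $i\neq j$ because block-row $i$ of $G$ and block-row $j$ of $\Oflip(G)$ have nonzero entries in disjoint coordinate ranges, while the $(i,i)$-block equals $\sum_{\ell=1}^{n}G_\ell^{(i)}(G_\ell^{(i)})^o=G^{(i)}(G^{(i)})^{od}$, which is $0$ since $C_i$ is self orthogonal (Remark \ref{oflis} again). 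Therefore $GG^{od}=0$, so $C_\star$ is self orthogonal. To finish I would check that $\dim C_\star=k$: a vanishing linear combination of the rows of $G$, restricted to the $i$-th coordinate range, becomes a vanishing linear combination of the rows of the generator matrix $G^{(i)}$, whose rows are independent, so all coefficients coming from block-row $i$ are zero for each $i$; thus $G$ has $k$ independent rows and $C_\star$ has length $tns$.

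The only point that needs any care --- and it is bookkeeping rather than a genuine obstacle --- is keeping track of the block structure so that the cross terms in $GG^{od}$ really do cancel, and noting that the equality $s_i=\bar s$ is precisely what makes the padding present in the general theorem unnecessary in this special case.
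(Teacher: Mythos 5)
Your specialization argument is exactly the paper's (implicit) justification: the corollary is obtained from Theorem \ref{cons39} by taking $n_1=\cdots=n_t=n$ and $s_1=\cdots=s_t=s$, so that $\bar n\bar s=ns$, no zero-padding occurs, and the length is $tns$. Your additional direct verification of $GG^{od}=0$ and of the rank of $G$ is also correct and simply mirrors the proof given for Theorem \ref{ConstrNRTFl1}, so the proposal matches the paper's approach.
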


In particular, we can apply the preceding Corollary to construct a self-dual NRT-code $C_{\star}$ equivalent to the code $C^{N}$ obtined by the constructions in Theorem $\ref{ConstrNRTFl1}$ is the folowing
\begin{cor}\label{TeoConstEque}
Let $C_{1}$, $C_{2}$ be two $\left[ns,\frac{ns}{2}\right]$-self dual NRT-codes and let $G^{(1)}=[G_{1}^{(1)}|\ldots|G_{n}^{(1)}]$ be a generator matrix of $C_{1}$, and  $G^{(2)}=[G_{1}^{(2)}|\ldots|G_{n}^{(2)}]$ be a generator matrix of $C_{2}$. Then, the matrix $G\in M_{ns,2ns}(\mathbb{F}_{q})$ defined by
\[G=\left[\begin{array}{c|c|c|c|c|c}
G_{1}^{(1)}&\ldots&G_{n}^{(1)}&0&\ldots&0\\
0&\ldots&0&G_{1}^{(2)}&\ldots&G_{n}^{(2)}
\end{array}\right]\] 
is a generator matrix for an $[2ns, ns]$-self dual NRT-code $C_{\star}$.
\end{cor}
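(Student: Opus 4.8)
The plan is to read this statement off as the case $t=2$ of the preceding corollary, followed by a short dimension argument. First I would observe that a self dual NRT-code is in particular self orthogonal, so $C_1$ and $C_2$ are $\left[ns,\frac{ns}{2}\right]$-self orthogonal NRT-codes; taking $t=2$ and $k_1=k_2=\frac{ns}{2}$, the hypothesis $k:=k_1+k_2=ns\leqslant ns$ of the preceding corollary is satisfied (with equality). Applying that corollary to $C_1,C_2$ and the given generator matrices $G^{(1)},G^{(2)}$ then yields directly that the displayed matrix $G$ generates a $[2ns,ns]$-self orthogonal NRT-code $C_\star\subseteq M_{2n,s}(\mathbb{F}_q)$.

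Next I would record that $\dim C_\star=ns$. The first $\frac{ns}{2}$ rows of $G$ are supported on the first $n$ blocks of $s$ columns and the last $\frac{ns}{2}$ rows on the remaining $n$ blocks, so a vanishing $\mathbb{F}_q$-linear combination of the rows of $G$ separates into a vanishing combination of the rows of $G^{(1)}$ and one of the rows of $G^{(2)}$. Since those rows are bases of $C_1$ and $C_2$ respectively, all coefficients vanish, the rows of $G$ are linearly independent, and hence $\dim C_\star=k_1+k_2=ns$ (this is of course already asserted by the preceding corollary).

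Finally I would upgrade self orthogonality to self duality by a counting argument: from $C_\star\subseteq C_\star^{\perp}$ we get $\dim C_\star\leqslant\dim C_\star^{\perp}$, and combining this with the NRT dual-dimension identity $\dim C_\star+\dim C_\star^{\perp}=2ns$ (the analogue of $(2.1)$ for the ambient space $M_{2n,s}(\mathbb{F}_q)$) forces $\dim C_\star^{\perp}=ns=\dim C_\star$; together with the inclusion $C_\star\subseteq C_\star^{\perp}$ this gives $C_\star=C_\star^{\perp}$. Thus $C_\star$ is a $[2ns,ns]$-self dual NRT-code.

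I do not expect a genuine obstacle, since this is a corollary; the only points that need a little care are that the bound $k\leqslant ns$ of the preceding corollary is invoked at equality, and that the implication ``self orthogonal and of dimension equal to half the ambient dimension $\Rightarrow$ self dual'' rests on the dual-dimension identity. If a self-contained argument were preferred, one could bypass the preceding corollary and verify $GG^{od}=0$ directly from Remark \ref{oflis}: the block structure of $G$ makes $G^{od}$ block diagonal, so $GG^{od}$ collapses to the diagonal blocks $G^{(1)}(G^{(1)})^{od}$ and $G^{(2)}(G^{(2)})^{od}$, both of which vanish because $C_1$ and $C_2$ are self dual.
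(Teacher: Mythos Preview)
Your proposal is correct and matches the paper's approach: the paper presents this result as an immediate consequence of the preceding corollary (the case $t=2$ with $k_1=k_2=\frac{ns}{2}$), and your added dimension count using $\dim C_\star+\dim C_\star^{\perp}=2ns$ is exactly the step needed to pass from self-orthogonal to self-dual, which the paper leaves implicit.
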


\begin{Exemp} Let $C^{N_{1}}$ be the $[8, 4]$-Self dual NRT-code of Example $\ref{exNRT1}$. Then, by the construction of Theorem $\ref{TeoConstEque}$ the following matrix 
\[ G:=\left[\begin{array}{cccc|cccc|cccc|cccc}
	1&1&0&0&0&0&0&0&0&0&0&0&0&0&0&0\\
	0&0&1&1&0&0&0&0&0&0&0&0&0&0&0&0\\
	0&0&0&0&1&1&0&0&0&0&0&0&0&0&0&0\\
	0&0&0&0&0&0&1&1&0&0&0&0&0&0&0&0\\ 
	0&0&0&0&0&0&0&0&1&1&0&0&0&0&0&0\\
	0&0&0&0&0&0&0&0&0&0&1&1&0&0&0&0\\
	0&0&0&0&0&0&0&0&0&0&0&0&1&1&0&0\\
	0&0&0&0&0&0&0&0&0&0&0&0&0&0&1&1\\
\end{array}\right]\]
is a generator matrix for a $[16,8]$-self dual NRT-code $C_{\star}$ that is equivalent to the code $C^{N_{2}}$ of Example $\ref{Exemp32}$.
\end{Exemp}

\subsection{An application of the ordered flip concept}
In \cite{StandardMarcelo}, Alves gave an analogue for NRT-codes of the well-known standard form of generator matrices for codes in spaces with the Hamming space. In particular, for bidimensional codes it was shown the following result.
\begin{teo}\label{bid}
Let $C\subseteq M_{n,s}(\mathbb{F}_{q})$ be an $[ns,2]$-linear NRT-code. Then, $C$ has a generator matrix of the form
\[G=\left[G_{1}|G_{2}|\ldots|G_{n}\right]\]
where each $G_{i}$ is an $2\times s$ matrix of one of the following types: Null matrix, $\left[\begin{array}{c}
		e_{i}\\
		0
	\end{array}\right]$, $\left[\begin{array}{c}
	0\\
	e_{i}
	\end{array}\right]$, $\left[\begin{array}{c}
	e_{i}\\
	e_{j}
	\end{array}\right]$, $\left[\begin{array}{c}
	e_{i}+\lambda e_{j}\\
	e_{j}
	\end{array}\right]$ where $e_{i}\in\mathbb{F}_{q}^{s}$ denotes the i-th canonical vector, $1\leqslant i\leqslant j$ and $\lambda\neq 0$.
\end{teo}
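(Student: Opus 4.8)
The plan is to start from an arbitrary generator matrix of $C$ and reduce it block-by-block using row operations together with the isometries of the NRT-space, mirroring the proof of the Hamming-space standard form but keeping track of the ordered structure of the rows inside each $s$-block. First I would use that $C$ has dimension $2$ and pick any generator matrix $G=[G_1\mid G_2\mid\cdots\mid G_n]$, where each $G_i\in M_{2,s}(\mathbb{F}_q)$. For each block $G_i$ look at the two rows $r_1(G_i),r_2(G_i)\in\mathbb{F}_q^s$ and record the NRT-weight (the position of the last nonzero entry) of each. The key observation is that within one block we are allowed to act by an element of $T_s$ on the columns (an invertible upper-triangular $s\times s$ matrix), and this lets us normalize any nonzero row $v$ with $\rho(v)=j$ to the canonical vector $e_j$; simultaneously the same $T_s$-element must be applied to the other row, but since $T_s$ is transitive on vectors of a fixed NRT-weight, and since we can still perform row operations on $G$ globally, there is enough freedom to bring each block into one of the listed normal forms.

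The main step is a careful case analysis on the pair of NRT-weights $(\rho(r_1(G_i)),\rho(r_2(G_i)))$ within a block, after having used global row operations to put $G$ in a ``staircase'' arrangement. First I would order the blocks and rows so that the first row ``leads'' in the earliest block where it is nonzero; then in that leading block the first row has some weight $i$ and, after applying a $T_s$-element, becomes $e_i$. If the second row is zero in this block we get the type $\left[\begin{array}{c}e_i\\0\end{array}\right]$; if it is nonzero with weight $j$, there are two subcases: either $j\neq i$, in which case a further triangular change of coordinates fixing $e_i$ sends the second row to $e_j$ (giving $\left[\begin{array}{c}e_i\\e_j\end{array}\right]$), or $j=i$ is impossible after we have normalized because then a row operation would reduce the second row's weight, pushing its leading position to some $j<i$, and we get $\left[\begin{array}{c}e_i+\lambda e_j\\e_j\end{array}\right]$ with $\lambda\neq 0$ (the residual interaction that cannot be cleared because $T_s$ is only upper-triangular, not all of $GL(s,\mathbb{F}_q)$). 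Blocks in which the ``active'' row is zero give either the null block or a block of the form $\left[\begin{array}{c}0\\e_i\end{array}\right]$ once the second row takes over as the leading one.

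The delicate point — and the one I expect to be the main obstacle — is bookkeeping the fact that the column operations are restricted to the group $T_s$ of upper-triangular matrices (these are precisely the linear NRT-isometries acting on a single coordinate, as recalled in Section 2.1), rather than to all of $GL(s,\mathbb{F}_q)$. This is exactly why the type $\left[\begin{array}{c}e_i+\lambda e_j\\e_j\end{array}\right]$ with $i\leqslant j$ and $\lambda\neq 0$ cannot be simplified further: a triangular transformation can clear entries below a pivot's position in the NRT-order but not above it, so after normalizing the lower-weight row $e_j$, a term $\lambda e_j$ with $j\geqslant i$ survives in the higher row. I would therefore state and prove a small block-normalization lemma: given a $2\times s$ matrix over $\mathbb{F}_q$ of rank $\leqslant 2$, using row operations (allowed only because we may recombine with contributions coming from the later blocks, but here really only intra-block operations plus one scalar) together with a right-multiplication by an element of $T_s$, one can bring it to exactly one of the five listed forms. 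Combining this lemma with a global staircase normalization of $G$ (choosing, greedily from left to right, which row leads which block) completes the proof. The only remaining care is to verify that once a block is normalized we do not destroy earlier normalized blocks; this follows because the global row operations used after fixing block $i$ only involve rows that are already zero on blocks $1,\dots,i$, exactly as in the classical standard-form argument.
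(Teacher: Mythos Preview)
The paper does not prove this theorem; it is quoted from Alves \cite{StandardMarcelo} and then applied to the classification of bidimensional self-dual codes. There is therefore no proof in the paper to compare with, and the following remarks concern your sketch on its own merits.

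Your toolkit --- global row operations on $G$ combined with a block-wise $T_s$-action --- is the right one, and you correctly identify the upper-triangular constraint on $T_s$ as the source of the residual $\lambda e_j$ term. The gap is in the bookkeeping. Your claim that ``the global row operations used after fixing block $i$ only involve rows that are already zero on blocks $1,\dots,i$'' imports the classical echelon mechanism, but it does not apply here: with only two rows neither can be zero, and in a given block the two row-restrictions need not be proportional, so you cannot make row~$2$ vanish there. Every row operation touches every block, so the ``block-normalisation lemma'' that freely uses row operations on a single $2\times s$ block is unavailable. A related symptom is that your derivation of the fifth type is inverted: with the stated condition $i\le j$, both rows of $\left[\begin{smallmatrix}e_i+\lambda e_j\\ e_j\end{smallmatrix}\right]$ have NRT-weight $j$, so this form records precisely the blocks where the two weights coincide and \emph{cannot} be separated by a row operation without disturbing the other blocks --- the opposite of what you describe (you arrive at it after a row operation, with $j<i$). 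What a correct argument needs is a \emph{single} element of $GL(2,\mathbb{F}_q)$, applied once to all of $G$, after which every block is $T_s$-equivalent to one of the five listed forms; proving that one row operation works uniformly across all $n$ blocks is the substantive step, and it is missing from your sketch.
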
 
The definition of ordered flip and Theorem \ref{bid} can be used to find all possible generator matrices of a bidimensional self-dual NRT-code. In fact, if $C\subseteq M_{n,s}(\mathbb{F}_{q})$ is an $[ns,k]$-self dual bidimensional NRT-code then $2=dim(C)=\frac{ns}{2}$ and so $ns=4$, which implies that one of the following cases holds.
\begin{itemize}
	\item[i)] $n=1$, $s=4$ and an $[4,2]$-Self dual NRT-code $C\subseteq M_{1,4}(\mathbb{F}_{q}$) has as a generator matrix one of the matrices
	\[\left[\begin{array}{cccc}
	1&0&0&0\\
	0&1&0&0
	\end{array}\right],\left[\begin{array}{cccc}
	1&0&0&0\\
	0&0&1&0
	\end{array}\right],\left[\begin{array}{cccc}
	0&1&0&0\\
	0&0&0&1
	\end{array}\right],\left[\begin{array}{cccc}
	0&0&1&0\\
	0&0&0&1
	\end{array}\right],\] 
	\[\left[\begin{array}{cccc}
	1&0&\lambda&0\\
	0&0&1&0
	\end{array}\right],\left[\begin{array}{cccc}
	1&\lambda&0&0\\
	0&1&0&0
	\end{array}\right],\left[\begin{array}{cccc}
	0&1&0&\lambda\\
	0&0&0&1
	\end{array}\right],\left[\begin{array}{cccc}
	0&0&1&\lambda\\
	0&0&0&1
	\end{array}\right].\]
	\item[ii)] $n=2$, $s=2$ and an $[4,2]$-Self dual NRT-code $C\subseteq M_{2,2}(\mathbb{F}_{q})$ has as a generator matrix one of the matrices

\[\left[\begin{array}{cc|cc}
1&0&0&0\\
0&0&1&0
\end{array}\right], \left[\begin{array}{cc|cc}
1&0&0&0\\
0&0&0&1
\end{array}\right], \left[\begin{array}{cc|cc}
0&1&0&0\\
0&0&1&0
\end{array}\right], \left[\begin{array}{cc|cc}
0&1&0&0\\
0&0&0&1
\end{array}\right],\]
\[\left[\begin{array}{cc|cc}
0&1&1&0\\
0&0&1&0
\end{array}\right], \left[\begin{array}{cc|cc}
0&1&0&1\\
0&0&0&1
\end{array}\right], \left[\begin{array}{cc|cc}
1&0&1&0\\
0&0&1&0
\end{array}\right], \left[\begin{array}{cc|cc}
1&0&0&1\\
0&0&0&1
\end{array}\right],\]
\[\left[\begin{array}{cc|cc}
0&1&1+\lambda&0\\
0&0&1&0
\end{array}\right], \left[\begin{array}{cc|cc}
0&1&0&1+\lambda\\
0&0&0&1
\end{array}\right], \left[\begin{array}{cc|cc}
1&0&1+\lambda&0\\
0&0&1&0
\end{array}\right], \left[\begin{array}{cc|cc}
1&0&0&1+\lambda\\
0&0&0&1
\end{array}\right].\]
\item[iii)] $n=4$, $s=1$ and in this case the NRT-metric and the Hamming metric are equivalent. The classification of self-dual codes in this case is given by Pless in \cite{Vpless}.
\end{itemize}
\end{document}